\newtheorem{theorem}{Theorem}[section]
\newtheorem{lemma}{Lemma}[section]
\newtheorem{corollary}{Corollary}[section]
\newtheorem{prop}{Proposition}[section]
\newtheorem{remark}[theorem]{Remark}
\newtheorem{definition}[theorem]{Definition}
\newcommand{\labelnummer}{\mbox{\normalfont (\roman{numcount})}}%
\let\curlabelspeicher\@currentlabel%
    \let\saveitem\item%
    \def\item{\saveitem%
      \def\@currentlabel{{\upshape\curlabelspeicher}$\,$\labelnummer}}%
    \let\savelabel\label%
    \def\label##1{\savelabel{##1}%
      \@bsphack%
        \ifmmode\else%
          \protected@write\@auxout{}%
          {\string\newlabel{##1item}{{\labelnummer}{\thepage}}}%
        \fi%
      \@esphack%
    }%
\renewcommand{\appendix}{\def\thesection{\textsc{Appendix}}}
 \let\leq\le
 \let\geq\ge
 \let\Im\undefined
\DeclareMathOperator{\Im}{Im}
\DeclareMathOperator{\tr}{tr\kern1pt}
\newif\ifper\pertrue
\def\per{.}
\def\bti{\@ifnextchar[\bbti\bbbti}
\def\bbti[#1]#2{#2, #1.}
\def\bbbti#1{#1.}
\def\z{\@ifnextchar[\zz\zzz}
\def\zz[#1]#2#3#4#5{\perfalse\emph{#2} \textbf{#3}, #4 (#5) [#1]}
\def\zzz#1#2#3#4{\emph{#1} \textbf{#2}, #3 (#4)\ifper\per\fi\pertrue}
\def\pub{\@ifstar\pubstar\pubnostar}
\def\pubnostar{\@ifnextchar[\@@pubnostar\@pubnostar}
\def\@@pubnostar[#1]#2#3#4{#2, #3, #4, #1\ifper\per\fi\pertrue}
\def\@pubnostar#1#2#3{#1, #2, #3\ifper\per\fi\pertrue}
\def\pubstar[#1]#2#3#4{\perfalse #2, #3, #4 [#1]\pertrue}
\newcommand{\bel}{\begin{equation} \label}
\newcommand{\ee}{\end{equation}}
\def\beq{\begin{equation}}
\def\eeq{\end{equation}}
\newcommand{\bea}{\begin{eqnarray}}
\newcommand{\eea}{\end{eqnarray}}
\newcommand{\beas}{\begin{eqnarray*}}
\newcommand{\eeas}{\end{eqnarray*}}
\newcommand{\R}{\mathbb{R}}
\newcommand{\Z}{\mathbb{Z}}
\newcommand{\N}{\mathbb{N}}
\newcommand{\C}{\mathbb{C}}
\newcommand{\E}{\mathbb{E}}
\begin{document}

\title[Dependence of the density of states on the probability distribution - part II]{Dependence of the density of states on the probability distribution - part II: Schr\"odinger operators on $\R^d$ and non-compactly supported probability measures}

\author[P.\ D.\ Hislop]{Peter D.\ Hislop}
\address{Department of Mathematics,
    University of Kentucky,
    Lexington, Kentucky  40506-0027, USA}
\email{peter.hislop@uky.edu}

\author[C.\ A.\ Marx]{Christoph A.\ Marx}
\address{Department of Mathematics,
Oberlin College,
Oberlin, Ohio 44074, USA}
\email{cmarx@oberlin.edu}

%\thanks{Version of \today}

\begin{abstract}
We extend our results in \cite{hislop_marx_1} on the quantitative continuity properties, with respect to the single-site probability measure, of the density of states measure and the integrated density of states
for random Schr\"odinger operators. For lattice models on $\Z^d$, with $d \geq 1$, we treat the case of non-compactly supported probability measures with finite first moments. For random Schr\"odinger operators on $\R^d$, with $d \geq 1$, we prove results analogous to those in \cite{hislop_marx_1} for compactly supported probability measures. The method of proof makes use of the Combes-Thomas estimate and the Helffer-Sj\"ostrand formula.
\end{abstract}

\maketitle \thispagestyle{empty}

\tableofcontents

\vspace{.2in}

{\bf  AMS 2010 Mathematics Subject Classification:} 35J10, 81Q10,
35P20\\
{\bf  Keywords:} random Schr\"odinger operators, density of states, singular distributions \\

%%%%%%%%%%%%%%%%%%%%%%%%%%%%%%%%%%%%%%%%%%%%%%%%%%%%%%%%%%%%%%%%%%%%%%%%%%%%%%%%%%%%%%%%%%%%%%%%%%%%%%%%%%%%%%%%%%%%%%%%%%%%%%%%%%%%

\section{Introduction} \label{sec_intro}
\setcounter{equation}{0}

This is the second of a pair of papers, the first being \cite{hislop_marx_1}, in which we study the dependency of almost sure quantities such as the density of states measure and the integrated density of states of random Schr\"odinger operators on the single-site probability measure.

%
%For lattice models on $\Z^d$, we consider the random Schr\"odinger operator
%\begin{equation} \label{eq_discreteschrodop}
%H_\omega =  \Delta + \sum_{n \in \mathbb{Z}^d} \omega_n \vert \delta_n \rangle \langle \delta_n \vert ~\mbox{ on $\ell^2(\mathbb{Z}^d)$ ,}
%\end{equation}
%where $\Delta$ is the discrete Laplacian on $\mathbb{Z}^d$ and $\omega_n \in \mathbb{R}$ are independent identically distributed (iid) random variables, with values distributed according to a common Baire probability measure $\nu \in \mathcal{P}(\mathbb{R})$.

For lattice models on $\Z^d$, we consider the formal Hamiltonian on $\ell^2 (\Z^d)$ with a random potential constructed from finite-rank projections and independent, identically distributed ($iid$) random variables,
\beq  \label{eq_discreteschrodop}
H_\omega =  - \Delta + \sum_{j \in \mathcal{J}} \omega_j P_j ~\mbox{.}
\eeq
Here, $\Delta$ is the usual finite-difference Laplacian on $\Z^d$ and the elements of $\omega = (\omega_j)_{j \in \mathcal{J}}$ are distributed according to a common Baire probability measure $\nu \in \mathcal{P}(\mathbb{R})$ (``the single-site measure'') with projections $\{P_j, ~j \in \mathcal{J}\}$ forming a complete family of orthogonal projections with common rank $N \in \mathbb{N}$ indexed by a countable set $\mathcal{J}$; a precise definition of the model (\ref{eq_discreteschrodop}) is given in hypothesis [\textbf{Disc}] of section \ref{sec_mainresults}. For the usual Anderson model, the projections are rank-one, i.e. $N=1$. Models for $N > 1$ arise for instance in the study of multi-dimensional random polymers \cite{deBievre_Germinet_2000, SchulzBaldes_Jitomirskaya_Stolz_CMP_2003, DamanikSimsStolz_JFunAnal_2004}. More generally, as in part one \cite{hislop_marx_1} of this two-part series (see section 6 in \cite{hislop_marx_1}), our treatment allows us to consider more general finite-difference operators in more general settings, such as graphs. In particular, our methods apply to the Anderson model on the Bethe lattice.

%Our main object of interest is the {\em{density of states measure}} (DOSm) associated with (\ref{eq_discreteschrodop}) defined by
%\begin{equation} \label{eq_DOSm_discr}
%n_\nu^{(\infty)}(f):= \mathbb{E}_{\nu^{(\infty)}} \mathrm{Tr} \left( P_0 f(H_\omega) P_0  \right) ~\mbox{, } f \in \mathcal{C}_c(\mathbb{R}) ~\mbox{,}
%\end{equation}
%where $P_0$ is the rank one projector $P_0 = \vert \delta_0 \rangle \langle \delta_0 \vert$ and $\nu^{(\infty)} := \bigotimes_{n \in \mathbb{Z}^d} \nu$. We also consider the {\em{integrated density of states}} (IDS) $N_\nu(E) :=n_\nu^{(\infty)}((-\infty, E))$, the right-continuous cumulative distribution of $n_\nu^{(\infty)}$.

The main object of interest is the {\em{density of states measure}} (DOSm) associated with (\ref{eq_discreteschrodop}), defined as a spectral average
\begin{equation} \label{eq_DOSm_discr}
n_\nu^{(\infty)}(f):= \dfrac{1}{N} \mathbb{E}_{\nu^{(\infty)}} \mathrm{Tr} \left( P_0 f(H_\omega) P_0  \right) ~\mbox{, } f \in \mathcal{C}_c(\mathbb{R}) ~\mbox{}
\end{equation}
with respect to the infinite product measure $\nu^{(\infty)} := \bigotimes_{j \in \mathcal{J}} \nu$. We also consider the {\em{integrated density of states}} (IDS), $N_\nu(E) :=n_\nu^{(\infty)}((-\infty, E])$, the right-continuous cumulative distribution of $n_\nu^{(\infty)}$.

In our recent paper \cite{hislop_marx_1}, we showed that for single-site probability measures supported on a fixed {\em{compact}} set $[-C,C]$, the map
\begin{equation}
\nu \mapsto n_\nu^{(\infty)} ~\mbox{ is H\"older continuous }
\end{equation}
in the weak-$^*$ topology associated with the dual of $\mathcal{C}([-C,C])$.

The purpose of this article is twofold. First, we extend the result for lattice models to single-site measures $\nu$ which are not necessarily compactly supported. Second, and more importantly, we prove a similar result as in \cite{hislop_marx_1} for the continuum analogue of (\ref{eq_discreteschrodop}), i.e.
\begin{equation} \label{eq_contschrodop}
H_\omega = - \Delta + \sum_{j \in \mathbb{Z}^d} \omega_j \phi( \cdot - j) ~\mbox{ on $L^2(\mathbb{R}^d)$ .}
\end{equation}
Here, $-\Delta \geq 0$ is the nonnegative Laplacian on $\mathbb{R}^d$, $0 \leq \phi \leq 1$ is a fixed continuous function supported in a compact neighborhood of the origin, and $\omega_j$ are iid potentials with underlying single-site probability measure $\nu \in \mathcal{P}(\mathbb{R})$ as above. Unlike the discrete case \eqref{eq_discreteschrodop}, the perturbation is given by the operator of  multiplication by $\phi(\cdot - j)$ that has infinite rank.  In the continuum case, the definition of the DOSm in (\ref{eq_DOSm_discr}) is replaced by
\begin{equation} \label{eq_DOSm_cont}
n_\nu^{(\infty)}(f):= \mathbb{E}_{\nu^{(\infty)}} \mathrm{Tr}\left(  \chi_0 f(H_\omega)  \chi_0  \right) ~\mbox{, } f \in \mathcal{C}_c(\mathbb{R}) ~\mbox{,}
\end{equation}
where $\chi_0:= \chi_{\Lambda_0}$ is the characteristic function of the unit cube in $\mathbb{R}^d$ centered at the origin,
\begin{equation}
\Lambda_0:= \{ x \in \mathbb{R}^d ~:~ \Vert x \Vert_\infty \leq 1/2 \} ~\mbox{,}
\end{equation}
and $\mathcal{C}_c(\mathbb{R})$ denotes the set of continuous functions with compact support. In general, for a function space $X$, we denote the subset of functions of compact support by $X_c$.

We refer the reader to the introduction in the first paper \cite{hislop_marx_1} for an introduction to the problem, including a historical survey of related results, and the motivation for these works. Many of the applications in that paper can be extended to the more general situations treated here.
%Finally, the framework we develop in this article will also apply to continuum random Schr\"odinger operators with delta potentials, formally given by
%\begin{equation} \label{eq_contschrodop_delta}
%H_\omega = - \Delta + \sum_{j \in \mathbb{Z}^d} \omega_j \delta( \cdot - j) ~\mbox{.}
%\end{equation}
%It is well-known that (\ref{eq_contschrodop_delta}) can be made rigorous for dimensions $d=1,2,3$ using the resolvent operator; for further details, see section \ref{sec_step2lipschitzprop_conti_lowdim}.

%%%%%%%%
\subsection{Context and outline of the proof} \label{sec_intro_context}

Given a Borel set $A \subseteq \mathbb{R}$, let $\mathcal{P}(A)$ denote the Baire probability measures supported on $A$. Our earlier result in \cite{hislop_marx_1} established that for the lattice model defined in (\ref{eq_discreteschrodop}) and with underlying single-site measures in the space $\mathcal{P}([-C,C])$, weak-$^*$ convergence of a sequence $(\nu_\alpha)$ to a measure $\nu$ in $\mathcal{P}([-C,C])$, implies that for all $f \in \mathrm{Lip}([-C-2d, C+2d])$, one has
\begin{equation} \label{eq_previousresult}
\vert n_{\nu_\alpha}^{(\infty)}(f) - n_\nu^{(\infty)}(f) \vert \leq \gamma \Vert f \Vert_{\mathrm{Lip}} ~d_w(\nu_\alpha, \nu)^\kappa ~\mbox{, } \forall \alpha \geq \alpha_0 ~\mbox{,}
\end{equation}
for absolute constants $\alpha_0 \in \mathbb{N}$, $\gamma > 0$, and $0 < \kappa < 1$, only depending on the dimension $d$ and the rank $N$. Here, $d_w( . , .)$ in (\ref{eq_previousresult}) is an appropriate metric which metrizes the weak-$^*$ topology on $\mathcal{P}([-C,C])$ and $\mathrm{Lip}([-C-2d,C+2d])$ is the Banach space of Lipschitz functions equipped with the usual Lipschitz norm $\Vert . \Vert_{\mathrm{Lip}}$, see also section \ref{sec_metrizVague} for precise definitions.

The proof of the quantitative continuity result (\ref{eq_previousresult}) in \cite{hislop_marx_1} relied on a strategy involving the following two steps, the first of which crucially depended  on the compactness of the support of the single-site measures, which causes the operator in (\ref{eq_discreteschrodop}) to be effectively bounded. The two steps are:
\begin{description}
\item[Step 1 - Finite range reduction] The finite range reduction is based on polynomial approximation of the functions $f$ in (\ref{eq_previousresult}), which allows us to first only consider the effects of varying the probability measure on a {\em{finite}} subdomain of $\mathbb{Z}^d$. The key idea underlying this finite-range reduction is that for every {\em{polynomial}} $f$, the map
\begin{equation}\label{eq:tr-map1}
(\omega_n) := \omega \mapsto \mathrm{Tr}(P_0 f(H_\omega) P_0)
\end{equation}
depends on only {\em{finitely}} many components $\omega_n$ of $\omega$.

\vspace{0.1 cm}
\item[Step 2 - Lipschitz property] This is the Lipschitz continuity of the map in \eqref{eq:tr-map1} with respect to a single random variable $\omega_n$ while keeping all the other variables $\omega_j$ with $j \neq n$ fixed.
\end{description}

The modified proof strategy presented in this article will allow us to extend above-mentioned continuity result for the lattice model (\ref{eq_discreteschrodop}) to include non-compactly supported single-site measures, as well as to treat the continuum models in (\ref{eq_contschrodop}). %and (\ref{eq_contschrodop_delta}).
We mention that for the continuum models (\ref{eq_contschrodop}), % and (\ref{eq_contschrodop_delta}),
we will still need to restrict our results to compactly supported measures.
%, see also the discussion in section \ref{subsec_step2lipschitzprop_conti}.

The approach we take here is based on a non-trivial modification of step 1 (``finite range reduction''), replacing polynomials with resolvents and truncating to a finite number of random variables by taking advantage of the Combes-Thomas estimate. Working with resolvents will, however, come at a price: the singularity of resolvents on the real axis will have to be compensated for by assuming higher regularity of the functions $f$ in (\ref{eq_previousresult}), thereby limiting the continuity result we obtain to functions $f \in \mathcal{C}_c^M(\mathbb{R})$ for some $M = M(d) > 1$, instead of merely Lipschitz as in (\ref{eq_previousresult}).

Specifically, for the lattice model in (\ref{eq_discreteschrodop}), while the map
\begin{equation}\label{eq:resolvent-map1}
(\omega_n)=\omega \mapsto \mathrm{Tr}(P_0 (H_\omega - z)^{-1} P_0) ~\mbox{, } z \in \mathbb{C} \setminus \mathbb{R} ~\mbox{,}
\end{equation}
no longer depends on only finitely many components $\omega_n$ of $\omega$, the Combes-Thomas estimate will imply that for every single-site measure $\nu \in \mathcal{P}(\mathbb{R})$ with finite first moment
\begin{equation}
\mu_1[\nu]:= \int \vert x \vert ~\mathrm{d} \nu(x) ~\mbox{,}
\end{equation}
the {\em{averaged}} contribution to \eqref{eq:resolvent-map1} of the random variables $\omega_n$ with $\Vert n \Vert_\infty > L$ decays as
\begin{equation} \label{eq_finiterangeestim}
 \lesssim \dfrac{ \mu_1[\nu] }{\vert \mathrm{Im} z \vert^{2 + d} } \mathrm{e}^{-c \vert \mathrm{Im} z \vert L} ~\mbox{.}
\end{equation}
The Helffer-Sj\"ostrand functional calculus then allows us to control the singularity of (\ref{eq_finiterangeestim}) near the real axis by working with functions $f$ of regularity of at least $2 + d = M(d)$. Since the key ingredient in modified step 1 is the Combes-Thomas estimate, the argument is equally applicable to the continuum models in (\ref{eq_contschrodop}).% and (\ref{eq_contschrodop_delta}).

Finally, for the continuum model (\ref{eq_contschrodop}), the proof of both step 1 and step 2 has to be further revised since the projector $\chi_0$ in (\ref{eq_DOSm_cont}) is not finite rank. We will use appropriate regularizations of the operator (\ref{eq_contschrodop}) relying on the relative compactness of $\chi_0$ with respect to appropriate powers (depending on $d$) of the resolvent. The latter will result in an increase of the required regularity $M$. Moreover, our proof of the Lipschitz property (step 2) for the continuum case will require us to assume that the single site measures are compactly supported. Although the finite range reduction (step 1) for the continuum models does not require compactly supported single site measures, but only finiteness of a certain number of moments, we will assume that the single site probability measures for continuum models are compactly supported for simplicity.

%; for details we refer to the discussion in section \ref{subsec_step2lipschitzprop_conti}.

%as well as in an additional finiteness condition for higher moments of the single-site measure (with degree depending on $d$).

%%%%%%%%%%%%%%%%%%%%%%%%%%%%%%%%%%%%%%%%%%%%%%%%%%%%%%%%%%%%%%%%%%%%%%%%%%%%%%%%%%%%%%%%%%%%%%%%%%%%%%%%%%%%%%%%%%

\subsection{Vague and weak topology} \label{sec_metrizVague}

To examine the continuity of the map
\begin{equation} \label{eq_mainmap}
\nu \mapsto n_{\nu}^{(\infty)} ~\mbox{,}
\end{equation}
while possibly allowing for a non-compactly supported single site measure $\nu$, we need to agree on suitable topologies for both the domain and the codomain of the map in (\ref{eq_mainmap}). To this end, given Borel set $A \subseteq \mathbb{R}$, we first recall the following two topologies on $\mathcal{P}(A)$:
\begin{definition}
Let $(\mu_\alpha)_{\alpha \in \mathbb{N}}$ be a sequence of measures in $\mathcal{P}(A)$.
\begin{itemize}
\item[(i)] The sequence $(\mu_\alpha)_{\alpha \in \mathbb{N}}$ converges in {\em{vague topology}} if there exists a measure $\mu \in \mathcal{P}(A)$ such that for all continuous, compactly supported functions $f \in C_c(\mathbb{R})$, one has $\mu_\alpha(f) \to \mu(f)$. We denote the topological space $\mathcal{P}(A)$ equipped with the vague topology as the pair $(\mathcal{P}(A), \mathcal{V})$.
\item[(ii)] The sequence $(\mu_\alpha)_{\alpha \in \mathbb{N}}$ converges in {\em{weak topology}} if there exists a measure $\mu \in \mathcal{P}(A)$ such that for all continuous and bounded functions $f \in C_b(\mathbb{R})$, one has $\mu_\alpha(f) \to \mu(f)$. We denote the topological space $\mathcal{P}(A)$ equipped with the weak topology as the pair $(\mathcal{P}(A), \mathcal{W})$.
\end{itemize}
\end{definition}
\begin{remark} \label{eq_topology}
It is clear that, in general, the weak topology is stronger than the vague topology. We note, however, that if $A \subseteq \mathbb{R}$ is compact, the two topologies agree. In particular, in view of our earlier results in \cite{hislop_marx_1}, if $A=[-C,C]$ for $0<C<+\infty$, both weak and vague topology coincide with the weak-$^*$ topology associated with the dual of $\mathcal{C}([-C,C])$.
\end{remark}

Taking into account that the important characterization of the DOSm as an almost sure limit is formulated the vague topology, see e.g. \cite{CyconFroeseKirschSimon_book_1987} Sec. 9.2 therein, it is natural to use the {\em{vague topology}} for the {\em{codomain}} space of the map in (\ref{eq_mainmap}).

On the other hand, in view of the ``Lipschitz property'' outlined as step 2 in Sec. \ref{sec_intro_context}, we observe that for the lattice model (\ref{eq_discreteschrodop}), a given function $f \in \mathcal{C}_c(\mathbb{R})$, and an arbitrary lattice site $n \in \mathbb{Z}^d$, while the map
\begin{equation}
\omega_n \mapsto \mathrm{Tr}(P_0 f(H_\omega) P_0)
\end{equation}
does define a continuous and bounded function with $\Vert \mathrm{Tr}(P_0 f(H_\omega) P_0) \Vert_\infty \leq N \Vert f \Vert_\infty$, it is in general not compactly supported. Hence, to allow for non-compactly supported single-site measures $\nu \in \mathcal{P}(\mathbb{R})$, we will equip the {\em{domain}} of the map in (\ref{eq_mainmap}) with the {\em{weak topology}}.

As in \cite{hislop_marx_1}, we will use the well-known fact (see e.g. \cite[Theorem 12]{Dudley_1966}) that for each fixed Borel set $A \subseteq \mathbb{R}$, the weak topology on $\mathcal{P}(A)$ is metrizable by a metric derived from the Lipschitz dual:
\begin{equation}\label{eq:LipMetric1}
d_w(\mu, \nu):= \sup \left\{ \left\vert \mu(f) - \nu(f) \right\vert  ~:~ f \in \mathrm{Lip}_b(A) ~\mbox{with } \Vert f \Vert_{\mathrm{Lip}} \leq 1  \right\} ~\mbox{,}
\end{equation}
where $\mathrm{Lip}_b(A)$ is the space of bounded Lipschitz functions on $A$, equipped with the usual norm:
\begin{equation} \label{eq_lipschitznorm}
\Vert f \Vert_{\mathrm{Lip}} := \Vert f \Vert_\infty + \sup_{x \neq y \in A} \dfrac{\vert f(x) - f(y) \vert}{\vert x - y \vert} =: \Vert f \Vert_\infty + L_f ~\mbox{.}
\end{equation}

Finally, to account for finite moment conditions on the single-site measure, given a Borel set $A \subseteq \mathbb{R}$, $C>0$ and $p \in \mathbb{N}$, we let
\begin{align} \label{eq_probspace_boundedfinitemoment}
\mathcal{P}_{p; C}(A):=\{ \nu \in \mathcal{P}(A) ~:~ \max_{1 \leq l \leq p} \mu_l[\nu] \leq C \} ~\mbox{, }
\end{align}
denote all probability measures $\nu$ on $A$ with moments $\mu_l[\nu]:= \int \vert x \vert^l ~\mathrm{d} \nu(x) \leq C$ of degree $l$ for all $1 \leq l \leq p$.

%%%%%%%%%%%%%%%%%%%%%%%%%%%%%%%%%%%%%%%%%%%%%%%%%%%%%%%%%%%%%%%%%%%%%%%%%%%%%%%%%%%%%%%%%%%%%%%%%%%%%%%%%%%%%%%%%%

\subsection{Statements of the main results} \label{sec_mainresults}

In this paper, we limit ourselves to continuity properties of the DOSm and IDS with respect to the single-site probability measure. Using the methods presented here, other results analogous to those in \cite{hislop_marx_1} can be proven for the models treated here.

\vspace{.1in}
\noindent
{\bf Discrete models on $\ell^2 (\Z^d).$}
\vspace{.1in}

The precise description of the random Schr\"odinger operator with finite-rank potentials, formally given in (\ref{eq_discreteschrodop}), is as follows:
\begin{description}
\item [{[Disc]}]  Fix $K \in \mathbb{N}$ and set $N:=K^d$. The discrete Hamiltonian on $\ell^2 (\Z^d)$ has the form (\ref{eq_discreteschrodop}) where the components of $\omega := \{ \omega_j \}_{ j \in {\mathcal{J}}}$ are iid random variables, distributed according to a common probability measure $\nu \in \mathcal{P}(\mathbb{R})$. Here, the index set ${\mathcal{J}}$ is a lattice $K \Z^d$. The rank $N$ projection $P_0$ projects onto the $N = K^d$ sites in the cube $[0, K-1]^d \subset \Z^d$ at the origin. The orthogonal projections $\{ P_j ~|~ j \in \mathcal{J} \}$ are generated by translation of the single rank $N$ projection $P_0$, i.e., for each $j \in \mathcal{J}$, we define $P_j = U_j P_0 U_j^{-1}$, where $U_j$ denotes the unitary on $\ell^2 (\Z^d)$ given by $f(x-j) = (U_jf)(x)$.
\end{description}
 \vspace{.1in}
We then have:
\begin{theorem}\label{thm:main-lattice1}
Let $H_\omega$ be the discrete random Schr\"odinger operators described in hypothesis {\bf[{Disc}]}.  For $C > 0$, we recall from (\ref{eq_probspace_boundedfinitemoment}) that $\mathcal{P}_{1;C}(\mathbb{R})$ denotes the space of single-site probability measures supported on $\R$ with finite first moments bounded by $C$.  %defined in (\ref{eq_probspace_boundedfinitemoment}).
\begin{itemize}
\item[(i.)] For each $C>0$, $A \in \mathbb{R}$, and $E \in \mathbb{R}$, both the maps
\begin{align} \label{eq_mainmap-discr1}
\mathcal{N}: (\mathcal{P}_{1;C}(\mathbb{R}), \mathcal{W}) \to (\mathcal{P}(\mathbb{R}), \mathcal{V}) ~, ~\nu \mapsto n_{\nu}^{(\infty)} ~\mbox{,} \nonumber \\
\mathcal{I}:  (\mathcal{P}_{1;C}([A, + \infty)), \mathcal{W}) \to \mathbb{R} ~, ~\nu \mapsto N_\nu(E) ~\mbox{,}
\end{align}
are continuous.

\item[(ii.)] The modulus of continuity of the map $\mathcal{N}$ in (\ref{eq_mainmap-discr1}) is quantified by the following: There exist $\rho_1>0$ and a degree of regularity $M_1 \in \mathbb{N}$, only depending on $d$, such that for each given $C>0$, there is a constant $C_{1}$, depending on $d$, $N$, and $C$, so that for all measures $\nu_1, \nu_2 \in \mathcal{P}_{1;C}(\mathbb{R})$ with $d_w(\nu_1, \nu_2) < \rho_1$ and all functions $f \in \mathcal{C}_c^{M_1}(\mathbb{R})$ with $\mathrm{supp}(f) \subseteq [-r,r]$ and $r \geq 1$, one has
\begin{equation}
| n_{\nu_1}^{(\infty)}(f) - n_{\nu_2}^{(\infty)} (f)| \leq C_{1} r^{M_{1}} \Vert f \Vert_{\mathcal{C}^{M_1}} ~\cdot d_w(\nu_1,\nu_2)^{\frac{1}{1+d}} ~\mbox{.}
\end{equation}
\item[(iii.)] For the map $\mathcal{I}$ in (\ref{eq_mainmap-discr1}), the modulus of continuity is quantified by the following: For $\rho_1>0$ as in part (ii.), for each $C>0$, $A \in \mathbb{R}$, all measures $\nu_1, \nu_2 \in \mathcal{P}_{1;C}([A,+\infty))$ with $d_w(\nu_1, \nu_2) < \rho_1$, and all $E_0 \in \mathbb{R}$, there exists a constant $C_{2; A, E_0}$, depending on $A$, $E_0$, $C$, $d$, and the rank $N$, so that if $E \leq E_0$, one has
\begin{equation}
| N_{\nu_1} (E) - N_{\nu_2}(E) | \leq \frac{C_{2; A, E_0}}{\log \left(   \frac{1}{ d_w(\nu_1,\nu_2)    }    \right)} ~\mbox{.}
\end{equation}
\end{itemize}
\end{theorem}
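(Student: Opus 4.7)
The plan is to establish the quantitative H\"older estimate (ii) first, and then derive parts (i) and (iii) from it by standard density and smoothing arguments. Following Section~\ref{sec_intro_context}, I would replace the polynomial building blocks used in \cite{hislop_marx_1} by resolvents and use the Helffer-Sj\"ostrand functional calculus to trade higher regularity of the test function against the polynomial singularity of the resolvent near the real axis. Concretely, for $f \in \mathcal{C}_c^{M_1}(\R)$ with $\mathrm{supp}(f) \subseteq [-r,r]$, I fix an almost-analytic extension $\tilde{f} : \C \to \C$ supported in a compact neighborhood of $\mathrm{supp}(f)$ and satisfying $|\bar\partial \tilde{f}(z)| \leq C_{M_1}\, \|f\|_{\mathcal{C}^{M_1}}\, |\mathrm{Im}\, z|^{M_1 - 1}$, so that
\[
n_{\nu_1}^{(\infty)}(f) - n_{\nu_2}^{(\infty)}(f) = -\frac{1}{\pi N} \int_{\C} \bar\partial \tilde{f}(z)\, \bigl( F_{\nu_1}(z) - F_{\nu_2}(z) \bigr) \, dx\, dy,
\]
where $F_\nu(z) := \E_{\nu^{(\infty)}} \mathrm{Tr}(P_0 (H_\omega - z)^{-1} P_0)$. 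The task reduces to estimating $|F_{\nu_1}(z) - F_{\nu_2}(z)|$ uniformly on the bounded strip $\mathrm{supp}(\tilde{f})$ with only polynomial blow-up as $|\mathrm{Im}\, z| \to 0$.

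For this difference I would carry out the two-step reduction outlined in the introduction. First, a finite-range truncation: let $H_\omega^{(L)}$ be obtained from $H_\omega$ by setting $\omega_j := 0$ for $\|j\|_\infty > L$, and write $V^{(L)} := H_\omega - H_\omega^{(L)} = \sum_{\|j\|_\infty > L} \omega_j P_j$. Applying the second resolvent identity together with the Combes-Thomas estimate for self-adjoint operators, whose decay rate $\sim e^{-c|\mathrm{Im}\, z|\,\mathrm{dist}(\cdot,\cdot)}$ is insensitive to unboundedness of the potential, yields the bound \eqref{eq_finiterangeestim}; the factor $\mu_1[\nu]$ emerges from integrating out the linear $\omega_j$-factor in $V^{(L)}$, and the prefactor $|\mathrm{Im}\, z|^{-d}$ absorbs the lattice sum $\sum_{\|j\|_\infty > L} e^{-c|\mathrm{Im}\, z|\, \|j\|}$ before the exponential tail takes over. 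Second, on the truncated box the map $(\omega_j)_{\|j\|_\infty \leq L} \mapsto \mathrm{Tr}(P_0 (H_\omega^{(L)} - z)^{-1} P_0)$ is, in each coordinate separately, a bounded Lipschitz function of $\omega_j$ with uniform bound $\lesssim N/|\mathrm{Im}\, z|$ and Lipschitz constant $\lesssim N/|\mathrm{Im}\, z|^2$, via $\partial_{\omega_j}(H_\omega^{(L)} - z)^{-1} = -(H_\omega^{(L)} - z)^{-1} P_j (H_\omega^{(L)} - z)^{-1}$ and $\mathrm{rank}(P_j) = N$. Telescoping over the $O(L^d)$ sites in the box and applying the Kantorovich-Rubinstein duality encoded in \eqref{eq:LipMetric1} in each coordinate produces
\[
\bigl|F_{\nu_1}(z) - F_{\nu_2}(z)\bigr| \lesssim \frac{L^d}{|\mathrm{Im}\, z|^2}\, d_w(\nu_1, \nu_2) + \frac{\mu_1[\nu_1] + \mu_1[\nu_2]}{|\mathrm{Im}\, z|^{d+2}}\, e^{-c|\mathrm{Im}\, z|\, L}.
\]

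Inserting this into the Helffer-Sj\"ostrand integral and integrating over the $O(r) \times O(1)$ strip gives an upper bound of the shape $r\,\|f\|_{\mathcal{C}^{M_1}}\,(L^d\, d_w(\nu_1, \nu_2) + L^{-1})$ as soon as $M_1 \geq d+3$ (which makes $\int_0^1 |y|^{M_1 - d - 3}\, e^{-c|y|L}\, dy \lesssim L^{-1}$ finite); optimizing $L \asymp d_w(\nu_1, \nu_2)^{-1/(d+1)}$ reproduces the H\"older exponent $1/(1+d)$ of (ii). For (iii) I would squeeze $\chi_{(-\infty, E]}$ between smooth cutoffs $g_\epsilon^\pm \in \mathcal{C}_c^{M_1}(\R)$ with $g_\epsilon^- \leq \chi_{(-\infty,E]} \leq g_\epsilon^+$ and $\|g_\epsilon^\pm\|_{\mathcal{C}^{M_1}} \lesssim \epsilon^{-M_1}$, using $\mathrm{supp}(\nu) \subseteq [A, +\infty)$ and $E \leq E_0$ to keep the support width $r$ uniformly bounded; controlling the boundary mass $N_\nu(E+\epsilon) - N_\nu(E-\epsilon)$ by a self-bootstrap (using (ii) to transport it from $\nu_1$ to $\nu_2$, combined with the trivial bound $N_\nu \leq 1$) and balancing $\epsilon^{-M_1} d_w^{1/(1+d)}$ against the width $\epsilon$ produces the $1/\log(1/d_w)$ modulus. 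Part (i) is then immediate: for $\mathcal{N}$ it follows from (ii) by density of $\mathcal{C}_c^{M_1}(\R)$ in $\mathcal{C}_c(\R)$, and for $\mathcal{I}$ directly from (iii) after noting that $\mathcal{P}_{1;C}$ is closed under weak convergence via lower semicontinuity of $x \mapsto |x|$.

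The hardest part will be the balancing of the three competing scales in the Helffer-Sj\"ostrand integration: the Combes-Thomas loss $|\mathrm{Im}\, z|^{-(d+2)}$ grows linearly in $d$, which forces $M_1 = M_1(d)$ to grow with $d$ and demands a careful bookkeeping of the implicit constants through the almost-analytic extension. Because the single-site measures are only assumed to have finite first moment, the Lipschitz and Combes-Thomas estimates must be controlled uniformly over $\mathcal{P}_{1;C}(\R)$ with no hidden dependence on the tail of $\nu$; ensuring this uniformity is ultimately what lets the H\"older exponent in (ii) depend only on $d$.
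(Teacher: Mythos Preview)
Your approach to part (ii) is essentially the paper's: both carry out the finite-range reduction via the second resolvent identity and Combes--Thomas, and both optimize $L \asymp d_w^{-1/(d+1)}$ to reach the H\"older exponent $1/(1+d)$. The only organizational difference is that the paper first applies Helffer--Sj\"ostrand to convert the Combes--Thomas tail into a remainder $|R_f[L;\nu]| \lesssim \mu_1[\nu]\, L^{-1}\, \|f\|_{3+d}$ at the level of functions (Proposition~\ref{thm_finiterangereduction}), and then proves the Lipschitz property directly for $\omega_j \mapsto \mathrm{Tr}(P_0 f(H_\omega)P_0)$ with constant $N L_f$ (Corollary~\ref{coro_lipschitzdiscrete}, via a positive-measure representation and the Cauchy--Pompeiu formula), rather than working with the coarser resolvent derivative bound $N/|\mathrm{Im}\,z|^2$ as you do. The paper's route yields a sharper constant in the telescoping term ($L_f$ rather than a full $\mathcal{C}^{M_1}$-norm), but this does not affect the final exponent, and your single outer Helffer--Sj\"ostrand integral is a legitimate alternative.

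Part (iii), however, has a genuine gap. The boundary mass $N_{\nu}(E+\epsilon) - N_{\nu}(E-\epsilon)$ cannot be controlled by the ``self-bootstrap'' you describe or by the trivial bound $N_\nu \le 1$: after sandwiching $\chi_{(-\infty,E]}$ between $g_\epsilon^\pm$ and applying (ii), you are left with exactly this energy-continuity term for one of the two measures, and transporting it to the other measure via (ii) only returns the same quantity with $\nu_1,\nu_2$ interchanged --- the iteration does not close. The paper instead invokes the Craig--Simon log-H\"older continuity of the IDS in the energy variable, \eqref{eq_IDSdisc}, namely $|N_\nu(E+\epsilon) - N_\nu(E)| \le K/\log(1/\epsilon)$ uniformly over $\mathcal{P}_{1;C}([A,+\infty))$ and $E \le E_0$. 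One then balances $\epsilon^{-M_1} d_w^{1/(1+d)}$ against $K/\log(1/\epsilon)$ (taking, say, $\epsilon = d_w^a$ for any fixed $0 < a < 1/(M_1(1+d))$), which yields the $1/\log(1/d_w)$ rate. Note that balancing against ``the width $\epsilon$'' as you wrote would produce a H\"older modulus, not a logarithmic one --- and in any case no bound of the form $N_\nu(E+\epsilon)-N_\nu(E) \lesssim \epsilon$ is available without an a priori regularity input. The Craig--Simon estimate is precisely the external ingredient that makes the $\log$ appear; without it your argument for (iii) does not go through.
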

Here, for $\beta \in \mathbb{N}$ and $f \in \mathcal{C}_c^\beta$, we denote
\begin{equation} \label{eq_cbetanorm}
\Vert f \Vert_{\mathcal{C}^\beta}:= \sum_{k=0}^\beta \Vert f^{(k)} \Vert_\infty ~\mbox{.}
\end{equation}
\begin{remark} \label{rem_mainthm_disc}
\item[(i).] We note that our proof shows that one can take $M_1 = d+3$ and the radius $\rho_1 = \left( \frac{2}{3} \right)^{1 + d}$.
\item[(ii.)] In view of the statements about the IDS, considering measures $\nu$ whose support is lower semi-bounded, $\mathrm{supp}(\nu) \subseteq [A, +\infty)$ for some $A \in \mathbb{R}$, implies that the spectrum of $H_\omega$ is contained in $[-2d - \vert A \vert, +\infty)$, for all $\omega \in \Omega$. This ensures that, on the spectrum, the step-function $\chi_{(-\infty, E)}$ can be approximated uniformly by {\em{compactly supported}} functions, in agreement with the vague topology used for the codomain of the map $\mathcal{N}$.
\end{remark}

\vspace{.1in}
\noindent

{\bf Continuum models on $L^2 (\R^d)$.}
\vspace{.1in}

For the random Schr\"odinger operator in $\mathbb{R}^2$ given formally by (\ref{eq_contschrodop}), we add the following hypotheses:
\begin{description}
\item [{[Cont]}] Consider the random Schr\"odinger operator in (\ref{eq_contschrodop}), acting on $L^2(\mathbb{R}^d)$, where the random potential has the form
\beq\label{eq:random-pot-cont1}
V_\omega (x) = \sum_{j \in \Z^d} ~ \omega_j \phi (x - j) =: \sum_{j \in \Z^d} ~ \omega_j \phi_j (x),
\eeq
and $0  \leq \phi (x) \leq 1$ is a $C_c^{k_v}(\mathbb{R}^d)$-function with compact support in a neighborhood of the origin in $\mathbb{R}^d$. The required degree of regularity $k_v \geq 0$ will be depend on the dimension. Without loss of generality we will assume that $0  \leq \phi (x) \leq 1$ is supported in the unit cube $\Lambda_0$, centered at the origin, which in particular implies $\phi = \phi \cdot \chi_0$ and $\phi \leq \chi_0$.
\end{description}

We will assume that the support of the single-site probability measure is compact and contained in $[-C,C]$ for a finite constant $C>0$. In this case, the spectrum of $H_\omega$ is contained in the half-line $[-C, \infty)$.
For these continuum models, we prove the following result. As mentioned earlier at the end of section \ref{sec_intro_context}, our proof requires us to restrict the result to compactly supported single site measures. In view of the topologies used for the qualitative continuity statement in part (i) of Theorem \ref{thm:main-cont1}, we thus recall remark \ref{eq_topology} for the case of compactly supported measures.

\begin{theorem} \label{thm:main-cont1}
Let $H_\omega$ be the continuum model described in hypothesis {\bf[{Cont}]} with degree of regularity $k_v \geq 0$ such that $k_v > \max\{ d - 2 ; \frac{4 + 2d}{3} \}$. Given $0< C < + \infty$, let $\mathcal{P}([-C,C])$ denote the space of probability measures supported on $[-C,C]$.
\begin{itemize}
\item[(i.)] For each $C>0$, the map
\begin{align} \label{eq_mainmap-cont1}
\mathcal{N}: (\mathcal{P}([-C,C]) , \mathcal{W}) \to (\mathcal{P}(\R), \mathcal{V}) ~, ~\nu \mapsto n_{\nu}^{(\infty)} ~\mbox{}
\end{align}
is continuous.
Moreover, for each given $E \in \mathbb{R}$, the map
\begin{align} \label{eq_mainmap-IDScont1}
\mathcal{I}:  (\mathcal{P}([-C,C]), \mathcal{W}) \to \mathbb{R} ~, ~\nu \mapsto N_\nu(E) ~\mbox{,}
\end{align}
is continuous at all measures $\nu \in \mathcal{P}([-C,C])$ for which $E$ is a point of continuity (i.e. for which the IDS $N_\nu(E)$ is continuous at $E$).
\item[(ii.)] The modulus of continuity of the map  \eqref{eq_mainmap-cont1} for the DOSm in part (i) is quantified by the following: There exist $\rho_2>0$ and a degree of regularity $M_2 \in \mathbb{N}$, only depending on $d$, such that for each given $C>0$, there is a finite constant $C_{3}>0$, depending on $d$ and $C$, so that for all measures $\nu_1, \nu_2 \in \mathcal{P}([-C,C])$ with $d_w(\nu_1, \nu_2) < \rho_2$ and all functions $f \in \mathcal{C}_c^{M_2}(\mathbb{R})$ with $\mathrm{supp}(f) \subseteq [-r,r]$ and $r\geq 1$, one has
\begin{equation}
| n_{\nu_1}^{(\infty)}(f) - n_{\nu_2}^{(\infty)} (f)| \leq C_{3} r^{2 (k_v + 1)} \Vert f \Vert_{\mathcal{C}^{M_2}} ~\cdot d_w(\nu_1,\nu_2)^{\frac{1}{1+d}} ~\mbox{.}
\end{equation}
%For $d=1$, we have the improved estimate: For all measures $\nu_1, \nu_2 \in \mathcal{P}([-C,C])$ with $d_w(\nu_1, \nu_2) < \rho_2$ and all functions $f \in Lip_c(\mathbb{R})$ with $\mathrm{supp}(f) \subseteq [-r,r]$, one has
%\begin{equation}
%| n_{\nu_1}^{(\infty)}(f) - n_{\nu_2}^{(\infty)} (f)| \leq C_{4} r \Vert f \Vert_{Lip} ~\cdot d_w(\nu_1,\nu_2)^{\frac{1}{2}} ~\mbox{.}
%\end{equation}
\item[(iii.)] Concerning the modulus of continuity of the map \eqref{eq_mainmap-IDScont1} for the IDS in part (i), if $d=1,2,3$, there exists $0<\gamma_d$, only depending on $d$, such that for each $E_0 \in \mathbb{R}$, there is a constant $C_{4;E_0}>0$, depending on $E_0$, $d$, and $C$, such that for every $E \in \mathbb{R}$ with $E \leq E_0$, one has:
%\begin{equation} \label{eq_main_contiIDS}
%| N_{\nu_1} (E) - N_{\nu_2}(E) | \leq \frac{C_{4;E_0} \vert E \vert^{M_{2} +\frac{k_v}{2} + 1 }}{ \left[  \log   \left(   \frac{1}{ d_w(\nu_1,\nu_2)}                   \right)          \right]^{\gamma_d }     } .
%\end{equation}
\begin{equation} \label{eq_main_contiIDS}
| N_{\nu_1} (E) - N_{\nu_2}(E) | \leq \frac{C_{4;E_0}}{ \left[  \log   \left(   \frac{1}{ d_w(\nu_1,\nu_2)}                   \right)          \right]^{\gamma_d }     } .
\end{equation}
%For $d=1$, the exponent $M_2$ of $|E|$ in \eqref{eq_main_contiIDS} can be taken to be $M_2 = 1$.
\end{itemize}
\end{theorem}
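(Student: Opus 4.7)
The plan is to follow the two-step strategy of section~\ref{sec_intro_context} (Combes--Thomas finite-range reduction plus single-variable Lipschitz bound) adapted to the continuum setting. Part~(ii) is the technical heart; part~(i) follows from it by density, and part~(iii) by quantitative smoothing of the step function.

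For part~(ii), given $f \in \mathcal{C}_c^{M_2}(\R)$ I would apply the Helffer--Sj\"ostrand formula
\[
 f(H_\omega) = -\frac{1}{\pi} \int_{\C} \bar{\partial}\tilde{f}(z)\,(H_\omega - z)^{-1}\, dL(z),
\]
with an almost-analytic extension satisfying $|\bar{\partial}\tilde{f}(z)| \leq C\,\|f\|_{\mathcal{C}^{M_2}} |\mathrm{Im}\,z|^{M_2 - 1}$ and support in a tube around $\mathrm{supp}(f)$. Because $\chi_0$ is not trace class and $(H_\omega - z)^{-1}$ is not generally Hilbert--Schmidt, I would symmetrically insert resolvent weights and rewrite
\[
 \chi_0 f(H_\omega)\chi_0 = \chi_0 (H_\omega + c)^{-m}\, g(H_\omega)\, (H_\omega + c)^{-m}\chi_0,
\]
with $g(E) = (E+c)^{2m} f(E)$ and $m = m(d)$ chosen so that $\chi_0 (H_\omega + c)^{-m}$ is Hilbert--Schmidt uniformly in $\omega$; this relative-compactness estimate uses $k_v > d-2$ and produces the factor $r^{2(k_v+1)}$ in the bound.

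Next I would perform the finite-range reduction: the Combes--Thomas estimate yields, analogously to~\eqref{eq_finiterangeestim}, that the contribution to the averaged trace of the random variables $\omega_j$ with $\|j\|_\infty > L$ decays like $|\mathrm{Im}\,z|^{-p}e^{-c|\mathrm{Im}\,z|L}$ for a $d$-dependent $p$. This truncates the expectation to a function of finitely many $\omega_j$ with $j \in \Lambda_L \cap \Z^d$. On this finite block I would verify the Lipschitz property: varying a single $\omega_n$ while freezing the others produces, via the resolvent identity and $0 \leq \phi_n \leq \chi_0$, a Lipschitz dependence with constant $\lesssim |\mathrm{Im}\,z|^{-q}$; here the compact support of $\nu_1,\nu_2$ in $[-C,C]$ is essential to prevent the Lipschitz constant from blowing up in $|\omega_n|$. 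Swapping $\nu_1$ for $\nu_2$ one coordinate at a time and invoking the Kantorovich--Rubinstein characterization of $d_w$ produces a bound $\lesssim L^d\,|\mathrm{Im}\,z|^{-q}\,d_w(\nu_1,\nu_2)$. Integrating against $|\bar\partial\tilde{f}|$ and choosing $L$ so that the Combes--Thomas tail balances $L^d d_w(\nu_1,\nu_2)$ yields the exponent $1/(1+d)$.

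Part~(i) then follows from part~(ii) by density of $\mathcal{C}_c^{M_2}(\R)$ in $\mathcal{C}_c(\R)$ in the uniform topology together with the a priori bound $|n_\nu^{(\infty)}(f)| \leq C\|f\|_\infty$; continuity of the IDS at a continuity point is obtained by bracketing $\chi_{(-\infty, E]}$ between $\mathcal{C}_c^{M_2}$ approximants. For part~(iii), I would take explicit smooth brackets $f_{\pm,\epsilon}$ with $\|f_{\pm,\epsilon}\|_{\mathcal{C}^{M_2}} \lesssim \epsilon^{-M_2}$ separating $\chi_{(-\infty, E-\epsilon]}$ and $\chi_{(-\infty, E+\epsilon]}$, so that part~(ii) controls $|n_{\nu_1}^{(\infty)}(f_{\pm,\epsilon}) - n_{\nu_2}^{(\infty)}(f_{\pm,\epsilon})|$ by $\epsilon^{-M_2} d_w(\nu_1,\nu_2)^{1/(1+d)}$, while $n_{\nu_j}^{(\infty)}([E-\epsilon,E+\epsilon])$ is controlled by a uniform-in-$\nu$ modulus of continuity of the IDS available in $d \leq 3$ from trace-ideal/Krein spectral-shift bounds that fail in higher dimensions; this is the origin of the dimensional restriction. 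Optimizing $\epsilon$ as a power of $d_w$ gives a polynomial rate in $d_w$ which, after absorbing constants uniformly over $E \leq E_0$, I would record in the weaker logarithmic form~\eqref{eq_main_contiIDS}. The main obstacle is the bookkeeping in part~(ii): the infinite rank of $\chi_0$ forces the insertion of resolvent weights which introduce additional negative powers of $|\mathrm{Im}\,z|$, and one must carefully trade off Helffer--Sj\"ostrand regularity, Combes--Thomas exponential decay, and polynomial losses from the Lipschitz step so that the final exponent $M_2$ is finite and the Hölder exponent $1/(1+d)$ emerges.
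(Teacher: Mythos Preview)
Your plan for parts~(i) and~(ii) is essentially the paper's own argument: Helffer--Sj\"ostrand plus resolvent regularization to handle the infinite rank of $\chi_0$, Combes--Thomas for the finite-range reduction, a single-site Lipschitz bound (the paper's Proposition~\ref{prop_lipschitz_traceclass_modif3} and Corollary~\ref{coro_lipschitzcont1_optreg}), a telescoping swap of the product measure, and optimization in $L$ yielding the exponent $\frac{1}{1+d}$. The organizational differences (symmetric versus one-sided resolvent insertion, doing the Lipschitz estimate inside versus outside the Helffer--Sj\"ostrand integral) are immaterial.

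There is, however, a genuine gap in your part~(iii). You assert that optimizing $\epsilon$ gives a \emph{polynomial} rate in $d_w$, which you then ``record in the weaker logarithmic form''~\eqref{eq_main_contiIDS}. This is incorrect: the only available uniform-in-$\nu$ modulus of continuity of the IDS in the energy for general (possibly singular) single-site measures in $d=1,2,3$ is the Bourgain--Klein log-H\"older bound~\eqref{eq_IDScont}, namely $n_\nu^{(\infty)}([E-\epsilon,E+\epsilon]) \lesssim [\log(1/\epsilon)]^{-\kappa_d}$. Balancing this against $\epsilon^{-M_2} d_w^{1/(1+d)}$ from part~(ii), the logarithmic term dominates and the resulting rate in $d_w$ is genuinely $[\log(1/d_w)]^{-\gamma_d}$, not polynomial---the logarithmic form is forced, not a concession. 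Your attribution of the IDS continuity to ``trace-ideal/Krein spectral-shift bounds'' is also off: Krein spectral shift controls differences under \emph{perturbation} of the operator, not continuity of a fixed IDS in the energy; the relevant input is Bourgain--Klein \cite{bourgain-klein1}, whose mechanism (Cartan-type estimates and quantitative unique continuation) is unrelated to trace ideals and is precisely what breaks down for $d\geq 4$.
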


\begin{remark}
\begin{itemize}
\item[(i)] In view of the statement about the IDS in (\ref{eq_mainmap-IDScont1}) of item (i), we note that we are using the general fact that weak convergence of measures implies point-wise convergence of the respective cumulative distribution functions at each point of continuity of the limiting measure. While for the discrete models described in hypothesis {\bf[{Disc}]} the IDS has long been known to be everywhere continuous in the energy \cite{craig-simon1, DelyonSouillard_1984}, for the continuum models described in hypothesis
{\bf[{Cont}]}, continuity of the IDS in the energy is known in general only for dimensions $d=1,2,3$ and the case of bounded potentials \cite{bourgain-klein1}, see \eqref{eq_IDScont}. For a brief review of available results about the continuity of the IDS in the energy, we refer the reader to e.g. \cite{hislop_marx_1}, section 1.2. therein.
\item[(ii)] The exponent $\gamma_d$ for the fractional $\log$-H\"older dependence in (\ref{eq_main_contiIDS}) can be determined based on the known modulus of continuity of the IDS in the energy for $d=1,2,3$, proven in  \cite{bourgain-klein1}: there it is shown that one can take $\gamma_1=1$, $\gamma_2 = 1/4$, and $\gamma_3 = 1/8$, see Theorem 1.1 therein.
\item[(iii)] As in the discrete case, our proof shows that one can take $M_2 = d+3$  and the radius $\rho_2 = \left( \frac{2}{3} \right)^{1 + d}$.
\end{itemize}
\end{remark}
%CAUTION: more generally holds for all points of continuity of either $\nu_1$ or $\nu_2$ in any dimension d; refer to Bourgain-Klein; ADD correct exponent in log term

Since our first paper \cite{hislop_marx_1} was posted, we received comments from I.\ Kachkovskiy \cite{kachkovskiy1}  in which he indicated a different proof for models on $\Z^d$ with single-site probability measures of compact support that gives Lipschitz continuity of the DOSm.
%His method improves the continuity result of \cite{hislop_marx_1} for the DOSm to Lipschitz continuity.
Independently,  M.\ Shamis  communicated another proof for the models considered here using different methods \cite{shamis1} also giving Lipschitz continuity for the DOSm with respect to the single-site probability measure. Shamis uses the Kantorovich-Rubinstein metric instead of the bounded Lipschitz metric \eqref{eq:LipMetric1}. These two metrics are comparable for measures of compact support. Shamis proved that the DOSm $n_{\nu}^{(\infty)}$ satisfies the optimal bound
$$
d_{KR} (n_{\nu_1}^{(\infty)}, n_{\nu_2}^{\infty} ) \leq d_{KR}  ( \nu_1, \nu_2) .
$$
Both authors use the restrictions of the random Schr\"odinger operators to finite volumes, whereas the present work uses the infinite volume operators (one advantage is the latter allows treatment of the Bethe lattice). Shamis then uses the Ky Fan inequality for the eigenvalues whereas Kachkovskiy uses a Hilbert-Schmidt norm inequality for Lipschitz functions of operators
(see \cite{GPS}):
$$
\| f(A) - f(B) \|_2 \leq \|f \|_{Lip} \| A - B\|_2,
$$
for self-adjoint operators $A, B$ and Lipschitz functions $f$.
We thank these authors for sharing their results with us.

%We also prove refined results for random Schr\"odinger operators on $L^2(\R)$ in section \ref{subsubsec_step2lipschitzprop_d1cont1}.
 %and for one-dimensional models in \ref{sec_step2lipschitzprop_conti_lowdim}.

\vspace{.1in}
\noindent
{\bf Acknowledgements} We thank S.\ Jitomirskaya for several discussions and for the invitation to work together at UCI. We also thank I.\ Kachkovskiy, M.\ Shamis, and A.\ Skripka for several discussions on topics related to this work.

%%%%%%%%%%%%%%%%%%%%%%%%%%%%%%%%%%%%%%%%%%%%%%%%%%%%%%%%%%%%%%%%%%%%%%%%%%%%%%%%%%%%%%%%%%%%%%%%%%%%%%%%%

\section{Step 1 - Finite range reduction with resolvents} \label{sec_step1}
\setcounter{equation}{0}

In \cite{hislop_marx_1}, we studied the lattice model \eqref{eq_discreteschrodop} satisfying {\bf [Disc]} (the same as [H1] in \cite{hislop_marx_1}) under the additional hypothesis that the single-site probability measure $\nu$ has compact support. As pointed out in step 1 of section \ref{sec_intro_context} (``{\em{the finite range reduction}}''), this implied that $\omega \in \Omega \mapsto {\rm Tr} P_0 f(H_\omega) P_0$, for any polynomial $f$ depends on at most finitely many random variables. %, with a bound on $\Gamma (n)$, see \cite[Lemma 2.3]{hislop_marx_1}.

In this section, we modify step 1 of section \ref{sec_intro_context} using resolvents, a method that is flexible enough to allow us to treat 1) lattice models with non-compactly supported single-site measures having finite first moments, and 2) random
 Schr\"odinger operators on $\R^d$. In view of the final continuity result for continuum operators, Theorem \ref{thm:main-cont1}, we point out that, although this modified ``finite range reduction,'' Proposition \ref{eq:finite-range-cont1}, also holds for continuum random Schr\"odinger operators with non-compactly supported single-site measures, the estimates for the Lipschitz property for continuum operators (see step 2 of section \ref{sec_intro_context}) given in Proposition \ref{prop_lipschitz_traceclass_modif3} require compactly-supported single-site probability measures.

%%%%%%%%%%%%%%%%%%%%%%%%%%%%%%%%%%%%%%%%%%%%%%%%%%%%%%%%%%%%%%%%%%%%%%%%%%%%%%%%%%%%%%%%%%%%%%%%%%%%%%%%%

\subsection{Finite range reduction for lattice models} \label{subsec:step1-discrete}

The main result of this section is a replacement of \cite[Lemma 2.3]{hislop_marx_1} suitable for the lattice model \eqref{eq_discreteschrodop} satisfying {\bf [Disc]} with non-compactly supported single-site measures $\nu$.
%Since the higher-rank case is presented in detail in \cite{hislop_marx_1}, we will present the proofs for the rank one case in this article, although the results hold for perturbations that are uniformly of rank $N \geq 1$.
\vspace{.1in}

\noindent
{\bf Assumption 1.} We let $\nu \in \mathcal{P}(\mathbb{R})$ denote a fixed single-site probability measure {\em{with a finite first moment}}, $\mu_1[\nu] < \infty$.

\vspace{.1in}

Given $L \in \mathbb{N}$, we decompose the operator described in \eqref{eq_discreteschrodop} satisfying {\bf [Disc]} according to
\begin{equation} \label{eq_finiterange_decomp}
H_\omega = \left( - \Delta + \sum_{j \in \mathcal{J}: ~\Vert j \Vert_\infty \leq K L} \omega_j P_j  \right) + \sum_{j \in \mathcal{J}: ~\Vert j \Vert_\infty > LK} \omega_j P_j =: H_{\omega; L}^{(0)} + H_{\omega; L}^{(1)} ~\mbox{.}
\end{equation}
Observe that the potential in $H_{\omega; L}^{(0)}$ depends on only {\em{finitely-many}} random variables, $\omega_j$ with $j \in \mathcal{J}$ and $\Vert j \Vert_\infty \leq K L$.

To formulate the finite range reduction for the lattice models in \eqref{eq_discreteschrodop}, given a function $f \in \mathcal{C}_c^\infty(\mathbb{R})$ and $\beta \in \mathbb{N}$, we set
\begin{equation}\label{eq:r-norm1}
\Vert f \Vert_\beta := \sum_{j=0}^{\beta} \int_{-\infty}^\infty \vert f^{(j)} (x) \vert \langle x \rangle^{j-1} ~\mathrm{d} x ~\mbox{,}
\end{equation}
where, as usual, we set $\langle x \rangle := \sqrt{1 + \vert x \vert^2}$.

For $L \in \mathbb{N}$, we define the finite product probability measure $\nu^{(L)}$ by
 $$
\nu^{(L)}  := \bigotimes_{\{j \in \mathcal{J} ~|~ \|j\|_\infty \leq K L\}} ~ \nu .
$$
Then, we claim:

\begin{prop} \label{thm_finiterangereduction}
There exists a constant $c_1$, only depending on $d$ and $N$, such that for every single-site measure $\nu \in \mathcal{P}(\mathbb{R})$, satisfying Assumption 1, and every $L \in \mathbb{N}$, one has
\begin{equation}
n_\nu^{(\infty)}(f) = \dfrac{1}{N} \mathbb{E}_{\nu^{(L)}} [ \mathrm{Tr}(P_0 f(H_{\omega; L}^{(0)}) P_0)] + R_f[L; \nu] ~\mbox{,}
\end{equation}
where the remainder term is controlled by
\begin{equation}\label{eq_thm_finite1}
\vert R_f[L; \nu] \vert \leq c_1 \dfrac{\mu_1[\nu]}{N L}  \Vert f \Vert_{3+d} ~\mbox{.}
\end{equation}
for all $f \in \mathcal{C}_c^\beta(\mathbb{R})$ with $\beta \geq 3 + d$.
\end{prop}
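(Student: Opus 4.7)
The plan is to carry out the finite-range reduction by resolvents (rather than polynomials as in \cite{hislop_marx_1}), exploiting that
\[
R_f[L;\nu] = \tfrac{1}{N}\,\mathbb{E}_{\nu^{(\infty)}}\bigl[\mathrm{Tr}\bigl(P_0(f(H_\omega) - f(H_{\omega;L}^{(0)})) P_0\bigr)\bigr].
\]
Since $f \in \mathcal{C}_c^\beta(\mathbb{R})$ with $\beta \geq 3+d$, I would fix a Helffer--Sj\"ostrand almost analytic extension $\tilde f$ of $f$ adapted to this regularity and represent
\[
f(H_\omega) - f(H_{\omega;L}^{(0)}) = -\tfrac{1}{\pi}\int_{\mathbb{C}} \bar{\partial}\tilde f(z)\bigl[(H_\omega - z)^{-1} - (H_{\omega;L}^{(0)}-z)^{-1}\bigr]\,dA(z).
\]
The standard feature of such a $\tilde f$ is that $|\bar{\partial}\tilde f(z)|$ is supported in a neighborhood of $\mathrm{supp}(f)\subset \R$ and vanishes like $|\mathrm{Im}\,z|^{2+d}$ at the real axis, with the precise weighted $L^1$ estimate captured by the norm $\|f\|_{3+d}$ defined in \eqref{eq:r-norm1}.

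Next, the second resolvent identity together with $H_\omega - H_{\omega;L}^{(0)} = H_{\omega;L}^{(1)} = \sum_{\|j\|_\infty > KL}\omega_j P_j$ reduces the resolvent difference to
\[
P_0\bigl[(H_\omega - z)^{-1} - (H_{\omega;L}^{(0)}-z)^{-1}\bigr] P_0 = -\sum_{\substack{j\in\mathcal{J}\\ \|j\|_\infty > KL}} \omega_j\, P_0(H_\omega - z)^{-1} P_j (H_{\omega;L}^{(0)}-z)^{-1} P_0.
\]
The Combes--Thomas estimate for $-\Delta + V$ on $\ell^2(\Z^d)$ applies to any self-adjoint multiplication $V$ with constants depending only on the dimension, hence uniformly in $\omega$, yielding
\[
\bigl\|P_0 (H_\omega - z)^{-1} P_j\bigr\| \leq \frac{C_d}{|\mathrm{Im}\,z|}\, e^{-c_d|\mathrm{Im}\,z|\,\|j\|_\infty}.
\]
Bounding $|\mathrm{Tr}(\cdot)|$ by $N$ times the operator norm on the rank-$N$ range of $P_0$, using the trivial bound $\|(H_{\omega;L}^{(0)}-z)^{-1}\|\leq |\mathrm{Im}\,z|^{-1}$, taking the expectation (so that $\mathbb{E}_{\nu^{(\infty)}}[|\omega_j|] = \mu_1[\nu]$), and carrying out the standard splitting $e^{-c\eta\|j\|_\infty} = e^{-c\eta\|j\|_\infty/2}\cdot e^{-c\eta\|j\|_\infty/2}$ to bound
\[
\sum_{\|j\|_\infty > KL,\, j\in\mathcal{J}} e^{-c_d|\mathrm{Im}\,z|\,\|j\|_\infty} \leq C_{d,N}\,|\mathrm{Im}\,z|^{-d}\, e^{-c'|\mathrm{Im}\,z|\,L},
\]
yields the key pointwise-in-$z$ estimate
\[
\Bigl|\mathbb{E}_{\nu^{(\infty)}}\!\bigl[\mathrm{Tr}\bigl(P_0[(H_\omega-z)^{-1} - (H_{\omega;L}^{(0)}-z)^{-1}]P_0\bigr)\bigr]\Bigr| \leq \frac{C_{d,N}\,\mu_1[\nu]}{|\mathrm{Im}\,z|^{2+d}}\, e^{-c'|\mathrm{Im}\,z|\,L},
\]
which is precisely the schematic bound announced in \eqref{eq_finiterangeestim}.

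Finally, I would insert this estimate into the Helffer--Sj\"ostrand integral from the first step. The vanishing $|\mathrm{Im}\,z|^{2+d}$ of $\bar{\partial}\tilde f$ exactly compensates the factor $|\mathrm{Im}\,z|^{-(2+d)}$, so that the remaining integral in the imaginary direction is $\int_0^\infty e^{-c'\eta L}\,d\eta = (c'L)^{-1}$; after integrating in $x$ against the weighted $L^1$ norms that assemble $\|f\|_{3+d}$, one arrives at $|R_f[L;\nu]| \leq c_1\mu_1[\nu]\|f\|_{3+d}/(NL)$, where $c_1$ depends only on $d$ and $N$. The main technical point---and the reason the regularity must be raised from the Lipschitz class of \cite{hislop_marx_1} to $\mathcal{C}_c^{3+d}$---is calibrating the order of the almost analytic extension so that its vanishing at $\mathrm{Im}\,z = 0$ exactly cancels the $|\mathrm{Im}\,z|^{-2}$ from the two resolvents together with the $|\mathrm{Im}\,z|^{-d}$ from the lattice sum; a lower-order extension would make the $z$-integral divergent.
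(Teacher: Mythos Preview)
Your proposal is correct and follows essentially the same route as the paper: second resolvent identity to expose the tail $\sum_{\|j\|_\infty>KL}\omega_j P_j$, Combes--Thomas to obtain the pointwise-in-$z$ bound $\lesssim \mu_1[\nu]\,|\mathrm{Im}\,z|^{-(2+d)}e^{-c|\mathrm{Im}\,z|L}$ after averaging and summing, and then the Helffer--Sj\"ostrand integral with an almost analytic extension of degree $2+d$ to absorb the singularity and produce the $L^{-1}$. The only cosmetic difference is that the paper applies Combes--Thomas to \emph{both} localized resolvent factors (trace-class for one, operator norm for the other) rather than pairing one Combes--Thomas with the trivial resolvent bound as you do; either variant yields the same estimate.
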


The following proof takes advantage of almost analytic extensions and the Helffer-Sj\"ostrand functional calculus; for convenience of the reader, we briefly review some essential facts in the appendix, section \ref{sec:appendix-helffer-sj1}.
\begin{proof}
1. %Given $L \in \mathbb{N}$, we decompose the operator in (\ref{eq_discreteschrodop}) according to
%\begin{equation} \label{eq_finiterange_decomp}
%H_\omega = \left( \Delta + \sum_{ \Vert n \Vert_\infty \leq L} \omega_n \vert \delta_n \rangle \langle \delta_n \vert \right) + \sum_{\Vert n %\Vert_\infty > L} \omega_n \vert \delta_n \rangle \langle \delta_n \vert =: H_{\omega; L}^{(0)} + H_{\omega; L}^{(1)} ~\mbox{.}
%\end{equation}
%Note that the operator $H_{\omega; L}^{(0)}$ depends on only finitely many components of $\omega$.
For each fixed $z \in \mathbb{C}\setminus \mathbb{R}$, the second resolvent identity applied to the decomposition in (\ref{eq_finiterange_decomp}) gives
\begin{eqnarray}
F(\omega) & : = & \mathrm{Tr} (P_0 (H_{\omega} - z)^{-1} P_0)  \\
& = & \mathrm{Tr} (P_0 (H_{\omega; L}^{(0)} - z)^{-1} P_0) - \sum_{j \in \mathcal{J}: ~\Vert j \Vert_\infty > K L} \omega_j \mathrm{Tr} (P_0 (H_\omega - z)^{-1} P_j (H_{\omega; L}^{(0)} - z)^{-1} P_0  ) \nonumber \\
& =: & F_{\omega; L}^{(0)}(z) + F_{\omega; L}^{(1)}(z) ~\mbox{.}
\end{eqnarray}
Notice that the random variable $F_{\omega; L}^{(0)}(z)$ only depends on the {\em{finitely}} many components $\omega_j$ of $\omega$ with $\Vert j \Vert_\infty \leq K L$.

\noindent
2. We recall the Combes-Thomas estimate in the trace class. For a textbook presentation of the Combes-Thomas estimate for discrete Schr\"odinger operators on graphs, we refer to Aizenman-Warzel \cite[section 10.3]{AizenmanWarzel1}).
%, see section \ref{sec:appendix2-combes-thomas1}.
Extensions to bounds in higher trace norms may be found in \cite{saxton2014} and in \cite{shen}.
In the case of our model [H1], there exist constants $c_2, c_3, c_4, c_5 > 0$, only depending on the dimension $d$ and the rank $N$, such that for all $z \in \mathbb{C} \setminus \mathbb{R}$, one has
\beq\label{eq:combes-thomas-tr1}
\vert \mathrm{Tr} (P_0 (H_\omega - z)^{-1} P_j )  \vert \leq \|P_0 (H_\omega -  z)^{-1} P_j\|_1 \leq    \dfrac{c_2}{\vert \mathrm{Im} z \vert} \mathrm{e}^{-c_3 \vert \mathrm{Im} z \vert \vert j \vert} ~\mbox{, }
\eeq
and we also have the bound in the operator norm,
\beq\label{eq:combes-thomas-op1}
\vert \mathrm{Tr} (P_0 (H_{\omega; L}^{(0)} - z)^{-1} P_j ) \vert \leq c_4  \| P_0 (H_{\omega;L}^{(0)} - z)^{-1} P_j \| \leq \dfrac{c_5 }{\vert \mathrm{Im} z \vert} \mathrm{e}^{-c_3 \vert \mathrm{Im} z \vert \vert j \vert}.
\eeq
Using these bounds and the standard trace inequality $\| AB \|_1 \leq \|A \|_1 \|B\|$, for $A$ trace class and $B$ bounded, we thus find
\bea\label{eq:tr-est11}
\vert F_{\omega; L}^{(1)}(z) \vert & \leq  & \sum_{j \in \mathcal{J}: ~\Vert j \Vert_\infty > K L} |\omega_j|
|\mathrm{Tr} (P_0 (H_\omega - z)^{-1} P_j (H_{\omega; L}^{(0)} - z)^{-1} P_0  ) | \nonumber \\
 & \leq &  \sum_{j \in \mathcal{J}: ~\Vert j \Vert_\infty > K L} |\omega_j|  \| P_0 (H_\omega - z)^{-1} P_j \|_1 ~  \| P_j (H_{\omega; L}^{(0)} - z)^{-1} P_0 \| \nonumber \\
 &\leq & \dfrac{c_2^2 }{ \vert \mathrm{Im} z \vert^2 } \sum_{j \in \mathcal{J}: ~\Vert j \Vert_\infty > K L} \vert \omega_j \vert ~\mathrm{e}^{-2 c_3 \vert \mathrm{Im} z \vert \vert j \vert}.
\eea
Hence, averaging $F_{\omega; L}^{(1)}(z)$ with respect to the product measure $\nu^{(\infty)}$ yields
\begin{align} \label{eq_finiterange_resolvent_keybound}
\mathbb{E}_{\nu^{(\infty)}} \left[ \vert F_{\omega; L}^{(1)}(z) \vert \right] & \leq \dfrac{c_2^2  \mu_1[\nu] }{\vert \mathrm{Im} z \vert^2 }  \sum_{j \in \mathcal{J}: ~\Vert j \Vert_\infty > K L} \mathrm{e}^{-2 c_3 \vert \mathrm{Im} z \vert \vert j \vert} \leq \dfrac{c_2^2  \mu_1[\nu] }{\vert \mathrm{Im} z \vert^2 } \left[  \int_{K L}^\infty \mathrm{e}^{-2 c_3  \vert \mathrm{Im} z \vert s} ~ s^{d-1} ~\mathrm{d}s   \right] \nonumber \\
& \leq \dfrac{c_4  \mu_1[\nu] }{ \vert \mathrm{Im} z \vert^{2 + d} } \mathrm{e}^{-2d c_3 \vert \mathrm{Im} z \vert K L},
\end{align}
where $c_4$ depends on $d$ and the other constants.

\noindent
3. Let $f \in \mathcal{C}_c^\beta(\mathbb{R})$ be fixed with $\beta \geq  3 + d$. Choosing the degree of the almost analytic extension to equal $2 + d$, the Helffer-Sj\"ostrand formula (\ref{eq_helffersjostrand}) applied to the operators in the decomposition (\ref{eq_finiterange_decomp}) of $H_\omega$ and the bound in (\ref{eq_finiterange_resolvent_keybound}) yields
\begin{equation} \label{eq_expectationdecomp}
\dfrac{1}{N} \mathbb{E}_{\nu^{(\infty)}} [ \mathrm{Tr}(P_0 f(H_\omega) P_0) ] = \dfrac{1}{N}\mathbb{E}_{\nu^{(\infty)}} [ \mathrm{Tr}(P_0 f(H_{\omega; L}^{(0)}) P_0)] + R_f[L; \nu] ~\mbox{,}
\end{equation}
where
\begin{align}\label{eq_expectationdecomp2}
R_f[L; \nu] & = \dfrac{1}{N \pi} \int\int_\mathbb{C} \partial_{\overline{z}} \widetilde{f}(x,y) \cdot \mathbb{E}_{\nu^{(\infty)}} [F_{\omega; L}^{(1)}(z)] ~\mathrm{d} x ~\mathrm{d} y ~\mbox{.}
\end{align}

\noindent
4. To bound the remainder term $R_f[L; \nu]$ in (\ref{eq_expectationdecomp})--\eqref{eq_expectationdecomp2}, we use the definition of the almost analytic extension in (\ref{eq_almostanalyticextension}). Using the definition of the rescaled bump in (\ref{eq_rescaledBump}), straight-forward estimation shows that for all $z = x + i y \in \mathbb{C}$, one has
\begin{equation}
\vert \sigma_x + i \sigma_y \vert \leq \dfrac{3 \Vert \tau^\prime \Vert_\infty}{\langle x \rangle} ~\chi_U(x,y) ~\mbox{,}
\end{equation}
where we set
\begin{equation}
U:= \{ z \in \mathbb{R} ~:~ \langle x \rangle < \vert y \vert < 2 \langle x \rangle\} ~\mbox{.}
\end{equation}
Hence, letting
\begin{equation}
V:= \{ z \in \mathbb{C} ~:~  \vert y \vert < 2 \langle x \rangle\} ~\mbox{,}
\end{equation}
\eqref{eq_almostanalytic} and the bound \eqref{eq_finiterange_resolvent_keybound} result in
\bea\label{eq_almostanalytic2}
\lefteqn{| \partial_{\overline{z}} \widetilde{f}(x,y)| \E_{\nu^{(\infty)}} [ |F_{\omega;L}^{(1)} (z)|]}
  \nonumber \\
  &\leq  &
%\frac{3 c_2^2  2^{d+1}}{c_3^d}
c_4 \mu_1[\nu] \left\{ \sum_{n=0}^{2+d} |f^{(n)}(x)| \langle x \rangle^{n-3-d} \chi_U(x,y) +  | f^{(d+3)}(x)| \chi_V(x,y) \right\}  \mathrm{e}^{-2d c_3 \vert y \vert K L}. \nonumber \\
  &  &
\eea
Carrying out first the integration with respect to $y$ in the double integral (\ref{eq_expectationdecomp2}), we conclude the estimate for the remainder term as in \eqref{eq_thm_finite1}.
%, where the constant $c_1$ can be taken as
%\beq\label{eq_thm_finite2}
%c_1 =  \frac{3 c_2^2  2^{d+1}}{dc_3^{d+1}}
%\eeq

\noindent
5. Finally, since $H_{\omega; L}^{(0)}$ only depends on components $\omega_j$ of $\omega$ with $\Vert j \Vert_\infty \leq K L$, we can replace the expectation with respect to the {\em{infinite}} product measure $\nu^{(\infty)}$ on the right hand side of (\ref{eq_expectationdecomp}) by an expectation with respect to the {\em{finite}} product measure $\nu^{(L)}$. Consequently, we may write the first term on the right in \eqref{eq_expectationdecomp} as
\begin{equation}
\mathbb{E}_{\nu^{(\infty)}} [\mathrm{Tr}(P_0 f(H_{\omega; L}^{(0)}) P_0)] = \mathbb{E}_{\nu^{(L)}} [ \mathrm{Tr}(P_0 f(H_{\omega; L}^{(0)}) P_0) ] ~\mbox{.}
\end{equation}
This completes the proof of Proposition \ref{thm_finiterangereduction}.
\end{proof}

\vspace{.1in}
In summary, we achieve the finite-range reduction outlined in step 1 of section \ref{sec_intro} for the case that the probability measure $\nu$ is not necessarily compactly supported but only has finite first moment.

%%%%%%%%%%%%%%%%%%%%%%%%%%%%%%%%%%%%%%%%%%%%%%%%%%%%%%%%%%%%%%%%%%%%%%%%%%%%%%%%%%%%%%%%%%%%%%%%%%%%%%%%%

\subsection{Finite range reduction for continuum models} \label{subsec:step1-continuum}

We consider the continuum model defined in hypothesis { \bf[{Cont}]}.  For the continuum model  (\ref{eq:random-pot-cont1}), the analog of the projection $P_j$ in \eqref{eq_finiterange_decomp} is multiplication by the function $\phi_j(x) = \phi(x-j)$,
for $j \in \mathbb{Z}^d$.
%In the simple case $\phi(x) = \chi_{\Lambda_0}$, the characteristic function on the unit cube centered at the origin (we write $\chi_0$ for short).
In contrast to the discrete case, this multiplication operator, and likewise the multiplication operator $\chi_0$ appearing in the definition of the DOSm (\ref{eq_DOSm_cont}), is  no longer finite rank, nor is it trace class relative $H_\omega$ for dimensions $d \geq 2$. We will overcome this by ``regularizing the operator,'' i.e. using the fact that for $m > \frac{d}{2}$, the operator $\chi_0 R_\omega(-i)^m$ is trace class; here, we denote $R_\omega(z) := (H_\omega - z)^{-1}$ and $R_0(z)=(-\Delta - z)^{-1}$, for $z \in \mathbb{C}\setminus \mathbb{R}$. %This projector is no longer finite rank so the operator $\chi_0 R_\omega (z) \chi_0$ is not trace class for dimensions $d \geq 2$. This operator can be regularized by using powers of the resolvent $R_\omega(-i)^m$. For $m > \frac{d}{2}$, the operator $\chi_0 R_\omega(-i)^m$ is trace class.
Applying this, we modify the operator $\chi_0 f(H_\omega)$ to $\chi_0 F_m (H_\omega)$, where $F_m(H_\omega) := R_\omega(-i)^{-m} f(H_\omega)$ is bounded for every $f \in \mathcal{C}_c(\mathbb{R})$.
We assume that the single-site probability measure is compactly supported in the bounded interval $[-C, C]$.
%In this section, we allow for non-compactly supported single-site probability measures satisfying:

%\vspace{.1in}
%
%\noindent
%{\bf Assumption 2.} %Let $\nu \in \mathcal{P}(\mathbb{R})$ be a fixed single-site probability measure with finite $p$-moment: $\mu_p[\nu] := \int_\R |s|^p ~ d \nu (s) < \infty$, for $p=1, 2, \ldots, m-1$, where $m > \frac{d}{2}$.
%Let $\nu \in \mathcal{P}(\mathbb{R})$ be a fixed single-site probability measure with finite first $p$ moments $\mu_p[\nu]$, for $p=1, 2, \ldots, m$, where $m > \frac{d}{2}$.
%
%\vspace{.1in}

We begin the proof in a manner similar to that in section \ref{subsec:step1-discrete}. Given $L \in \mathbb{N}$, we decompose the operator in (\ref{eq_contschrodop}) according to
\begin{equation} \label{eq_finiterange_decomp-cont1}
H_\omega = \left( - \Delta + \sum_{j \in \mathbb{Z}^d: ~\Vert j \Vert_\infty \leq L} \omega_j \phi_j(x)  \right) + \sum_{j \in \mathbb{Z}^d: ~\Vert j \Vert_\infty > L} \omega_j \phi_j(x) =: H_{\omega; L}^{(0)} + H_{\omega; L}^{(1)} ~\mbox{.}
\end{equation}
As before, because of the truncation, $H_{\omega; L}^{(0)}$ depends only on finitely-many random variables associated with the lattice points $j \in \Z^d$ satisfying $\|j\|_\infty \leq L$. Associated with the decomposition in (\ref{eq_finiterange_decomp-cont1}), for $z \in \mathbb{C} \setminus \mathbb{R}$, we let $R_{\omega;L}^{(l)}(z):=( H_{\omega; L}^{(l)} - z)^{-1}$, for $j=0,1$.

We will use the following well known result, see e.g. Corollary 4.8 in \cite{simon_traceclassideals_book} (which also includes a discussion of its history, including references). Recall that a function $f \in L^2_\delta(\mathbb{R}^d)$, for some $\delta > 0$, if $\langle x \rangle^\delta f(x) \in L^2(\mathbb{R}^d)$, in which case one sets $\Vert f \Vert_{L^2_\delta} := \Vert \langle x \rangle^\delta f(x) \Vert_{L^2}$.
\begin{theorem} \label{thm_traceclassop}
Let $f,g \in L^2_\delta(\mathbb{R}^d)$ for some $\delta > d/2$. Then, $f(x) g(-i \nabla) \in \mathcal{S}_1(L^2(\mathbb{R}^d))$ and, for some $C_S = C_S(\delta, d) >0$, one has
\begin{equation}
\Vert f(x) g(-i \nabla) \Vert_{\mathcal{S}_1} \leq C_S \Vert f \Vert_{L^2_\delta} \Vert g \Vert_{L^2_\delta} ~\mbox{.}
\end{equation}
\end{theorem}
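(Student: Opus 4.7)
The result is a classical Birman--Solomyak-type trace ideal estimate, and I would establish it via a phase-space partition argument. The starting point is the elementary Hilbert--Schmidt identity
\[
\|f(x)\, g(-i\nabla)\|_{\mathcal{S}_2} \;=\; (2\pi)^{-d/2}\,\|f\|_{L^2}\,\|g\|_{L^2},
\]
which follows by computing the integral kernel $(2\pi)^{-d/2}\, f(x)\, \check g(x-y)$ of $f(x)g(-i\nabla)$ and applying Plancherel on $L^2(\mathbb{R}^{2d})$. To upgrade this from $\mathcal{S}_2$ to $\mathcal{S}_1$, I would partition both $\mathbb{R}^d$ and its Fourier dual into unit cubes $Q_n := n + [0,1)^d$ and $\widetilde Q_m := m + [0,1)^d$ for $n, m \in \mathbb{Z}^d$, and set $f_n := f\,\chi_{Q_n}$, $g_m := g\,\chi_{\widetilde Q_m}$. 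By linearity one has the formal expansion
\[
f(x)\, g(-i\nabla) \;=\; \sum_{n, m \in \mathbb{Z}^d} f_n(x)\, g_m(-i\nabla).
\]

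The central technical step is a uniform trace-class bound on each phase-space cell,
\[
\|f_n(x)\, g_m(-i\nabla)\|_{\mathcal{S}_1} \;\leq\; C\, \|f_n\|_{L^2}\, \|g_m\|_{L^2},
\]
with $C = C(d)$ independent of $n, m$. The heuristic is that each factor is localized in a unit cube in position \emph{and} a unit cube in frequency, hence occupies a phase-space volume of order $(2\pi)^{-d}$ and has effectively universal rank. The cleanest way to make this precise is to combine the Hilbert--Schmidt identity (applied to $f_n, g_m$) with a universal $\mathcal{S}_1$-estimate on the concentration operator $\chi_{Q_n}(x)\,\chi_{\widetilde Q_m}(-i\nabla)$ that comes from the Landau--Pollak count of significant singular values of truncated projection products. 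This unit-cube estimate is the genuine uncertainty-principle input and is the one nontrivial ingredient of the proof.

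Granting the key estimate, the remainder of the argument is summation. The triangle inequality for $\mathcal{S}_1$ yields
\[
\|f(x)\, g(-i\nabla)\|_{\mathcal{S}_1} \;\leq\; C\, \Bigl(\sum_{n \in \mathbb{Z}^d} \|f_n\|_{L^2}\Bigr)\Bigl(\sum_{m \in \mathbb{Z}^d} \|g_m\|_{L^2}\Bigr),
\]
and Cauchy--Schwarz combined with the hypothesis $\delta > d/2$ gives
\[
\sum_{n \in \mathbb{Z}^d} \|f_n\|_{L^2} \;\leq\; \Bigl(\sum_{n} \langle n \rangle^{-2\delta}\Bigr)^{\!1/2} \Bigl(\sum_{n} \langle n\rangle^{2\delta}\|f_n\|_{L^2}^2\Bigr)^{\!1/2} \;\leq\; C_\delta\, \|f\|_{L^2_\delta},
\]
where the first factor is finite precisely because $2\delta > d$ and the second is controlled by $\|f\|_{L^2_\delta}$ using the comparability $\langle x\rangle \asymp \langle n\rangle$ on $Q_n$. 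The analogous bound for $g$ yields the claimed estimate with $C_S = C \, C_\delta^2$. As indicated, the main obstacle is the unit-cube trace-class bound on the individual pieces, which is where the genuine phase-space uncertainty analysis enters.
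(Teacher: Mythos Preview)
The paper does not give its own proof of this theorem: it is quoted verbatim as a known result, with a reference to Simon's \emph{Trace Ideals} (Corollary~4.8 there) for the proof and its history. So there is no ``paper's own proof'' to compare against; your proposal is effectively being measured against the standard literature argument.

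Your outline is exactly the Birman--Solomyak strategy that underlies Simon's result: dyadic (unit-cube) decomposition in both position and frequency, a uniform $\mathcal{S}_1$ bound on each phase-space cell, and then summation via Cauchy--Schwarz using $\delta > d/2$ to pass from $L^2_\delta$ to the amalgam space $\ell^1(L^2)$. The summation step and the reduction to the cell estimate are carried out correctly.

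One comment on the key step. You write that the bound $\|f_n(x)g_m(-i\nabla)\|_{\mathcal{S}_1}\le C\|f_n\|_{2}\|g_m\|_{2}$ follows by ``combining the Hilbert--Schmidt identity with a universal $\mathcal{S}_1$ estimate on $\chi_{Q_n}(x)\chi_{\widetilde Q_m}(-i\nabla)$.'' Be a bit careful here: knowing $\|\chi_{Q_n}(x)\chi_{\widetilde Q_m}(-i\nabla)\|_{\mathcal{S}_1}\le C$ and then sandwiching with $f_n(x)$ and $g_m(-i\nabla)$ only gives control in terms of $\|f_n\|_\infty\|g_m\|_\infty$, not $\|f_n\|_2\|g_m\|_2$; and expanding $f_n,g_m$ in the Fourier basis of the cube and using the concentration-operator bound termwise yields only an $\ell^1\times\ell^1$ (rather than $\ell^2\times\ell^2$) estimate on the Fourier coefficients. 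The stated $L^2\times L^2\to\mathcal{S}_1$ cell bound \emph{is} true, but its proof needs a sharper input than the bare trace-class bound on the concentration operator---one needs the exponential decay of its singular values past the time--bandwidth threshold (the Landau--Pollak--Slepian estimate you allude to), or an equivalent factorization argument as in Birman--Solomyak. You correctly flag this as the nontrivial step; just be aware that the combination you describe needs to be implemented with some care.
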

Theorem \ref{thm_traceclassop} in particular implies that for $m > d/2$, the operator $\chi_0 R_0(-i)^m$ is trace class.

To formulate the finite range reduction for the continuum models, we introduce the norms
\beq\label{eq:r-norm2}
\| f \|_{k;l} := \sum_{j=0}^k \int_\R \langle s \rangle^{l +j - 1} |f^{(j)}(s)| ~\mathrm{d}s ~\mbox{, for $l, k \in \mathbb{N} \cup \{0\}$,}
\eeq
and $f \in \mathcal{C}_c(\mathbb{R})$. We note that $\| f \|_{k;0} = \| f \|_k$ defined in \eqref{eq:r-norm1}.

\begin{prop} \label{thm_finiterangereduction-cont1}
Let $m > \frac{d}{2}$, and assume that the single-site potential $\phi$ in \eqref{eq:random-pot-cont1} satisfies $\phi \in C_c^{2(m-1)} (\R^d; \R)$ and the single-site probability measures belong to $\mathcal{P}([-C,C])$ . There exists a constant $c_4 = c_4(d, \phi, C)$, depending only on $d$, $C$, and $\| D^\alpha \phi \|_\infty$, for $|\alpha| \leq 2(m-1)$, such that for every single-site measure $\nu \in \mathcal{P}([-C,C])$, for every $L \in \mathbb{N}$, and for every $f \in \mathcal{C}_c^\beta(\mathbb{R})$, with $\beta \geq d + 3$, one has
\begin{equation}
n_\nu^{(\infty)}(f) = \mathbb{E}_{\nu^{(L)}} [ \mathrm{Tr}(\chi_0 f(H_{\omega; L}^{(0)}) \chi_0)] + R_f[L; \nu] ~\mbox{,}
\end{equation}
where
\begin{equation}
\vert R_f[L; \nu] \vert \leq  \dfrac{c_4}{L}  \Vert f \Vert_{3+d; m} ~\mbox{.}
\end{equation}
%where $\mu_{max}^{(m)}[\nu] := \max_{j=1, \ldots, m}\mu_j[\nu]$.
\end{prop}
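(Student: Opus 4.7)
My plan is to adapt the proof of Proposition \ref{thm_finiterangereduction} to the continuum setting. The two genuinely new ingredients, relative to the lattice case, are: (i) the need to regularize $\chi_0$, which is neither finite rank nor $H_\omega$-relative trace class for $d \geq 2$, and (ii) the use of compactness of $\mathrm{supp}(\nu) \subseteq [-C,C]$, which replaces the finite first-moment hypothesis with the almost sure bound $|\omega_j| \leq C$ and ensures $\|V_\omega\|_\infty \leq C$ uniformly in $\omega$.

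Following the template of the discrete case, I would choose an almost analytic extension $\widetilde{f}$ of degree $d+2$ and invoke the Helffer--Sj\"ostrand formula to write
\begin{equation}
\mathrm{Tr}\bigl(\chi_0 [f(H_\omega) - f(H_{\omega;L}^{(0)})] \chi_0\bigr) = -\frac{1}{\pi} \int\!\!\int_{\C} \partial_{\bar z}\widetilde{f}(x,y)\, \mathrm{Tr}\bigl(\chi_0[R_\omega(z) - R_{\omega;L}^{(0)}(z)]\chi_0\bigr)\, dx\, dy.
\end{equation}
The second resolvent identity, applied to the decomposition \eqref{eq_finiterange_decomp-cont1}, then expands the inner difference as
\begin{equation}
\chi_0[R_\omega(z) - R_{\omega;L}^{(0)}(z)]\chi_0 = -\sum_{\|j\|_\infty > L} \omega_j\, \chi_0 R_\omega(z) \phi_j R_{\omega;L}^{(0)}(z) \chi_0,
\end{equation}
so the task reduces to a uniform trace-norm bound on each summand together with summable decay in $|j|$. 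The continuum Combes--Thomas estimate in operator norm supplies $\|\phi_j R_{\omega;L}^{(0)}(z)\chi_0\| \leq c\,|\mathrm{Im}\,z|^{-1} e^{-c'|\mathrm{Im}\,z|\,|j|}$, while the trace-class regularity needed to make sense of the summand is provided by Theorem \ref{thm_traceclassop} applied (after appropriate commutator arguments using $\phi \in C_c^{2(m-1)}$) to the block $\chi_0 R_\omega(-i)^m$. Combining these with the a.s.\ bound $|\omega_j|\leq C$ and bounding the sum over $\|j\|_\infty > L$ by the integral $\int_L^\infty e^{-c'|\mathrm{Im}\,z|r} r^{d-1}\,dr$ (exactly as in the derivation of \eqref{eq_finiterange_resolvent_keybound}) yields
\begin{equation}
\E_{\nu^{(\infty)}}\bigl[ \bigl| \mathrm{Tr}(\chi_0[R_\omega(z) - R_{\omega;L}^{(0)}(z)]\chi_0) \bigr| \bigr] \leq \frac{c_*\,\langle z \rangle^m}{|\mathrm{Im}\,z|^{2+d}}\, e^{-c''|\mathrm{Im}\,z|\, L},
\end{equation}
where the extra polynomial factor $\langle z \rangle^m$, compared with the discrete estimate, tracks the $m$ regularizing resolvent powers that have been introduced.

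Inserting this bound into the Helffer--Sj\"ostrand integral and performing the $y$-integration exactly as in step~4 of the proof of Proposition \ref{thm_finiterangereduction}, the factor $\langle z \rangle^m \sim \langle x \rangle^m$ on $\mathrm{supp}(\partial_{\bar z}\widetilde{f})$ produces precisely the $l=m$ weight appearing in the definition \eqref{eq:r-norm2} of $\|f\|_{d+3;m}$, giving $|R_f[L;\nu]| \leq c_4 L^{-1} \|f\|_{d+3;m}$. Since $H_{\omega;L}^{(0)}$ depends only on the finitely many random variables $\omega_j$ with $\|j\|_\infty \leq L$, the $\nu^{(\infty)}$-expectation of the truncated trace collapses to a $\nu^{(L)}$-expectation, completing the argument.

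The principal obstacle I anticipate is the uniform trace-class bound on $\chi_0 R_\omega(-i)^m$: Theorem \ref{thm_traceclassop} directly controls $\chi_0 R_0(-i)^m$, but passing from the free resolvent to $R_\omega(-i)^m$ requires iterating the resolvent identity $R_\omega(-i) = R_0(-i) - R_0(-i)V_\omega R_\omega(-i)$ up to $m$ times and commuting $\chi_0$ through the resulting operator products via brackets of the form $[\Delta,\phi_j]$. Each such commutation can produce up to two derivatives of $\phi$, which is precisely why the hypothesis $\phi \in C_c^{2(m-1)}$ enters the statement; the constants remain uniform in $\omega$ thanks to $\|V_\omega\|_\infty \leq C$, so that the final constant $c_4$ depends only on $d$, $C$, and $\|D^\alpha\phi\|_\infty$ for $|\alpha|\leq 2(m-1)$, as required.
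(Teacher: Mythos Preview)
Your overall strategy---Helffer--Sj\"ostrand representation, second resolvent identity along the decomposition \eqref{eq_finiterange_decomp-cont1}, Combes--Thomas decay, and summation over $\|j\|_\infty > L$---matches the paper's. The gap lies in how you propose to regularize the trace. You write the integrand as $\mathrm{Tr}\bigl(\chi_0 R_\omega(z)\phi_j R_{\omega;L}^{(0)}(z)\chi_0\bigr)$ and assert that inserting $m$ powers of $R_\omega(-i)$ costs only a factor $\langle z\rangle^m$. But the natural factorization $\chi_0 R_\omega(z) = \bigl[\chi_0 R_\omega(-i)^m\bigr]\cdot\bigl[(H_\omega + i)^m R_\omega(z)\bigr]$ fails: the second factor has spectral representation $\lambda\mapsto(\lambda+i)^m/(\lambda-z)$, which is unbounded on $\sigma(H_\omega)\subset[-C,\infty)$ whenever $m\geq 2$. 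So the claimed bound $c_*\langle z\rangle^m |\mathrm{Im}\,z|^{-(2+d)} e^{-c''|\mathrm{Im}\,z|L}$ cannot be obtained by the mechanism you describe, and simultaneously retaining the exponential decay in $|j|$ while securing a trace-class factor is the genuine difficulty of the proof.

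The paper avoids this by regularizing \emph{before} invoking Helffer--Sj\"ostrand: one sets $F_m(s):=(s+i)^m f(s)$, writes $\mathrm{Tr}(\chi_0 f(H_\omega)\chi_0)=\mathrm{Tr}(\chi_0 R_\omega(-i)^m F_m(H_\omega)\chi_0)$, and applies the formula to $F_m$, so that the integrand is $\mathrm{Tr}\bigl[\chi_0 R_\omega(-i)^m R_\omega(z)\chi_0\bigr]$ with only bounded factors (this is also where the weight $\langle x\rangle^m$ in $\|f\|_{d+3;m}$ actually originates). The price is that the resolvent identity must now be applied to \emph{both} $R_\omega(z)$ and $R_\omega(-i)^m$, producing a second remainder $R_{2,f}[L,\nu]$ coming from the telescoping expansion $R_\omega(-i)^m - R_{\omega;L}^{(0)}(-i)^m = -\sum_{\ell=1}^m R_{\omega;L}^{(0)}(-i)^\ell H_{\omega;L}^{(1)} R_\omega(-i)^{m+1-\ell}$. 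Controlling these terms for all $1\leq\ell\leq m$ requires the Combes--Thomas estimate in the Schatten $p$-norms (not merely the operator norm) together with a H\"older-index argument matching the available resolvent powers to admissible Schatten exponents; none of this appears in your sketch. The first remainder $R_{1,f}$ is closer to what you outline and uses Lemma~\ref{lemma:rel-resolv-bd1} to pass from $\|\chi_0 R_{\omega;L}^{(0)}(-i)^m\|_1$ to $\|\chi_0 R_0(-i)^m\|_1$; that lemma, not the commutator argument you allude to in the final paragraph, is where the hypothesis $\phi\in C_c^{2(m-1)}$ is actually consumed.
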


\begin{proof}
1. We choose $m > \frac{d}{2}$ and let
\begin{equation}
F_m(s) := (s + i)^m f(s) ~\mbox{,}
\end{equation}
so that
\beq\label{eq:reg-resolv1}
{\rm Tr} \chi_0 f(H_\omega) \chi_0 = {\rm Tr} \chi_0 R_\omega(-i)^m  F_m(H_\omega) \chi_0 ~\mbox{.}
\eeq
It will follow from the argument below that the trace on the right hand side of (\ref{eq:reg-resolv1}) is indeed finite.
Referring to the decomposition in (\ref{eq_finiterange_decomp-cont1}), we wish to replace $f(H_\omega)$ by $f(H^{(0)}_{\omega,L})$. To do so, we first express the right hand side of \eqref{eq:reg-resolv1} using the Helffer-Sj\"ostrand formula:
\beq\label{eq:hl-sj1}
{\rm Tr} \chi_0 R_\omega (- i)^m  F_m(H_\omega) \chi_0 = \frac{1}{\pi} \int_\C
\partial_{\overline{z}} \widetilde{F}_m(x,y) ~ {\rm Tr} [ \chi_0 R_\omega (- i)^m  R_\omega(z) \chi_0 ] ~\mathrm{d}x \mathrm{d}y ~\mathrm{.}
\eeq
Next, we rewrite $\chi_0 R_\omega (-i)^m  R_\omega (z) \chi_0$ using the second resolvent identity $R_\omega (z) = R_{\omega,L}^{(0)}(z) -   R_{\omega,L}^{(0)}(z) H_{\omega,L}^{(1)}(z) R_\omega(z)$ which yields
\bea\label{eq:reg-resolv3}
\chi_0 R_\omega (-i)^m  R_\omega (z) \chi_0 & = & \chi_0 R^{(0)}_{\omega,L}(- i)^m R^{(0)}_{\omega,L}(z) \chi_0  \nonumber \\
 & & - \chi_0 R^{(0)}_{\omega,L}(- i)^m  R_{\omega,L}^{(0)}(z) H_{\omega; L}^{(1)} R_\omega(z) \chi_0                                   \nonumber \\
  & & + \chi_0 \left( R_\omega (- i)^m -  R^{(0)}_{\omega,L}(- i)^m \right) R_\omega(z) \chi_0
 \eea

\noindent
2. Substituting \eqref{eq:reg-resolv3} into the integral of the Helffer-Sj\"ostrand formula \eqref{eq:hl-sj1}, we thus obtain
\beq\label{eq:finite-range-cont1}
\E_{\nu^{(\infty)}} \{ \mathrm{Tr} \chi_0 f(H_\omega ) \chi_0 \} = \E_{\nu^{(\infty)}} \{ \mathrm{Tr} \chi_0 f(H^{(0)}_{\omega, L} ) \chi_0 \} + R_{1,f}[L,\nu] + R_{2,f} [L,\nu]
\eeq
where
\beq\label{eq:finite-range-cont-error1}
R_{1,f}[L,\nu] := -\frac{1}{\pi} \E_{\nu^{(\infty)}} \int_{\C} \partial_{\overline{z}} \widetilde{F}_m (x,y) ~ \mathrm{Tr} \{ \chi_0 (R^{(0)}_{\omega,L}(-i))^m R^{(0)}_{\omega,L} (z) H_{\omega; L}^{(1)} R_\omega(z) \chi_0 \} ~\mathrm{d}x \mathrm{d}y ~\mbox{,}
\eeq
and
\beq\label{eq:finite-range-cont-error2}
R_{2,f}[L,\nu] := -\frac{1}{\pi} \sum_{\ell =1}^m  \E_{\nu^{(\infty)}} \int_{\C} \partial_{\overline{z}} \widetilde{F}_m (x,y) ~ \mathrm{Tr} \{ \chi_0 (R^{(0)}_{\omega,L} (-i))^\ell H_{\omega; L}^{(1)} (R_{\omega}(-i))^{m+1 -\ell} R_\omega (z) \chi_0 \} ~\mathrm{d}x \mathrm{d}y ~\mbox{.}
\eeq

\noindent
3. We will estimate the remainder terms \eqref{eq:finite-range-cont-error1} and \eqref{eq:finite-range-cont-error2} using
%the finiteness of the first $m$ moments of $\nu$ (Assumption 2) and
the Combes-Thomas estimate for continuum Schr\"odinger operators, see also, for example, \cite{barbaroux1}.
Finally, to control the dependence on the random variable we will employ Lemma \ref{lemma:rel-resolv-bd1}, which we prove at the end of this section.

For the term $R_{1,f}[L, \nu]$ in \eqref{eq:finite-range-cont-error1}, we have
\bea\label{eq:finite-range-cont-error1.2}
|R_{1,f}[L,\nu]| &  \leq & \frac{1}{\pi} \E_{\nu^{(\infty)}} \left\{ \int_{\C} \| \chi_0 (R^{(0)}_{\omega,L}(-i))^m \|_1 |\partial_{\overline{z}} \widetilde{F}_m (x,y)| ~ \| R^{(0)}_{\omega,L} (z)\| ~ \| H_{\omega; L}^{(1)} R_{\omega}(z) \chi_0 \| \right\} ~\mathrm{d}x \mathrm{d}y \nonumber \\
 & \leq & \frac{1}{\pi} \int_{\C} \frac{1}{\vert y \vert} |\partial_{\overline{z}}  \widetilde{F}_m (x,y)| ~ \E_{\nu^{(\infty)}} \left\{
 \| \chi_0 (R^{(0)}_{\omega,L}(-i))^m \|_1~ \| H_{\omega; L}^{(1)} R^{(0)}_{\omega}(z) \chi_0 \| \right\} ~\mathrm{d}x \mathrm{d}y ~\mbox{.} \nonumber \\
  & &
\eea
To control the random variable on the last line in \eqref{eq:finite-range-cont-error1.2}, we first note that by Lemma \ref{lemma:rel-resolv-bd1} we have
\beq\label{eq:rel-resolv-bd3}
 \| \chi_0 (R^{(0)}_{\omega,L}(-i))^m \|_1 \leq C_m(\omega_L) \| \chi_0 R_0(-i))^m \|_1,
\eeq
 for a polynomial $C_m(\omega_L)$ of degree $m$ in the random variables $\omega_L := \{ \omega_n ~| n \in \Z^d , \| n \|_\infty \leq L \}$.

\noindent
4. For the second factor of the expectation in (\ref{eq:finite-range-cont-error1.2}), we use the Combes-Thomas estimate in the operator norm. Denote by $\chi_n(x) = \chi_0(x - n)$ the characteristic function of the unit cube centered at $n \in \mathbb{Z}^d$. Then, by the definition of $H_{\omega; L}^{(1)}$ in \eqref{eq_finiterange_decomp-cont1}, the fact that $\phi_n = \phi_n \cdot \chi_n \leq \chi_n$ (see hyothesis { \bf[{Cont}]}), and the Combes-Thomas estimate \eqref{eq:combes-thomas-op1}, we have
\bea\label{eq:appl-ct-cont1}
  \| H_{\omega; L}^{(1)} R_{\omega}(z) \chi_0 \|   & \leq & \sum_{n \in \Z^d : \|n\|_\infty \geq L} ~ |\omega_n|  \|\chi_{n}  R_{\omega}(z) \chi_0 \|    \nonumber \\
 & \leq &  \dfrac{c_5}{\vert \mathrm{Im} z \vert}  \sum_{n \in \Z^d : \|n\|_\infty \geq L} ~ |\omega_n|   \mathrm{e}^{- c_6 \vert n \vert \vert \mathrm{Im} z \vert } ~\mbox{.}
\eea
Combining \eqref{eq:rel-resolv-bd3} and \eqref{eq:appl-ct-cont1}, we thus find
\bea\label{eq:rel-resolv-bd4}
\lefteqn{\E_{\nu^{(\infty)}} \left\{
 \| \chi_0 (R^{(0)}_{\omega,L}(-i))^m \|_1~ \| H_{\omega; L}^{(1)} R^{(0)}_{\omega,L}(z) \chi_0 \| \right\} } \nonumber \\
 & \leq  & \dfrac{c_5}{\vert \mathrm{Im} z \vert} \| \chi_0 R_0(-i))^m \|_1 \sum_{n \in \Z^d : \|n\|_\infty \geq L} \E_{\nu^{(\infty)}} \left\{ C_m(\omega_L) |\omega_n| \right\}
 \mathrm{e}^{- c_6 \vert n \vert \vert \mathrm{Im} z \vert } ~\mbox{.}
 \eea
We note that here we use that our choice $m > d/2$ ensures that $\| \chi_0 R_0(-i))^m\Vert_1$ is finite as a consequence of Theorem \ref{thm_traceclassop}.
Finally, taking the degree of the almost analytic extension $\widetilde{F}_m (z)$ to be $d+2$ and using (\ref{eq:finite-range-cont-error1.2}) and (\ref{eq:rel-resolv-bd4}), analogous estimation as in part 4 of the proof of Proposition \ref{thm_finiterangereduction}, implies that for some constant $C_m>0$ (only depending on $m$ and $C$), one has
\bea\label{eq:appl-ct-cont2}
% |R_{1,f}[L,\nu]| & \leq & C_m \frac{\mu_1 [\nu]}{\pi} ~  \|\chi_0 (R^{(0)}_{\omega,L}(-i))^m \Vert_1  \int_{\C} \frac{|\partial_{\overline{z}} \tilde{F}_m (z)|}{\vert y \vert^{d+2}} e^{- c_6 \vert y \vert } ~dx ~dy \nonumber \\
%|R_{1,f}[L,\nu]| & \leq & \frac{C_m \mu_1 [\nu]}{\pi} ~  \E \{ \|\chi_0 (R^0_{\omega,L}(-i))^m \|_1 \} \|F_m^{d+3}\|_\infty  \nonumber \\
|R_{1,f}[L,\nu]|  & \leq & \dfrac{C_m}{L} \|\chi_0 (R_0(-i))^m \Vert_1 ~ \| f \|_{d+3; m} ~\mbox{,}
\eea
%where $\|\chi_0 (R^0_{\omega,L}(-i))^m \|_1$ was replaced the constant $C_0 := \| \chi_0 R_0(-i)^m\|_1 < \infty$
%using Lemma \ref{lemma:rel-resolv-bd1}.
where the norm $ \| f \|_{d+3; m}$ was defined in \eqref{eq:r-norm2}.
%\beq\label{eq_contschrodop_delta2}
%\| f \|_{r;l} := \sum_{k=0}^r \int_\R \langle s \rangle^{l +k - 1} |f^{(k)}(s)| ~\mathrm{d}s ~\mbox{, for $l, k \in \mathbb{N}$.}
%\eeq

\noindent
5. The estimate for $R_{2,f}[L,\nu]$ is similar:
\beq\label{eq:finite-range-cont-error2.2}
|R_{2,f}[L,\nu]| \leq \frac{1}{\pi} \sum_{\ell =1}^m  \int_{\C} \partial_{\overline{z}} | \widetilde{F}_m (z)| ~
\E_{\nu^{(\infty)}} \| \chi_0 (R^{(0)}_{\omega,L} (-i))^\ell H_{\omega; L}^{(1)} (R_{\omega}(-i))^{m+1 -\ell} R_\omega (z) \chi_0 \|_1 ~\mathrm{d}x \mathrm{d}y ~\mbox{.}
\eeq
We expand $H_{\omega; L}^{(1)}$ over the lattice points $\| n\|_\infty > L$, and use the H\"older inequality in the Schatten trace norms with H\"older pairs $(p,q)$ satisfying either  $p, q > 1$ and $\frac{1}{p} + \frac{1}{q} = 1$, or $(p=1, q=0)$, or $(p=0, q=1)$, where $p=0$ and $q=0$ denote the operator norm. The expectation in \eqref{eq:finite-range-cont-error2.2} is
then  bounded above as
\bea
\label{eq:finite-range-cont-error2.31}
\lefteqn{ \E_{\nu^{(\infty)}} \{  \| \chi_0 (R^{(0)}_{\omega,L} (-i))^\ell H_{\omega; L}^{(1)} (R_{\omega}(-i))^{m+1 -\ell}  R_\omega (z) \chi_0 \|_1 \}  }  \nonumber \\
 & \leq  & \sum_{n \in \mathbb{Z}^d: \Vert n \Vert_\infty \geq L} \E_{\nu^{(\infty)}} \{ \vert \omega_n \vert   \| \chi_0 (R^{(0)}_{\omega,L} (-i))^\ell \chi_n \|_p  ~  \|  (R_{\omega}(-i))^{m+1 -\ell} \chi_0 \|_q \cdot  \| R_\omega (z)  \|  \} .
  \nonumber \\
   & &
\eea
In order to extract exponential decay of the Schatten trace-norms of the localized resolvents in \eqref{eq:finite-range-cont-error2.31}, we need Combes-Thomas estimates in the Schatten trace-norms, such as those obtained in \cite{saxton2014, shen}.
Let $\| A \|_p$ denote the norm of $A$ in the $p^{\rm th}$-trace ideal, for $p \geq 1$. Then, it is proven in \cite{saxton2014, shen} that   there exists finite constants $C, c > 0$, depending on $| \Im z| \in \mathbb{C}\setminus \mathbb{R}$ and $p$, such
 that for all $m,n \in \mathbb{}Z^d$ we have
\beq\label{eq:ct-trclass1}
\| \chi_m R_\omega(z)^k \chi_n \|_p \leq C e^{- c \vert m-n \vert}  ~~\mbox{, provided that} ~~k > \frac{d}{2p} ~\mbox{,}
\ee
The same estimate holds if $R_\omega(z)$ if replaced by $R^{(0)}_{\omega,L}(z)$. Notice that we will use (\ref{eq:ct-trclass1}) to control the integrand in (\ref{eq:finite-range-cont-error2.2}), in particular $z=-i$ is {\em{fixed}} whence the $z$-dependence of the constants in (\ref{eq:ct-trclass1}) will not be relevant here.
%We first consider  the endpoints $\ell = 1$ and $\ell = m$.  For the $\ell = 1$ term in  \eqref{eq:finite-range-cont-error2.2}, %the expectation is bounded above by
%\beq\label{eq:finite-range-cont-error2.21}
%\E_{\nu^{(\infty)} | \omega_n} \| \chi_0 (R^{(0)}_{\omega,L} (-i))^1  H_{\omega; L}^{(1)} (R_{\omega}(-i))^{m+1 -\ell} %R_\omega (z) \chi_0 \|_1 ~\mathrm{d}x \mathrm{d}y .
%\eeq
In order to estimate the factor $\|  (R_{\omega}(-i))^{m+1 -\ell} \chi_0 \|_q $ on the right of the last line of  \eqref{eq:finite-range-cont-error2.31}, we insert the partition of unity $1 = \sum_{j \in \Z^d} \chi_j$ before the resolvent and obtain
\beq\label{eq:second-factor1}
\|  (R_{\omega}(-i))^{m+1 -\ell} \chi_0 \|_q  \leq \sum_{j \in \Z^d} \| \chi_j (R_{\omega}(-i))^{m+1 -\ell} \chi_0 \|_q .
\eeq
For $q \geq 1$ so that $(m + 1 - \ell) > \frac{d}{2q}$, the Combes-Thomas estimate \eqref{eq:ct-trclass1} implies that there are constants $C', C, c > 0$, depending on $z=-i$ and $q$, so that
\bea\label{eq:second-factor2}
\|  (R_{\omega}(-i))^{m+1 -\ell} \chi_0 \|_q & \leq & \sum_{j \in \Z^d} \| \chi_j (R_{\omega}(-i))^{m+1 -\ell} \chi_0 \|_q  \nonumber \\
 & \leq & \sum_{j \in \Z^d} C' e^{ c | j|}  \leq C < \infty.
\eea
%%%%%%%%%%%%%%%%%%%%%%%%%%%%%%%%%%%%%%%%%%%%%%%%%%%%%%%%%%%%%%%%%%%%%%%%%%%%%%%%%%%%%%%%%%%%%%%%

\noindent
6. Given these preliminaries, we first consider the endpoints $\ell = 1$ and $\ell = m$ in \eqref{eq:finite-range-cont-error2.31}.  For the $\ell = 1$ term, we take $(p,q) = (0,1)$ so the expectation is bounded above by
\beq\label{eq:finite-range-cont-error2.21}
\E_{\nu^{(\infty)}} \{ | \omega_n| \| \chi_0 (R^{(0)}_{\omega,L} (-i)) \chi_n\| \cdot \|  R_{\omega}(-i))^{m} \chi_0 \|_1 \cdot
\|  R_\omega (z) \|  \}.
\eeq
The factor $ \| \chi_0 R^{(0)}_{\omega,L} (-i) \chi_n\| $ decays exponentially by the Combes-Thomas estimate \eqref{eq:appl-ct-cont1} in the operator norm, and the second factor is bounded by a constant as in \eqref{eq:second-factor2}.
%in the trace class as $m > \frac{d}{2}$.
For $\ell = m$, we take $(p,q) = (1,0)$ and bound the first factor in the trace norm $ \| \chi_0 (R^{(0)}_{\omega,L} (-i))^m \chi_n\|_1$  and use the exponential decay \eqref{eq:ct-trclass1} which is valid  since $p=1$ and $m > \dfrac{d}{2}$. We take $q=0$ in the second factor which is bounded by the Combes-Thomas bound in the operator norm ($q=0$) as in \eqref{eq:second-factor2}.
%%%%%%%%%%%%%%%%%%%%%%%%%%%%%%%%%%%%%%%%%%%%%%%%%%%%%%%

\noindent
7. For any fixed $1 <  \ell < m$ in (\ref{eq:finite-range-cont-error2.2}), we use H\"older's inequality for trace norms %(see, for example, \cite[Theorem 2.8]{simon_traceclassideals_book})
with indices $p,q > 1$, $p^{-1} + q^{-1} = 1$, satisfying
\begin{equation} \label{eq_holdercombesth}
l > \dfrac{d}{2p} ~\mbox{and } m+1 - l > \dfrac{d}{2q} ~\mbox{.}
\end{equation}
Such a pair $(p,q)$ satisfying \eqref{eq_holdercombesth} exists. As $ q = \frac{p}{p-1}$, we solve the second inequality of \eqref{eq_holdercombesth} for $p$ and obtain
\beq\label{eq:holder1} \frac{1}{p} > \frac{d - 2(m+1-\ell)}{d} .
\eeq
There are two cases:
\begin{enumerate}
\item Case 1: $d \leq 2(m+1 - \ell)$. In this case, the inequality \eqref{eq:holder1} is always satisfied for $p > 1$.  Consequently we take
\beq\label{eq:holder2}
p > {\rm max} ~ \left\{ 1, \frac{d}{2 \ell} \right\} .
\eeq

\item Case 2: $d > 2 (m+ 1 - \ell)$. In this case, inequality \eqref{eq:holder1} implies
\beq\label{eq:holder3}
p < \frac{d}{d - 2 ( m + 1 - \ell)} ,
\eeq
and the left side of \eqref{eq:holder3} is greater than 1. Consequently $p$ satisfies the constraints:
\beq\label{eq:holder4}
 {\rm max} ~ \left\{ 1, \frac{d}{2 \ell} \right\} < p <  \frac{d}{d - 2 ( m + 1 - \ell)} .
\eeq

\end{enumerate}
%%%%%%%%%%%%%%%%%%%%%%%%%%%%%%%%%%
%Since $m > \dfrac{d}{2}$, the conditions in (\ref{eq_holdercombesth}) are satisfied if we take
%\begin{equation}
%p > \max\left\{ 1; \dfrac{d}{d+2 - 2l} ; \dfrac{d}{2l} \right\} ~\mbox{.}
%\end{equation}
Since $m \in \N$ and $m > \frac{d}{2}$, if $d \geq 1$ is even, we take $m = \frac{d}{2} + 1$, and for $d \geq 1$ odd, we take $m = \frac{d+1}{2}$.
%It is easy to check that case 1 requires $1 \leq d \leq 4$ but is only valid for the entire range $2 \leq \ell \leq m-1$ for $d$ even.
Case 2 holds for $d > 4$ for the range $2 \leq \ell \leq m - 1$ provided $d>4$ is odd, and for the range $3 \leq \ell \leq m - 1$ provided $d>4$ is even. For these ranges, we choose $p>1$ according to \eqref{eq:holder4}. When $d>4$ is even, the condition of case 1 is $d \leq  d + 2 (2 - \ell)$ and this holds for $\ell = 2$. In these cases, we choose $p>1$ according to \eqref{eq:holder2}. Hence, for $d > 4$, there is always a pair $(p,q)$ satisfying the above conditions.
%In particular, if $d>4$ is even, the the condition of case 2 is $d >  d + 2 (2 - \ell)$ and this holds for $3 \leq \ell \leq  m - 1$, and for $d$ %odd it holds for $2 \leq \ell \leq m-1$. The remaining situation of $d > 4$ even and $\ell = 2$ is covered by case 1 since the condition $d \leq %2(m+1-\ell)$
% is satisfied $d \leq 2 (\frac{d}{2} +2 - 2) = d$. % becomes \eqref{eq:}
It remains to consider the cases $d=1,2,3,4$. For $d = 2,3$, we may take $m=2$. Then, as $1 \leq \ell \leq 2$, we are in the endpoint case treated above in part 6 of the proof. Similarly for $d=1$, we take $m=1= \ell$ and this endpoint case is also treated above. For $d=4$, we take $m = 3$ so $\ell = 1,2,3$. The endpoints $\ell = 1,3$ are treated above and case 1 is satisfied for $\ell = 2$.
%One can check that for $2 \leq \ell \leq m-1$, this condition is always satisfied provided $d > 4$ and $0 < \epsilon < 1$, which we can always assume without loss of generality. So for $d>4$ and $m = \frac{d}{2} + \epsilon$, with  $0 < \epsilon < 1$, we can find a pair of indices $(p,q)$ satisfying the above conditions and \eqref{eq:holder4}. Case 1 applies for $d=4$, for which a similar calculation shows that the condition $d = 4 \leq 2 ( m + 1 - \ell)$ is satisfied for $2 \leq \ell \leq m - 1$ so we can find a pair  of indices $(p,q)$ satisfying the above conditions and \eqref{eq:holder2}. Similarly for $d=1$, we take $m=1= \ell$ and this case is treated above. For $d=4$, we take $m = 3$ so $\ell = 1,2,3$. The endpoints $\ell = 1,3$ are treated above and case 1 is satisfied for $\ell = 2$.
%For $d = 2,3$, we may take $m=2$ so as $1 \leq \ell \leq 2$, we are in the endpoint case treated above. Similarly for $d=1$, %we take $m=1= \ell$ and this case is treated above.
 Thus, for each $d \geq 1$ and  $1 <  \ell <  m$, we can find a H\"older pair $(p,q)$ satisfying the above conditions.
Consequently, we can use (\ref{eq:ct-trclass1}) and \eqref{eq:second-factor2} to estimate the trace-norm of the operator in the integrand of \eqref{eq:finite-range-cont-error2.2}:
%\begin{align}
\bea
\label{eq:finite-range-cont-error2.3}
\lefteqn{ \E_{\nu^{(\infty)}} \{  \| \chi_0 (R^{(0)}_{\omega,L} (-i))^\ell H_{\omega; L}^{(1)} (R_{\omega}(-i))^{m+1 -\ell}  R_\omega (z) \chi_0 \|_1 \}  }  \nonumber \\
 & \leq  & \sum_{n \in \mathbb{Z}^d: \Vert n \Vert_\infty \geq L} \E_{\nu^{(\infty)}} \left\{ \vert \omega_n \vert   \| \chi_0 (R^{(0)}_{\omega,L} (-i))^\ell \chi_n \|_p  ~  \| (R_{\omega}(-i))^{m+1 -\ell} \chi_0 \|_q \cdot  \| R_\omega (z) \|  \right\}
  \nonumber \\
  & \leq & \dfrac{C_1}{\vert \mathrm{Im} z \vert} \sum_{n \in \mathbb{Z}^d: \Vert n \Vert_\infty \geq L} \mathrm{e}^{-  c \cdot \vert n \vert} =: \dfrac{C_2 }{ \vert \mathrm{Im} z \vert L} ~\mbox{,}
 \eea
%   \end{align}
where $C_2$ only depends on $d$, $p$, $q$, and $z=-i$. In particular, the bound in (\ref{eq:finite-range-cont-error2.3}) and the choice of $d+2$ for the degree of the almost analytic extension implies that
\begin{equation} \label{eq:finite-range-cont-error2.4}
\vert R_{2,f}[L, \nu] \vert \leq \dfrac{C_{3}}{L} \Vert f \Vert_{d+3; m} ~\mbox{,}
\end{equation}
where $C_3 > 0$ depends on $d$ and $C$.
Finally, setting $R_f [L, \nu] := R_{1,f}[L, \nu] + R_{2,f}[L, \nu]$, the combination of (\ref{eq:appl-ct-cont2}) and (\ref{eq:finite-range-cont-error2.4}) completes the proof.
\end{proof}

To conclude this section, we prove the following technical lemma used in the proof of Proposition \ref{thm_finiterangereduction-cont1}.
\begin{lemma}\label{lemma:rel-resolv-bd1}
Let $H_0 := - \Delta$. For any integer $m \in \N$, suppose $V \in C^{2(m-1)}_c (\R^d;\R)$. There is a finite constant $C_m(V) > 0$, depending on the norms $\| D^{\alpha} V \|_\infty$, for multi-indices $\alpha \in \N^d$ and $|\alpha| = 0, \ldots, 2(m-1)$, so that
\beq\label{eq:rel-resolv-bd1}
\| (H_{0} + i)^m(H_0 + V +i)^{-m} \| \leq C_m(V) ~\mbox{.}
\eeq
Consequently, applying bound \eqref{eq:rel-resolv-bd1} to $H_{\omega;L}^{(0)}$, there exists a polynomial $C_m(\omega_L;\phi)$ of degree $m$ in the random variables $\omega_L := \{ \omega_n ~| n \in \Z^d, \|n\|_\infty \leq L  \}$ and $\| D^{\alpha} \phi \|$, for $|\alpha|=0,1, \ldots, 2(m-1)$, so that
\beq\label{eq:rel-resolv-bd2}
\|(R_0(-i))^{-m} ~ (R^{(0)}_{\omega,L}(-i))^{m} \| \leq C_m(\omega_L;\phi) ~\mbox{.}
\eeq
\end{lemma}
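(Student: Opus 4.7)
The plan is to reduce the operator-norm bound \eqref{eq:rel-resolv-bd1} to the a priori estimate
\begin{equation*}
\bigl\| (H_0 + i)^m \phi \bigr\| \leq C_m(V) \bigl( \bigl\| (H_0 + V + i)^m \phi \bigr\| + \| \phi \| \bigr),
\end{equation*}
valid on the common domain $D(H_0^m) = D((H_0+V)^m) = H^{2m}(\R^d)$, which holds for $V \in C_c^{2(m-1)}(\R^d;\R)$ by standard elliptic regularity for bounded perturbations of $-\Delta$. Setting $\phi = (H_0+V+i)^{-m}\chi$ for $\chi \in L^2$ then gives \eqref{eq:rel-resolv-bd1}, since $\|(H_0+V+i)^{-m}\| \leq 1$ as $H_0+V$ is self-adjoint.

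I would prove the a priori estimate by induction on $m$. The base case $m = 1$ is immediate from $(H_0+i)\phi = (H_0+V+i)\phi - V\phi$, yielding $C_1(V) = 1 + \|V\|_\infty$. For the induction step, I would write
\begin{equation*}
(H_0+i)^m \phi \;=\; (H_0+i)^{m-1}(H_0+V+i)\phi \;-\; V\,(H_0+i)^{m-1}\phi \;-\; \bigl[(H_0+i)^{m-1}, V\bigr]\phi.
\end{equation*}
The first term is controlled by applying the induction hypothesis to $(H_0+V+i)\phi$ and then using Young's inequality together with the spectral theorem for $H_0+V$ to dominate intermediate powers $\|(H_0+V+i)^k\phi\|$, $k<m$, by $\|(H_0+V+i)^m\phi\|+\|\phi\|$. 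The second term is bounded by $\|V\|_\infty\,\|(H_0+i)^{m-1}\phi\|$ and falls under the induction hypothesis directly.

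The main obstacle is the commutator term. Binomial expansion gives $[(H_0+i)^{m-1},V] = \sum_{k=1}^{m-1}\binom{m-1}{k}\,i^{m-1-k}\,[H_0^k,V]$, and iterating $[-\Delta,V] = -(\Delta V) - 2(\nabla V)\cdot\nabla$ shows that $[H_0^k,V]$ is a differential operator of order $2k-1$ whose coefficients are linear combinations of derivatives $D^\beta V$ with $|\beta|\leq 2k$. Hence $[(H_0+i)^{m-1},V]$ is of order $2m-3$, with coefficients bounded linearly in $\max_{|\beta|\leq 2(m-1)}\|D^\beta V\|_\infty$, so its action on $\phi$ is controlled by a constant times $\|\phi\|_{H^{2m-3}}$. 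Elliptic regularity for $H_0 = -\Delta$ provides $\|\phi\|_{H^{2(m-1)}} \lesssim \|H_0^{m-1}\phi\|+\|\phi\|$, and Sobolev interpolation between $L^2$ and $H^{2(m-1)}$ followed by Young's inequality then yields $\|\phi\|_{H^{2m-3}} \lesssim \|(H_0+i)^{m-1}\phi\| + \|\phi\|$, at which point the induction hypothesis closes the estimate. Since each induction step multiplies the constant by a linear factor in the norms $\|D^\alpha V\|_\infty$, after $m$ steps one obtains $C_m(V)$ polynomial of degree $m$ in $\{\|D^\alpha V\|_\infty : |\alpha|\leq 2(m-1)\}$.

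For \eqref{eq:rel-resolv-bd2}, I would apply the first part with $V = \sum_{\|n\|_\infty\leq L}\omega_n\,\phi_n$. Since by hypothesis {\bf[Cont]} the single-site profile $\phi$ is supported in the unit cube $\Lambda_0$, the translates $\phi_n$ for distinct $n\in\Z^d$ have essentially disjoint supports, so $\|D^\alpha V\|_\infty = \max_{\|n\|_\infty\leq L}|\omega_n|\cdot\|D^\alpha \phi\|_\infty$ for every multi-index $\alpha$ with $|\alpha|\leq 2(m-1)$. Substituting into the polynomial $C_m(V)$ of degree $m$ in these quantities immediately produces the claimed polynomial $C_m(\omega_L;\phi)$ of degree $m$ in $\omega_L$ and in $\{\|D^\alpha\phi\|_\infty : |\alpha|\leq 2(m-1)\}$.
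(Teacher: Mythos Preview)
Your argument is correct. Both your proof and the paper's hinge on the same commutator computation $[-\Delta,V] = -(\Delta V) - 2\nabla V\cdot\nabla$ and the observation that iterating it produces differential operators of order $2k-1$ with coefficients involving $D^\beta V$ for $|\beta|\le 2k$; both also exploit the disjoint supports of the $\phi_n$ to control $\|D^\alpha V_\omega\|_\infty$ uniformly in $L$. The difference is structural: the paper writes $(H_0+i)^m(H_0+V+i)^{-m} = (A-V)^m A^{-m}$ with $A = H_0+V+i$, expands $(A-V)^m$ as a sum of noncommutative monomials in $A$ and $V$, and then commutes the $V$'s to the left through powers of $A$, finishing with the standard resolvent bounds $\|\partial_j(H_0+V+i)^{-1}\|\le(2+\|V\|_\infty)^{1/2}$ and $\sum_{i,j}\|\partial_i\partial_j(H_0+V+i)^{-1}\|\le 2+\|V\|_\infty$. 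You instead recast the problem as an a priori estimate on $H^{2m}$ and run an induction on $m$, invoking elliptic regularity and Sobolev interpolation to absorb the commutator term. The paper's route is slightly more self-contained (no appeal to Sobolev-space machinery beyond the explicit resolvent-derivative bounds), while yours is closer to standard elliptic PDE arguments and makes the polynomial degree of $C_m(V)$ in the norms $\|D^\alpha V\|_\infty$ transparent through the inductive structure. One small point: your claimed equality $\|D^\alpha V\|_\infty = \max_{\|n\|_\infty\le L}|\omega_n|\cdot\|D^\alpha\phi\|_\infty$ is really an inequality $\le$ in general (and that is all you need); the paper phrases the analogous step pointwise via $|D^\alpha V_\omega(x)|^k \le \sum_n |\omega_n|^k|D^\alpha\phi_n(x)|^k$.
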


\begin{proof}
1. The $H^s$-mapping properties of the resolvent $R_0(-i)$ and the fact that $V \in C^{2(m-1)}_c (\R^d;\R)$ imply that the operator on the left in \eqref{eq:rel-resolv-bd1} is bounded. In order to show the dependence of the constant on the right in \eqref{eq:rel-resolv-bd1} on $V$, we write $A := H_0 + V + i$ so that $A^{-1}$ exists and $\| A^{-1} \| \leq 1$. We need to estimate the norm of
\beq\label{eq:expansion1}
(H_{0} + i)^m(H_0 + V +i)^{-m} = ((H_0 +V + i) - V)^m(H_0 + V +i)^{-m} = (A - V)^m A^{-m}.
\eeq
We note that
\beq\label{eq:expansion2}
(A - V)^m = \sum_{j=0}^{m}  p_j\{ A^{m-j}; V^j \},
\eeq
where $p_j \{ A^{m-j}; V^j \}$ is a sum of terms each of which is a product of $m-j$ factors of $A$ and $j$ factors of $V$ in all possible positions. In order to bound any term in the sum ${p}_j\{ A^{m-j}; V^j \} A^{-m}$,
we need to commute the $j$ factors of $V$ to the left through at most $m-j$ factors of $A$. This involves multiple commutators of $A$ with $V$.  For example, one term is
\beq\label{eq:commutator1}
AV^2 = V^2 A + 2 V [A, V] + [ [ A,V],V] .
\eeq
Since $A$ is a second-order operator, each commutator with $V$ lowers the order by one. As $V \in {\mathcal{C}}^{2(m-1)}_c (\R^d;\R)$, the first commutator is
\beq\label{eq:commutator2}
[A, V ] = - \Delta V - 2 \nabla V \cdot \nabla ,
\eeq
a first-order operator, and the commutator contains two derivatives of $V$.
% For example, if $m \geq 3$ and $0 \leq j \leq m$, we have the expansion
%\beq\label{eq:expansion3}
%[A^j, V] = [A^{j-3}, V]A^3  + A^{j-3}[A, V]A^2 + A^{j-2} [A,V] A +  A^{j-1} [ A,V ]  + A^{j-2} [A, [A,V]] .
%\eeq
Consequently, commuting $V$ through $m-j$ factors of $A$ will result in $2(m-j)$ derivatives on $V$, for $1 \leq j \leq m-1$.
As seen from \eqref{eq:commutator2}, we also need the standard bounds such as
\bea\label{eq:resolvent-bounds1}
\| \partial_i (H_0 + V + i )^{-1}\|  & \leq &  (2 + \|V \|_\infty )^{\frac{1}{2}}, ~~~i = 1, \ldots, d. \nonumber \\
\sum_{i,j=1}^d \| \partial_i \partial_j (H_0 + V + i )^{-1}\| & \leq & 2 + \|V \|_\infty .
\eea
This establishes \eqref{eq:rel-resolv-bd1}.

\noindent
2. For the application in \eqref{eq:rel-resolv-bd2}, we note that for $V$ of the form $V_\omega |_{\{ x ~|~ \| x \|_\infty \leq L\}}$, with $V_\omega$ defined in (\ref{eq:random-pot-cont1}), hypothesis $\textbf{[Cont]}$ guarantees the required smoothness. Since the function $\phi$ has support in the unit cube, so that $\phi_j \phi_k = 0$ for $j \neq k$, it follows that for any multi-index $\alpha \in \N^d$ nd $k \in \N$, we have
\beq\label{eq:est-V1}
| D^\alpha V_\omega(x)|^k  \leq   \sum_{ \{ n \in \Z^d ~:~ \|n\|_\infty \leq L \}} |\omega_n|^k |D^\alpha \phi_n (x)|^k ,
\eeq
and $|\omega_n| \leq C$. Hence, the constant $C_m(\omega_L;\phi)$ on the right in \eqref{eq:rel-resolv-bd2} depends only on $C$ and $\| D^\alpha \phi \|_\infty$, for $|\alpha| \leq 2(m-1)$.
\end{proof}

%so that, for example,
%\beq\label{eq:est-Vexp1}
%\E \{ | D^\alpha V(x)| \}  \leq  \mu_1[\nu] \left(  \sum_{\{n \in \Z^d ~:~ \|n\|_\infty \leq L \}} |D^\alpha \phi_n (x)| \right)
%\leq  C_\alpha (\phi) \mu_1[\nu] ,
%\eeq
%for a constant $C_\alpha > 0$ depending only on $D^j \phi$, for $|j| \leq |\alpha|$. We also need bounds on terms quadratic in $V$ or its derivatives. For example, it is easy to show from \eqref{eq:est-V1} that
%\beq\label{eq:est-Vexp2}
%\E \{ | \nabla V(x)|^2 \}  \leq  \mu_2[\nu] \left(  \sum_{\{ n \in \Z^d ~:~ \|n\|_\infty \leq L \}} |\nabla \phi_n (x)|^2 \right)
%\leq  C_1 (\phi) \mu_2 [\nu] ,
%%\nonumber \\
%%+ (\mu_1[\nu])^2 \left( \sum_{\{ n \in \Z^d ~:~ \|n\|_\infty \leq L \}} |\nabla \phi_n (x)|   \right)  \nonumber \\
%%  & \leq   & C_1 (\phi) \mu_2 [\nu] +  C_2 (\phi) (\mu_1[\nu])^2 .
%\eeq
%due to the fact that the $\phi_j$ have disjoint support. Similar estimates hold for higher-order terms.
%\end{proof}
%%%%%%%%%%%%%%%%%%%%%%%%%%%%%%%%%%%%%%%%%%%%%%%%%%%%%%%%%%%%%%%%%%%%%%%%%%%%%%%%%%%%%%%%%%%%%%%%%%%%%%%%%%%%%%%%%%%%%%%%%%%%%%%%%%%%%%%%%%%%%

\section{Step 2 - Lipschitz property} \label{sec_step2lipschitzprop}
\setcounter{equation}{0}

Theorem \ref{thm_finiterangereduction} reduces the variation of the {\em{infinite}} product measure $\nu^{(\infty)}$ associated with a given single-site measure $\nu$ to changing the measure at {\em{finitely}} many lattice points in a fixed cube, centered about the origin. Carrying out these variations one lattice point at a time, we are led to analyzing the continuity properties of maps of the form
\begin{equation} \label{eq_setup_Lipschitz}
\omega_{j_0} \mapsto \mathrm{Tr}\left(P_0 f(H_{j_0^\perp} + \omega_{j_0} P_{j_0}) P_0\right) ~\mbox{,}
\end{equation}
for a fixed, arbitrary lattice point $j_0 \in \mathcal{J}$ and a function $f \in \mathcal{C}_c(\mathbb{R})$, where
\begin{equation} \label{eq_setup_Lipschitz_1}
H_{{j_0}^\perp} := - \Delta + \sum_{j \in \mathcal{J} ~:~ j \neq j_0} \omega_j P_j ~\mbox{.}
%\vert \delta_n \rangle \langle \delta_n \vert ~\mbox{.}
\end{equation}
While (\ref{eq_setup_Lipschitz}) - (\ref{eq_setup_Lipschitz_1}) are written for the discrete model described in {\bf[{Disc}]}, analogous considerations apply in the continuum { \bf[{Cont}]} with obvious modifications (see section \ref{subsec_step2lipschitzprop_conti} for details).

The purpose of this section is to establish the Lipschitz continuity of (\ref{eq_setup_Lipschitz}) for compactly supported Lipschitz test functions $f$. We will first consider the discrete case for which $\nu$ will be allowed to have unbounded support. In section \ref{subsec_step2lipschitzprop_conti}, we consider the the Lipschitz continuity for the continuum models with $d \geq 1$ where $\nu$ is assumed to have compact support. %Finally, in section \ref{subsubsec_step2lipschitzprop_d1cont1}, we prove a refined estimate for the continuum model { \bf[{Cont}]} for dimension $d=1$.

%consider the special cases of $d=1$ and proved a refined estimate : A random delta interaction model on $\R^d$ for $d=1,2,3$ (section %\ref{sec_step2lipschitzprop_conti_lowdim}) and a refined estimate for the continuum model [Cont] for dimension $d=1$.
%three setups, which, depending on the model at hand, will yield the desired Lipschitz property for the map in %(\ref{eq_setup_Lipschitz}).

%%%%%%%%%%%%%%%%%%%%%%%%%%%%%%%%%%%%%%%%%%%%%%%%%%%%%%%%%%%%%%%%%%%%%%%%%%%%%%%%%%%%%%%%%%%%%%%%%%%%%%%%%%%%%%

\subsection{Lipschitz property for discrete models} \label{sec_step2lipschitzprop_discrete}

The Lipschitz continuity of the map (\ref{eq_setup_Lipschitz}) for the case of discrete operators (\ref{eq_discreteschrodop}) has already been the subject of Proposition 4.1 in \cite{hislop_marx_1}. Proposition \ref{prop_lipschitz_traceclass} below provides both an extension of the latter to the case of noncompactly supported probability measures $\nu$, and serves a useful starting point for our discussion of the Lipschitz property for continuum models, given in section \ref{subsec_step2lipschitzprop_conti}.

To this end, let $T_1, T_2$ be positive, bounded operators and $H_0$ be a (not necessarily bounded) self-adjoint operator on a given Hilbert space $\mathcal{H}$. We consider the one-parameter family
\begin{equation} \label{eq_lipschitz_func_set-up}
H_\lambda := H_0 + \lambda T_1 ~\mbox{, }  \lambda \in \mathbb{R} ~\mbox{.}
\end{equation}
Given a function $f \in \mathcal{C}_c(\mathbb{R})$, we examine the continuity properties of the map
\begin{equation} \label{eq_lipschitz_func}
\mathbb{R} \ni \lambda \mapsto \mathcal{F}_f(\lambda):= \mathrm{Tr}(T_2 f(H_\lambda) T_2) ~\mbox{.}
\end{equation}
We let $\mathcal{S}_1(\mathcal{H})$ denote the trace-class operators, and $\mathcal{S}_2(\mathcal{H})$ denote the Hilbert-Schmidt operators on $\mathcal{H}$. We write $\Vert . \Vert_{\mathcal{S}_j}$, $j=1,2$ for the associated Banach space norms.

\begin{prop} \label{prop_lipschitz_traceclass}
Given the setup described in (\ref{eq_lipschitz_func_set-up}) - (\ref{eq_lipschitz_func}), with $T_1, T_2$ positive, bounded operators and $H_0$ a (not necessarily bounded) self-adjoint operator on a given Hilbert space $\mathcal{H}$. Suppose $T_1 \in \mathcal{S}_1(\mathcal{H})$ and $T_2 \in \mathcal{S}_2(\mathcal{H})$, then for every $f \in Lip_c(\mathbb{R})$, the map $\lambda \mapsto \mathcal{F}_f(\lambda)$ is Lipschitz in $\lambda$, satisfying
\begin{equation} \label{eq_prop_lipschitz_traceclass}
\left\vert \mathcal{F}_f(\lambda_1) -  \mathcal{F}_f(\lambda_2) \right\vert \leq \min\{ \Vert T_2^2 \Vert \Vert T_1 \Vert_{\mathcal{S}_1} , \Vert T_2 \Vert_{\mathcal{S}_2}^2 \Vert T_1 \Vert \} L_f \cdot \vert \lambda_1 - \lambda_2 \vert ~\mbox{,}
\end{equation}
for each $\lambda_1, \lambda_2 \in \mathbb{R}$. Here, $L_f$ is the optimal Lipschitz constant of $f$ as defined in (\ref{eq_lipschitznorm}).
\end{prop}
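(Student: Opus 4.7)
The plan is to start from the identity
\[
\mathcal{F}_f(\lambda_1) - \mathcal{F}_f(\lambda_2) = \mathrm{Tr}\bigl(T_2 [f(H_{\lambda_1}) - f(H_{\lambda_2})] T_2\bigr)
\]
and to express the operator difference via resolvents. Representing $f(H_{\lambda_i})$ by the Helffer--Sj\"ostrand formula reviewed in the appendix, and applying the second resolvent identity
\[
(H_{\lambda_1}-z)^{-1} - (H_{\lambda_2}-z)^{-1} = -(\lambda_1-\lambda_2)\,(H_{\lambda_1}-z)^{-1}\, T_1\, (H_{\lambda_2}-z)^{-1},
\]
one arrives at
\[
\mathcal{F}_f(\lambda_1) - \mathcal{F}_f(\lambda_2) = -\frac{\lambda_1-\lambda_2}{\pi} \int_{\C} \partial_{\bar z}\widetilde f(z)\, \mathrm{Tr}\bigl(T_2\,(H_{\lambda_1}-z)^{-1}\, T_1\, (H_{\lambda_2}-z)^{-1}\, T_2\bigr)\,dx\,dy,
\]
which is manifestly linear in $\lambda_1-\lambda_2$ and reduces the problem to bounding the integrand.

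The two members of the stated minimum correspond to two different cyclic rearrangements of the five factors inside the trace. For bound (a), I would cycle to $\mathrm{Tr}\bigl(T_1\,(H_{\lambda_2}-z)^{-1}\, T_2^2\, (H_{\lambda_1}-z)^{-1}\bigr)$ and apply the inequality $|\mathrm{Tr}(T_1 A)|\leq \|T_1\|_{\mathcal{S}_1}\|A\|$, with $\|A\|\leq \|T_2^2\|/|\mathrm{Im}\,z|^2$ from the resolvent bound. For bound (b), I would cycle so that $T_2^2$ appears at the front, giving $\mathrm{Tr}\bigl(T_2^2\,(H_{\lambda_1}-z)^{-1}\, T_1\,(H_{\lambda_2}-z)^{-1}\bigr)$, and apply $|\mathrm{Tr}(T_2^2 B)|\leq \|T_2^2\|_{\mathcal{S}_1}\|B\|$. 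Positivity of $T_2$ then gives $\|T_2^2\|_{\mathcal{S}_1} = \|T_2\|_{\mathcal{S}_2}^2$, and $\|B\|\leq \|T_1\|/|\mathrm{Im}\,z|^2$.

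The main obstacle I anticipate is controlling the remaining integral $\int_{\C} |\partial_{\bar z}\widetilde f(z)|/|\mathrm{Im}\,z|^2\,dx\,dy$ so that it is dominated by $L_f$, rather than by higher-order seminorms of $f$, when $f$ is only Lipschitz. A naive quasi-analytic extension of a Lipschitz function produces only $\partial_{\bar z}\widetilde f = O(1)$ as $y\to 0$, which fails to compensate the double singularity $|y|^{-2}$. I plan to circumvent this by first approximating $f \in \mathrm{Lip}_c(\R)$ by a sequence $f_n \in \mathcal{C}_c^\infty(\R)$ with $L_{f_n}\leq L_f$, $\|f_n\|_\infty \leq \|f\|_\infty$, and $f_n \to f$ uniformly on a fixed compact set containing the supports, carrying out the resolvent estimate for each $f_n$ (whose higher-order extension has $\bar\partial$-derivative that vanishes sufficiently fast at the real axis for the integral to converge), and then passing to the limit. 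Since $T_1 \in \mathcal{S}_1$ and $T_2 \in \mathcal{S}_2$ ensure that all trace-class expressions are uniformly bounded in $n$ and the integration region in $z$ is compact, dominated convergence allows one to take $n\to \infty$ and recover the inequality with the sharp constant $L_f$.
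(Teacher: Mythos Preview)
Your proposal has a genuine gap in the final step. For each smooth approximant $f_n$, your direct pointwise estimate of the integrand yields
\[
|\mathcal{F}_{f_n}(\lambda_1)-\mathcal{F}_{f_n}(\lambda_2)| \leq \min\{\,\cdot\,\}\;|\lambda_1-\lambda_2|\;\int_{\C}\frac{|\partial_{\bar z}\widetilde{f_n}(x,y)|}{|y|^{2}}\,dx\,dy,
\]
and the last integral is controlled by a $\mathcal{C}^3$-type norm of $f_n$ (via the degree-$2$ almost analytic extension), \emph{not} by $L_{f_n}$. When $f$ is merely Lipschitz, any sequence $f_n\to f$ with $L_{f_n}\le L_f$ will have higher derivatives blowing up, so the right-hand side does not converge to anything involving $L_f$. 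Dominated convergence lets you pass to the limit on the \emph{left} (since $\mathcal{F}_{f_n}\to\mathcal{F}_f$ by uniform convergence), but you have no uniform-in-$n$ control of the constant on the right. Your argument is circular: to extract $L_{f_n}$ for smooth $f_n$ you would already need the statement you are trying to prove.

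The missing idea, which the paper supplies, is to avoid estimating the integrand pointwise in $z$. Instead, one introduces the bilinear functional $\beta_\lambda(g,h)=\mathrm{Tr}\bigl(T_2\,g(H_\lambda)\,T_1\,h(H_\lambda)\,T_2\bigr)$ and represents it by a (positive, thanks to $T_1,T_2\ge 0$) Baire measure $\mu_\lambda$ on $\R^2$ with total mass $\mathrm{Tr}(T_2T_1T_2)\le\min\{\|T_2^2\|\|T_1\|_{\mathcal{S}_1},\,\|T_2\|_{\mathcal{S}_2}^2\|T_1\|\}$. One then swaps the order of integration and performs the $z$-integral \emph{first} via the Cauchy--Pompeiu formula, which converts $\int_{\C}\partial_{\bar z}\widetilde f\,\frac{1}{(s-z)(t-z)}\,dx\,dy$ into the divided difference $(f(t)-f(s))/(t-s)$ (and $f'(s)$ on the diagonal). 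This is bounded pointwise by $L_f$, and integrating against $\mu_\lambda$ gives exactly the claimed constant. Only after this is done for smooth $f$ does one approximate.
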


\begin{proof}
1. We adapt the proof of Proposition 4.1 in \cite{hislop_marx_1} given in Appendix B therein and will establish (\ref{eq_prop_lipschitz_traceclass}) first for $f \in \mathcal{C}_c^\infty(\mathbb{R})$. Using the Helffer-Sj\"ostrand functional calculus, we represent
\begin{equation}  \label{eq_prop_lipschitz_1}
\mathcal{F}_f(\lambda) =  \dfrac{1}{\pi} \int\int_\mathbb{C} \partial_{\overline{z}} \widetilde{f}(x,y) \cdot \mathrm{Tr}( T_2 (H_\lambda - z)^{-1} T_2 ) ~\mathrm{d} x ~\mathrm{d} y ~\mbox{,}
\end{equation}
where $\widetilde{f}$ is a fixed almost analytic extension of $f$ with degree $2$. By the second resolvent identity, one has, for each $z \in \mathbb{C} \setminus \mathbb{R}$
\begin{equation} \label{eq_prop_lipschitz_deriv}
\frac{d}{d \lambda} {\rm tr} (T_2 (H_\lambda -z)^{-1} T_2) = - {\rm Tr} ( T_2 ( H_\lambda - z)^{-1} T_1 (H_\lambda - z)^{-1} T_2 ) ~\mbox{.}
\end{equation}
Because we chose the degree to be $2$, we can differentiate (\ref{eq_prop_lipschitz_1}) under the integral sign, since by (\ref{eq_almostanalytic})
\begin{equation} \label{eq_prop_lipschitz_deriv_just}
\vert \partial_{\overline{z}} \widetilde{f} \vert \vert {\rm Tr} ( T_2 ( H_\lambda - z)^{-1} T_1 (H_\lambda - z)^{-1} T_2 ) \vert \leq \Vert T_2^2 \Vert \Vert T_1 \Vert_{\mathcal{S}_1} \Vert f^{(3)} \Vert_\infty ~\mbox{,}
\end{equation}
for all $0 < \vert \mathrm{Im} z \vert \leq 1$.

\noindent
2. To obtain a more explicit representation of the right hand side of (\ref{eq_prop_lipschitz_deriv}), consider the bilinear functional on $\mathcal{C}_c(\mathbb{R}) \times \mathcal{C}_c(\mathbb{R})$ given by
\begin{equation} \label{eq_prop_lipschitz_bimeasure}
\beta_\lambda(f,g):= {\rm tr} ( T_2 f(H_\lambda) T_1 g(H_\lambda) T_2 ) ~\mbox{, for }   f,g\in \mathcal{C}_c(\mathbb{R}) ~\mbox{.}
\end{equation}
By the spectral theorem for compact operators, we write
\begin{align} \label{eq_prop_lipschitz_compact}
T_j =\sum_{k \in \mathbb{N}} \lambda_k^{(j)} \vert \phi_k^{(j)} \rangle \langle \phi_k^{(j)} \vert ~\mbox{, }
\end{align}
for appropriate orthonormal bases (ONB) $\{ \phi_k^{(j)} \}$ and $j =1,2$. Using spectral measures, $\beta_\lambda(f,g)$ may thus be represented by a (in general, complex) Baire measure $\mu_\lambda$ on $\mathbb{R}^2$, i.e.
\begin{align}
\beta_\lambda(f,g) & = \sum_{k,l \in \mathbb{N}} (\lambda_l^{(2)})^2 \lambda_k^{(1)} \langle \phi_l^{(2)} , f(H_\lambda) \phi_k^{(1)} \rangle \langle \phi_k^{(1)} , g(H_\lambda) \phi_l^{(2)} \rangle \\
& =: \int_{\mathbb{R}^2} f(s) g(t) ~\mathrm{d} \mu_\lambda(s,t)
\end{align}
where
\begin{equation} \label{eq_prop_lipschitz_defnmeas}
\mathrm{d} \mu_\lambda(s,t) = \sum_{k,l \in \mathbb{N}} (\lambda_l^{(2)})^2 \lambda_k^{(1)} \mathrm{d} \mu_{\phi_l^{(2)}; \phi_k^{(1)}}(s) \otimes \mathrm{d} \mu_{\phi_k^{(1)}; \phi_l^{(2)}}(t) ~\mbox{,}
\end{equation}
and
\begin{align} \label{eq_prop_lipschitz_meas_totalvar}
\vert  \mu_\lambda\vert & \leq \sum_{k,l \in \mathbb{N}} \vert \lambda_l^{(2)} \vert^2  \vert \lambda_k^{(1)}\vert  \vert  \mu_{\phi_l^{(2)}; \phi_k^{(1)}} \otimes \mu_{\phi_k^{(1)}; \phi_l^{(2)}} \vert \nonumber \\
& \leq 2 \left( \sum_{l \in \mathbb{N}} \vert \lambda_l^{(2)} \vert^2 \right) \left( \sum_{k \in \mathbb{N}} \vert \lambda_k^{(1)} \vert \right) = 2 \Vert T_2 \Vert_{\mathcal{S}_2}^2 \Vert T_1 \Vert_{\mathcal{S}_1} ~\mbox{.}
\end{align}

Observe that since $T_1 \geq 0$ and $T_2$ is self-adjoint, $\beta_\lambda(f,f) \geq 0$ for every real-valued $f \in \mathcal{C}_c(\mathbb{R})$, thus the measure $\mu$ is non-negative on all cubes $\Delta \times \Delta \subset \mathbb{R}^2$, and hence on all rectangles with rational ratios of their side lengths. Therefore, we conclude that $\mu$ is in fact a non-negative real measure on $\mathbb{R}^2$ with total mass bounded by
\begin{equation} \label{eq_prop_lipschitz_meas_totalmass}
\mu_\lambda(\mathbb{R}^2) \leq \mathrm{Tr} (T_2 T_1 T_2) \leq \min\{ \Vert T_2^2 \Vert \Vert T_1 \Vert_{\mathcal{S}_1} , \Vert T_2 \Vert_{\mathcal{S}_2}^2 \Vert T_1 \Vert \}
\end{equation}

%From (\ref{eq_prop_lipschitz_defnmeas}), the total variation of $\mu_\lambda$ is bounded by
%\begin{align} \label{eq_prop_lipschitz_meas_totalvar}
%\vert  \mu_\lambda\vert & \leq \sum_{k,l \in \mathbb{N}} \vert \lambda_l^{(2)} \vert^2  \vert \lambda_k^{(1)}\vert  \vert  \mu_{\phi_l^{(2)}; \phi_k^{(1)}} \otimes \mu_{\phi_k^{(1)}; \phi_l^{(2)}} \vert \nonumber \\
% & \leq 2 \left( \sum_{l \in \mathbb{N}} \vert \lambda_l^{(2)} \vert^2 \right) \left( \sum_{k \in \mathbb{N}} \vert \lambda_k^{(1)} \vert \right) = 2 \Vert T_2 \Vert_{\mathcal{S}_2}^2 \Vert T_1 \Vert_{\mathcal{S}_1} ~\mbox{.}
%\end{align}

\noindent
3. Combining (\ref{eq_prop_lipschitz_1}) - (\ref{eq_prop_lipschitz_deriv_just}), the Cauchy-Pompieu formula \cite[section IV.8]{gamelin} thus allows to compute:
\begin{align}
\dfrac{\mathrm{d}}{\mathrm{d} \lambda} \mathcal{F}_f(\lambda) & = - \dfrac{1}{\pi} \int_\mathbb{C} \partial_{\overline{z}} \widetilde{f}(x,y) \cdot \left( \int_{\mathbb{R}^2} \dfrac{1}{(s - z)(t-z)} \mathrm{d} \mu_\lambda(s,t) \right) ~\mathrm{d} x \mathrm{d} y   \label{eq_cauchypompie_deriv_1}  \\
 &= - \int_{\R^2} ~\mathrm{d} \mu_\lambda(s,t) \chi_{\{s \neq t\}} \frac{1}{t-s} ~\frac{1}{\pi} \int_{\C} \partial_{\overline{z}} \widetilde{f}(x,y) \left( \frac{1}{s-z} - \frac{1}{t-z} \right)  ~\mathrm{d} x \mathrm{d} y  \nonumber \\
  & - \int_{\R^2} ~\mathrm{d} \mu_\lambda(s,t) \chi_{\{s = t\}}  ~\frac{1}{\pi} \int_{\C}\partial_{\overline{z}} \widetilde{f}(x,y) \frac{1}{(s-z)^2}  ~\mathrm{d} x \mathrm{d} y \label{eq_cauchypompie_deriv} \\
 &= \int_{\R^2} ~\mathrm{d} \mu_\lambda(s,t) \left\{ \chi_{\{s \neq t\}} \frac{f(t) - f(s)}{t-s} + \chi_{\{s = t\}} f^\prime(s) \right\} \\
 & = \int_{\R^2} ~\mathrm{d} \mu_\lambda(s,t) f^\prime(\zeta_{s,t}) ~\mbox{,} \label{eq_prop_lipschitz_diffcomp}
\end{align}
where, as a result of the mean-value theorem, $\zeta_{s,t}$ is some point between $s$ and $t$. We note that our choice of $2$ for the degree of the almost analytic extension of $f$ and (\ref{eq_almostanalytic}) imply that for all points on the real axis, the Cauchy-Pompeiu formula may be differentiated to yield the representation of the derivative of $f$ given in (\ref{eq_cauchypompie_deriv}).

From the bound in (\ref{eq_prop_lipschitz_meas_totalmass}), we thus conclude the claim for all $f \in \mathcal{C}_c^\infty(\mathbb{R})$, since by (\ref{eq_prop_lipschitz_diffcomp}), $\mathcal{F}_f$ is differentiable for each $ \lambda \in \mathbb{R}$ and
\begin{align}
%\left\vert \dfrac{\mathrm{d}}{\mathrm{d} \lambda} \mathcal{F}_f(\lambda) \right\vert \leq 2 \Vert T_2 \Vert_{\mathcal{S}_2}^2 \Vert T_1 \Vert_{\mathcal{S}_1} L_f ~\mbox{.}
\left\vert \dfrac{\mathrm{d}}{\mathrm{d} \lambda} \mathcal{F}_f(\lambda) \right\vert \leq \min\{ \Vert T_2^2 \Vert \Vert T_1 \Vert_{\mathcal{S}_1} , \Vert T_2 \Vert_{\mathcal{S}_2}^2 \Vert T_1 \Vert \} \cdot L_f ~\mbox{.}
\end{align}

Finally, a simple approximation argument using $\mathcal{C}^\infty$ mollifiers (see e.g.\ item 5 in the proof of Proposition 4.1 in \cite{hislop_marx_1}), allows to extend the result to all $f \in Lip_c(\mathbb{R})$.
\end{proof}

Applying Proposition \ref{prop_lipschitz_traceclass} to the situation described in (\ref{eq_setup_Lipschitz}) - (\ref{eq_setup_Lipschitz_1}) with $H_0 = H_{j^\perp}$, $T_1 = P_j$, and $T_2 = P_0$ (i.e.  $\Vert T_2 \Vert_{\mathcal{S}_2}^2 = \Vert T_1 \Vert_{\mathcal{S}_1} = N$ and $\Vert T_1 \Vert = \Vert T_2^2 \Vert = 1$) yields:
\begin{corollary}[Lipschitz property for discrete random Schr\"odinger operators] \label{coro_lipschitzdiscrete}
For the discrete random Schr\"odinger operators in (\ref{eq_discreteschrodop}), each $j \in \mathbb{Z}^d$, and $f \in Lip_c(\mathbb{R})$, the map in (\ref{eq_setup_Lipschitz}) is Lipschitz with Lipschitz constant bounded above by $N \cdot L_f$.
\end{corollary}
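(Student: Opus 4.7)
The plan is to invoke Proposition \ref{prop_lipschitz_traceclass} directly, identifying its abstract ingredients with the concrete objects in (\ref{eq_setup_Lipschitz})--(\ref{eq_setup_Lipschitz_1}). Specifically, I would take $\mathcal{H} = \ell^2(\mathbb{Z}^d)$, $H_0 := H_{j^\perp}$ (which is self-adjoint on $\ell^2(\mathbb{Z}^d)$ by the same arguments used to define $H_\omega$, since removing a single random variable from the sum does not affect essential self-adjointness), $T_1 := P_j$, and $T_2 := P_0$, with the one-parameter family variable $\lambda = \omega_j$. Then $\mathcal{F}_f(\omega_j) = \mathrm{Tr}(P_0 f(H_{j^\perp} + \omega_j P_j) P_0)$ is exactly the map in (\ref{eq_setup_Lipschitz}).

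The hypotheses of Proposition \ref{prop_lipschitz_traceclass} are then immediate under {\bf[Disc]}: both $P_0$ and $P_j$ are orthogonal projections of rank $N = K^d$, hence positive bounded operators with $\Vert P_j \Vert = \Vert P_0^2 \Vert = 1$. Moreover, $P_j$ is trace class with $\Vert P_j \Vert_{\mathcal{S}_1} = \mathrm{Tr}(P_j) = N$, and $P_0$ is Hilbert-Schmidt with $\Vert P_0 \Vert_{\mathcal{S}_2}^2 = \mathrm{Tr}(P_0^* P_0) = \mathrm{Tr}(P_0) = N$.

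Substituting these values into the Lipschitz bound (\ref{eq_prop_lipschitz_traceclass}) yields
\[
\left\vert \mathcal{F}_f(\lambda_1) - \mathcal{F}_f(\lambda_2) \right\vert \leq \min\{ N \cdot 1,\, 1 \cdot N \} \, L_f \cdot \vert \lambda_1 - \lambda_2 \vert = N \, L_f \cdot \vert \lambda_1 - \lambda_2 \vert,
\]
which is the claim. There is no genuine obstacle here: the corollary is a clean specialization of Proposition \ref{prop_lipschitz_traceclass}, and the only verification needed is that both Schatten norms of the rank-$N$ projections collapse to the same factor of $N$, so the minimum in (\ref{eq_prop_lipschitz_traceclass}) contributes exactly $N$.
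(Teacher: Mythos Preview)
Your proposal is correct and matches the paper's own argument essentially verbatim: the paper also derives the corollary by directly applying Proposition \ref{prop_lipschitz_traceclass} with $H_0 = H_{j^\perp}$, $T_1 = P_j$, $T_2 = P_0$, and then reads off $\Vert T_1 \Vert_{\mathcal{S}_1} = \Vert T_2 \Vert_{\mathcal{S}_2}^2 = N$ and $\Vert T_1 \Vert = \Vert T_2^2 \Vert = 1$ to obtain the Lipschitz constant $N L_f$.
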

As mentioned at the beginning of section 4 of \cite{hislop_marx_1}, weaker versions of Corollary \ref{coro_lipschitzdiscrete} could be extracted {\em{directly}} from known properties of operator-valued Lipschitz functions; the latter will however yield a dependence on higher order derivatives of $f$ in the upper bound (\ref{eq_prop_lipschitz_traceclass}), which is in essence equivalent to the bound in (\ref{eq_prop_lipschitz_deriv_just}).

\begin{remark} \label{rem_positivemeas_1}
We point out that Corollary \ref{coro_lipschitzdiscrete} improved the Lipschitz constant ($N \cdot L_f$) compared to our earlier result in \cite{hislop_marx_1}, Proposition 4.1 ($2 N^2 \cdot L_f$). This is a consequence of the observation that the measure $\mu_\lambda$ defined in (\ref{eq_prop_lipschitz_defnmeas}) is not merely complex but in fact positive. In view of (\ref{eq_prop_lipschitz_diffcomp}), this allows to replace the bound on the total variation of $\mu_\lambda$ in (\ref{eq_prop_lipschitz_meas_totalvar}) with simply a bound on the total mass, given in (\ref{eq_prop_lipschitz_meas_totalmass}).
\end{remark}

%%%%%%%%%%%%%%%%%%%%%%%%%%%%%%%%%%%%%%%%%%%%%%%%%%%%%%%%%%%%%%%%%%%%%%%%%%%%%%%%%%%%%%%%%%%%%%%%%%%%%%%%%%%%%%%%%%%%%%

\subsection{Lipschitz property for continuum models} \label{subsec_step2lipschitzprop_conti}

The Lipschitz property for continuum models may be formulated as follows. The random Schr\"odinger operators $H_\omega$ have the form $H_\omega = - \Delta + V_\omega$ where, for $g \in S(\R^d)$, the random potential $V_\omega$
is defined as in \eqref{eq:random-pot-cont1} by
$$
(V_\omega g)(x) := \sum_{j \in \Z^d} \omega_j \phi (x-j) g(x) ~\mbox{.}
$$

As before, we write $\chi_0$ for multiplication by the characteristic function $\chi_{\Lambda_0}$, where $\Lambda_0$ is the unit cube centered at the origin. The Lipschitz property for continuum models \eqref{eq_discreteschrodop} involves the variation of the $j^{th}$-random variable
\begin{equation} \label{eq_setup_Lipschitz-cont1}
\omega_j \mapsto \mathrm{Tr}\left(\chi_0 f(H_{j^\perp} + \omega_j \phi_j) \chi_0\right) ~\mbox{,}
\end{equation}
for an arbitrary, fixed lattice point $j \in \mathbb{Z}^d$, functions $f \in \mathcal{C}_0(\mathbb{R})$, and
\begin{equation} \label{eq_setup_Lipschitz_2}
H_{j^\perp} = - \Delta + \sum_{n \in \mathbb{Z}^d: n \neq j} \omega_n \phi_n ~\mbox{.}
\end{equation}

As in section \ref{subsec:step1-continuum}, for continuum models, we assume that the probability measures $\nu_\alpha, \nu \in \mathcal{P}([-C, C])$, for $0 < C < \infty$. This ensures that the random potential is bounded: $|V_\omega (x)| < M_C < \infty$ for all $\omega$. Consequently, the operator $H_\omega$ is lower semi-bounded and the expressions \eqref{eq:kappas1} are finite.

In order to prove the Lipschitz property for continuum models in all dimensions $d \geq 1$, we will regularize the trace similar to section \ref{subsec:step1-continuum}, i.e. we will rely on the fact that on $L^2(\mathbb{R}^d)$, the operator $\chi_0 (-\Delta + z)^{-m}$ is trace class for all exponents $m > \frac{d}{2}$ and $z \in \mathbb{C} \setminus [0, +\infty)$, with an explicit trace-norm estimate given by Theorem \ref{thm_traceclassop}.

The desired Lipschitz property for continuum Schr\"odinger operators will follow as a corollary to a generalization of Proposition \ref{prop_lipschitz_traceclass}. The main point of the latter is to relax the trace class condition for the operators $T_1$ and $T_2$ in Proposition \ref{prop_lipschitz_traceclass} by assuming that both operators be merely trace class {\em{relative to appropriate powers of the resolvent}} of the one-parameter family.

To formulate it, for $\lambda \in \mathbb{R}$ and $f \in Lip_c(\mathbb{R})$, let $H_\lambda$ and $\mathcal{F}_f(\lambda)$ be as given in (\ref{eq_lipschitz_func_set-up}) and (\ref{eq_lipschitz_func}), where $H_0$ is a lower semi-bounded, self adjoint operator and $T_1$ and $T_2$ are two positive and bounded operators on a Hilbert space $\mathcal{H}$. In particular, for a fixed closed interval $[-p,p]$, the one-parameter family of operators $H_\lambda$ with $\lambda \in [-p,p]$, is then uniformly lower semi-bounded with spectra $\sigma(H_\lambda)$ contained in $[a + 1, + \infty)$, for some $a \in \mathbb{R}$, and $R_\lambda(a) = (H_\lambda - a)^{-1}$ exists for each $\lambda \in [-p,p]$. Given this set-up, we claim:

\begin{prop} \label{prop_lipschitz_traceclass_modif3}
Given $p >0$ and $a \in \mathbb{R}$ such that
\begin{equation} \label{eq_prop_lipschitz_traceclass_modif3_cond_spectrum}
\inf_{\lambda \in [-p,p]} \mathrm{dist}(a; \sigma(H_\lambda)) \geq 1 ~\mbox{.}
\end{equation}
Suppose that for some $m >0$, the operator $R_\lambda(a)^\frac{m}{4} T_1 \in \mathcal{S}_1$ and $T_2^2 R_\lambda(a)^\frac{m}{4} \in \mathcal{S}_1$, for each $\lambda \in [-p,p]$, so that
\begin{align}\label{eq:kappas1}
\kappa_1^{(m)}(p)&:= \sup_{\lambda \in [-p,p]} \Vert R_\lambda(a)^\frac{m}{4} T_1 \Vert_{\mathcal{S}_1} < \infty ~\mbox{,} \nonumber \\
\kappa_2^{(m)}(p)&:= \sup_{\lambda \in [-p,p]} \Vert T_2^2 R_\lambda(a)^\frac{m}{4} \Vert_{\mathcal{S}_1} < \infty ~\mbox{.}
\end{align}
%\begin{align}
%\kappa^{(m)}(p) &:= \sup_{\lambda \in [-p,p]} \min\{ \Vert R_\lambda(a)^\frac{m}{4} T_1 \Vert_{\mathcal{S}_1} \Vert  T_2^2 R_\lambda(a)^\frac{m}{4} \Vert ,  \Vert R_\lambda(a)^\frac{m}{4} T_1 \Vert \Vert  T_2^2 R_\lambda(a)^\frac{m}{4} \Vert_{\mathcal{S}_1} \} ~\mbox{.}
%\end{align}
Then, for each $f \in Lip_c(\mathbb{R})$ with $\mathrm{supp}(f) \subseteq [-r,r]$ and $r \geq 1$, one has that for all $\lambda_1, \lambda_2 \in [-p,p]$:
\begin{equation}
\left\vert \mathcal{F}_f(\lambda_1) -  \mathcal{F}_f(\lambda_2) \right\vert \leq (1 + m) (r+ \vert a \vert)^m \min\{\kappa_1^{(m)}(p) \Vert T_2^2 \Vert , \kappa_2^{(m)}(p) \Vert T_1 \Vert \} ~\Vert f \Vert_{\mathrm{Lip}} ~\cdot \vert \lambda_1 - \lambda_2 \vert ~\mbox{.}
\end{equation}
\end{prop}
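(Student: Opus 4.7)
The plan is to adapt the proof of Proposition~\ref{prop_lipschitz_traceclass} by exploiting the compact support of $f$ together with the regularization supplied by the resolvent power $R_\lambda(a)^{m/4}$ afforded by the hypothesis. The uniform bound $\|R_\lambda(a)\| \leq 1$ for $\lambda \in [-p, p]$, coming from $\mathrm{dist}(a, \sigma(H_\lambda)) \geq 1$, will be used throughout.

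First, I would mollify as in item~5 of the proof of Proposition~\ref{prop_lipschitz_traceclass} to reduce to $f \in \mathcal{C}_c^\infty(\mathbb{R})$ with $\mathrm{supp}(f) \subseteq [-r, r]$. Then I would pass to the auxiliary function $g(s) := (s - a)^m f(s)$, which is compactly supported in $[-r, r]$ and satisfies
\[
\|g\|_{\mathrm{Lip}} \leq (1 + m)(r + |a|)^m \|f\|_{\mathrm{Lip}}
\]
(using $r \geq 1$). The identity $f(H_\lambda) = R_\lambda(a)^m g(H_\lambda)$ together with trace cyclicity gives $\mathcal{F}_f(\lambda) = \mathrm{Tr}(T_2^2 R_\lambda(a)^m g(H_\lambda))$, which is finite because $T_2^2 R_\lambda(a)^m \in \mathcal{S}_1$ with $\mathcal{S}_1$-norm bounded by $\kappa_2^{(m)}(p)$ (and analogously using $\kappa_1^{(m)}(p)$ after cycling).

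Next, I would apply the Helffer-Sj\"ostrand formula to $g$ with a degree-$2$ almost analytic extension and differentiate in $\lambda$, using the resolvent derivative $\frac{d}{d\lambda} R_\lambda(z) = - R_\lambda(z) T_1 R_\lambda(z)$ and the chain rule for $R_\lambda(a)^m$. Following items~2--3 of the proof of Proposition~\ref{prop_lipschitz_traceclass}, the resulting bilinear form
\[
(u, v) \mapsto \mathrm{Tr}\bigl(T_2 u(H_\lambda) T_1 v(H_\lambda) R_\lambda(a)^m T_2\bigr),
\]
defined on compactly supported $u, v$, is represented by a positive Baire measure $\mu_\lambda$ on $\mathbb{R}^2$ (positivity from $T_1 \geq 0$, $T_2 = T_2^*$, and $R_\lambda(a)^m \geq 0$ commuting with all spectral functions of $H_\lambda$). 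Trace-class control at each position of $T_1$ in the derivative expansion is secured by factoring either $T_2^2 R_\lambda(a)^{m/4}$ (yielding the $\kappa_2^{(m)}(p) \|T_1\|$-bound) or, after cycling, $R_\lambda(a)^{m/4} T_1$ (yielding the $\kappa_1^{(m)}(p) \|T_2^2\|$-bound), with leftover resolvent powers absorbed by $\|R_\lambda(a)\| \leq 1$. The Cauchy-Pompieu identity from item~3 then expresses $\frac{d}{d\lambda}\mathcal{F}_f(\lambda)$ as an integral of a divided-difference kernel of $g$ against $\mu_\lambda$, bounded in absolute value by $L_g \leq \|g\|_{\mathrm{Lip}}$ times the total mass; integrating from $\lambda_2$ to $\lambda_1$ produces the claimed inequality.

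The main technical obstacle will be the careful bookkeeping of the factorizations needed to guarantee trace-class membership in the derivative expansion — specifically, checking that at every position of $T_1$ in the chain-rule sum for $R_\lambda(a)^m$, one of the two splittings (off $T_2^2 R_\lambda(a)^{m/4}$ or $R_\lambda(a)^{m/4} T_1$) applies — and verifying that $\mu_\lambda$ remains a positive measure when the extra $R_\lambda(a)^m$ factor is inserted into the bilinear form. Both points hinge on the commutativity of $R_\lambda(a)^m$ with spectral functions of $H_\lambda$ and on the uniform bound $\|R_\lambda(a)\| \leq 1$ afforded by the distance hypothesis.
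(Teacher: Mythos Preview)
Your overall strategy matches the paper's: regularize via $g(s)=(s-a)^m f(s)$, apply Helffer--Sj\"ostrand, differentiate, and use a positive-measure representation. Two concrete points in your execution, however, do not go through as written.

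First, the bilinear form you write, $(u,v)\mapsto\mathrm{Tr}\bigl(T_2 u(H_\lambda)T_1 v(H_\lambda)R_\lambda(a)^m T_2\bigr)$, is not manifestly positive on the diagonal: commutativity of $R_\lambda(a)^m$ with spectral functions of $H_\lambda$ does not let you move $R_\lambda(a)^m$ past $T_1$ or $T_2$, so $\beta_\lambda(h,h)$ is not of the form $\mathrm{Tr}(A^*T_1A)$. The paper remedies this by first \emph{symmetrizing} the integrand as $\mathrm{Tr}\bigl(T_2 R_\lambda(a)^{m/2}R_\lambda(z)R_\lambda(a)^{m/2}T_2\bigr)$ before differentiating; the term $D^{(1)}_\lambda=\mathrm{Tr}\bigl(T_2 R_\lambda(a)^{m/2}R_\lambda(z)\,T_1\,R_\lambda(z)R_\lambda(a)^{m/2}T_2\bigr)$ then gives a genuinely positive $\mu_\lambda$.

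Second, and more substantially, the $m$ terms coming from the chain rule on $R_\lambda(a)^m$ (the paper's $D^{(2)}_\lambda$) each contain only a \emph{single} factor of $R_\lambda(z)$, so Cauchy--Pompieu returns $\int_{\R} g(s)\,\mathrm{d}\mu^{(k)}_\lambda(s)$ rather than a divided difference. These terms cannot be absorbed into a bilinear form on $\R^2$; the paper controls them instead by $\|g\|_\infty$ against \emph{complex} measures $\mu^{(k)}_\lambda$ on $\R$, whose total variation is bounded by the same $\min\{\kappa_1^{(m)}(p)\|T_2^2\|,\,\kappa_2^{(m)}(p)\|T_1\|\}$ (this is where the resolvent-power bookkeeping you describe is actually deployed, splitting cases $k\ge m/2$ and $k< m/2$). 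The factor $(1+m)$ in the final estimate thus comes from summing the one $D^{(1)}$ contribution (bounded via $\|g'\|_\infty$) with the $m$ separate $D^{(2)}$ contributions (each bounded via $\|g\|_\infty$), not from applying your a~priori estimate $\|g\|_{\mathrm{Lip}}\le(1+m)(r+|a|)^m\|f\|_{\mathrm{Lip}}$ to a single divided-difference integral.
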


\begin{proof}
1. Using the same approximation argument as in the proof of Proposition \ref{prop_lipschitz_traceclass}, it suffices to consider a real-valued function $f \in \mathcal{C}_c^\infty(\mathbb{R})$ with $\mathrm{supp}(f) \subseteq [-r,r]$. Set $F(s):= (s - a)^m f(s) \in Lip_c(\mathbb{R})$, which we note is still real-valued. We use the Helffer-Sj\"ostrand functional calculus to represent
\begin{align}
\mathcal{F}_f(\lambda) & = \mathrm{Tr} \{ T_2^2 R_\lambda(a)^m F(H_\lambda) \} \nonumber \\
& = \dfrac{1}{\pi} \int_{\mathbb{R}^2} \partial_{\bar z} \widetilde{F}(x,y) ~\mathrm{Tr}\{ T_2 R_\lambda(a)^\frac{m}{2} R_\lambda(z) R_\lambda(a)^\frac{m}{2} T_2 \} ~\mathrm{d}x ~ \mathrm{d}y ~\mbox{,} \label{eq_integrand_symm}
\end{align}
where we take the degree of the almost analytic extension of $F$ to be $2$. We note that, using $[R_\lambda(a), R_\lambda(z)] =0$ for $z \in \mathbb{R} \setminus \mathbb{C}$, we symmetrized the integrand in (\ref{eq_integrand_symm}) by splitting the powers of $R_\lambda(a)^m$. Similar to the proof of Proposition \ref{prop_lipschitz_traceclass}, see also Remark \ref{rem_positivemeas_1}, this will allow to recast (parts of) the derivative of $\mathcal{F}_f(\lambda)$ in terms of {\em{positive}} (instead of merely complex) measures, which will in turn improve the Lipschitz constants in the final result of Proposition \ref{prop_lipschitz_traceclass_modif3}.
Differentiating (\ref{eq_integrand_symm}) with respect to $\lambda$, one obtains
\begin{align} \label{eq_finallipschitzderiv}
\dfrac{\mathrm{d}}{\mathrm{d} \lambda} \mathcal{F}_f(\lambda) & = - \dfrac{1}{\pi} \int_{\mathbb{R}^2} \partial_{\bar z} \widetilde{F}(x,y) \left( D_\lambda^{(1)}(x,y) + D_\lambda^{(2)}(x,y) \right) ~\mathrm{d}x\mathrm{d}y ~\mbox{,}
\end{align}
with
\begin{align}
D_\lambda^{(1)}(x,y) & = \mathrm{Tr}\{ T_2 R_\lambda(a)^\frac{m}{2} R_\lambda(z) \cdot T_1 \cdot R_\lambda(z) R_\lambda(a)^\frac{m}{2} T_2 \}  ~\mbox{,} \label{eq_finallipschitz_d1} \\
D_\lambda^{(2)}(x,y) & = \sum_{k=1}^m \mathrm{Tr}\{ T_2^2 R_\lambda(a)^{m-k+1} T_1 R_\lambda(a)^k R_\lambda(z) \} ~\mbox{.} \label{eq_finallipschitz_d2}
\end{align}
We will treat the terms containing $D_\lambda^{(1)}(x,y)$ and $D_\lambda^{(2)}(x,y)$ separately.

\noindent
2. For the summand including $D_\lambda^{(1)}(x,y)$, using cyclicity of the trace, we first rewrite
\begin{align}
D_\lambda^{(1)}(x,y) = \mathrm{Tr}\{ T_2^2 R_\lambda(a)^\frac{m}{2} ~\cdot R_\lambda(z) ~\cdot T_1  R_\lambda(a)^\frac{m}{2} ~\cdot R_\lambda(z) \} ~\mbox{,}
\end{align}
which, taking advantage of the trace-class hypotheses of the proposition and of (\ref{eq_prop_lipschitz_traceclass_modif3_cond_spectrum}) shows that the trace is well-defined with
\bea
\vert D_\lambda^{(1)}(x,y) \vert &  \leq &  \Vert R_\lambda(a)^\frac{m}{2} T_1 \Vert_{\mathcal{S}_1} ~\Vert T_2^2 R_\lambda(a)^\frac{m}{2}  \Vert_{\mathcal{S}_1}  \dfrac{1}{(\mathrm{Im}(z))^2} \nonumber  \\
  & \leq &   \Vert R_\lambda(a)^\frac{m}{4} T_1 \Vert_{\mathcal{S}_1} ~\Vert T_2^2 R_\lambda(a)^\frac{m}{4}   \Vert_{\mathcal{S}_1}  \left( \dfrac{1}{ \mathrm{dist} (a; \sigma(H_\lambda) ) } \right)^{\frac{m}{2}} ~
\dfrac{1}{ ( \mathrm{Im}(z) )^2 } \nonumber \\
 & \leq &  \kappa_1^{(m)}(p) \kappa_2^{(m)}(p)    \dfrac{1}{(\mathrm{Im}(z))^2} ,
\eea
using definitions \eqref{eq:kappas1}.
%\end{align}
%%%%%%%%%%%%%%%%%%%%%%%%%%%%%
% \Vert R_\lambda(a)^\frac{m}{2} T_1 \Vert_{\mathcal{S}_1} ~\Vert T_2^2 R_\lambda(a)^\frac{m}{2} \Vert_{\mathcal{S}%_1}   ~\left(\sup_{\lambda \in [-p,p]} \dfrac{1}{\mathrm{dist}(a; \sigma(H_\lambda))}\right)^\frac{m}{2} \dfrac{1}%%{(\mathrm{Im}(z))^2} \\
%& \leq \Vert R_\lambda(a)^\frac{m}{2} T_1 \Vert_{\mathcal{S}_1} ~\Vert T_2^2 R_\lambda(a)^\frac{m}{2} %\Vert_{\mathcal{S}_1}  \dfrac{1}{(\mathrm{Im}(z))^2}  ~\mbox{.}
%\end{align}
%Here, we also used that the distance of $a$ to the spectrum of $H_\lambda$ in (\ref{eq_prop_lipschitz_traceclass_modif3_cond_spectrum}).
In analogy to the proof of Proposition \ref{prop_lipschitz_traceclass}, we consider the bilinear functional on $\mathcal{C}_c(\mathbb{R}) \times \mathcal{C}_c(\mathbb{R})$ given by
\begin{equation}
\beta_\lambda(f,g) = \mathrm{Tr}(  T_2 R_\lambda(a)^\frac{m}{2} f(H_\lambda) \cdot T_1 \cdot g(H_\lambda) R_\lambda(a)^\frac{m}{2} T_2 ) ~\mbox{.}
\end{equation}
By the spectral theorem for compact operators, one has
\begin{align}
R_\lambda(a)^\frac{m}{2} T_1 & = \sum_{l \in \mathbb{N}} \lambda_l^{(1)} \vert \phi_l^{(1)} \rangle \langle \eta_l^{(1)} \vert ~\mbox{,} \\
T_2^2 R_\lambda(a)^\frac{m}{2} & = \sum_{l \in \mathbb{N}} \lambda_l^{(2)} \vert \phi_l^{(2)} \rangle \langle \eta_l^{(2)} \vert ~\mbox{,}
\end{align}
for appropriate orthonormal bases $\{\phi_l^{(j)}, l \in \mathbb{N} \}$ and $\{\eta_l^{(j)}, l \in \mathbb{N} \}$ and $j =1,2$.  Consequently, we can represent the bilinear form $\beta_\lambda(f,g)$ in terms of a complex Baire measure $\mu_\lambda$ on $\mathbb{R}^2$,
\begin{equation}
\beta_\lambda(f,g) = \int_{\mathbb{R}^2} f(s) g(t) \mathrm{d} \mu_\lambda(s,t) ~\mbox{,}
\end{equation}
with the measure $\mu_\lambda$ given by
\beq
\mathrm{d} \mu_\lambda(s,t)  :=  \sum_{l,j \in \mathbb{N}} \lambda_l^{(2)} \lambda_j^{(1)} \mathrm{d} \mu_{\eta_l^{(2)}; \phi_j^{(1)}}(s) \otimes \mathrm{d} \mu_{\eta_j^{(1)}; \phi_l^{(2)}}(t) .
\eeq
The total variation $|\mu_\lambda|$ is bounded by
\bea
\vert \mu_\lambda \vert & \leq & 2 \Vert R_\lambda(a)^\frac{m}{2} T_1 \Vert_{\mathcal{S}_1} ~\Vert T_2^2 R_\lambda(a)^\frac{m}{2} \Vert_{\mathcal{S}_1}  \nonumber  \\
%& \leq 2 \kappa_1^{(m)}(p) \kappa_2^{(m)}(p) ~\left(\sup_{\lambda \in [-p,p]} \dfrac{1}{\mathrm{dist}(a; \sigma(H_\lambda))}\right)^\frac{m}{2}
& \leq & 2 \kappa_1^{(m)}(p) \kappa_2^{(m)}(p) ~\mbox{.} \label{eq_boundtotalvarfinallipschitz}
\eea
In particular, we thus obtain
\begin{align}
- \dfrac{1}{\pi} \int_{\mathbb{R}^2} \partial_{\bar z} \widetilde{F}(x,y) D_\lambda^{(1)}(x,y) ~\mathrm{d}x\mathrm{d}y & = \int_{\mathbb{R}^2} \left\{ \dfrac{-1}{\pi} \int_{\mathbb{R}^2} \dfrac{\partial_{\bar z} \widetilde{F}(x,y)}{(s-z)(t-z)}  ~\mathrm{d} x ~\mathrm{d}y  \right\} ~\mathrm{d}\mu_\lambda(s,t) ~\mbox{.}
\end{align}
Notice that the positivity of $T_1$ and the symmetry of the expression for $D_1(x,y)$ in (\ref{eq_finallipschitz_d1}), implies that for each function $f \geq 0$, $f \in \mathcal{C}_c(\mathbb{R})$, one has
\begin{equation}
\mu_\lambda(f,f) = \mathrm{Tr}\{ T_2 R_\lambda(a)^\frac{m}{2} f(x) \cdot T_1 \cdot f(y) R_\lambda(a)^\frac{m}{2} T_2 \} \geq 0 ~\mbox{.}
\end{equation}
This shows that $\mu_\lambda$ is positive on all cubes in $\mathbb{R}^2$, which, by the same type of argument as at the end of item 2. in the proof of Proposition \ref{prop_lipschitz_traceclass}, reveals that $\mu_\lambda$ is in fact a {\em{positive}} measure on $\mathbb{R}^2$ with total mass bounded by
\begin{align}
\mu_\lambda(\mathbb{R}^2) & = \mathrm{Tr}\{ T_2 R_\lambda(a)^\frac{m}{2} \cdot T_1 \cdot R_\lambda(a)^\frac{m}{2} T_2 \}  \nonumber  \\
& \leq \min\{ \Vert T_2^2 R_\lambda(a)^\frac{m}{2} \Vert \Vert T_1 R_\lambda(a)^\frac{m}{2} \Vert_{\mathcal{S}_1} , \Vert T_2^2 R_\lambda(a)^\frac{m}{2} \Vert_{\mathcal{S}_1}\Vert T_1 R_\lambda(a)^\frac{m}{2} \Vert \} \nonumber  \\
& \leq \min\{ \Vert T_2^2 \Vert \Vert T_1 R_\lambda(a)^\frac{m}{4} \Vert_{\mathcal{S}_1} , \Vert T_2^2 R_\lambda(a)^\frac{m}{4} \Vert_{\mathcal{S}_1}\Vert T_1 \Vert \} ~\mbox{.} \label{eq_prop_lipschitz_traceclass_modif3_totalmass_1}
\end{align}
%Here, we used the hypothesis on the distance of $a$ to the spectrum of $H_\lambda$ in (\ref{eq_prop_lipschitz_traceclass_modif3_cond_spectrum}).
Thus, by the Cauchy-Pompieu formula and the mean-value theorem, an argument analogous to (\ref{eq_cauchypompie_deriv_1}) - (\ref{eq_prop_lipschitz_diffcomp}) yields
\begin{align}
\left\vert \dfrac{1}{\pi} \int_{\mathbb{R}^2} \partial_{\bar z} \widetilde{F}(x,y) \right. & \left. D_\lambda^{(1)}(x,y) ~\mathrm{d}x ~\mathrm{d}y \right\vert \leq \Vert F^\prime \Vert_\infty \mu_\lambda(\mathbb{R}^2)  \nonumber  \\
& \leq (r+ \vert a \vert)^m \min\{\kappa_1^{(m)}(p) \Vert T_2^2 \Vert , \kappa_2^{(m)}(p) \Vert T_1 \Vert \} ~\Vert f \Vert_{\mathrm{Lip}} ~\mbox{,} \label{eq_finallipschitz_finalbound1}
\end{align}
where we used the bound in (\ref{eq_prop_lipschitz_traceclass_modif3_totalmass_1}) and that $r \geq 1$.

\noindent
3. To bound the term in (\ref{eq_finallipschitzderiv}) containing $D_\lambda^{(2)}(x,y)$, we analyze the contributions of each of the summands $1 \leq k \leq m$ in (\ref{eq_finallipschitz_d2}).
For $\frac{m}{2} \leq k \leq m$, we rewrite
\begin{align} \label{eq_finallipschitz_decompd2}
\mathrm{Tr}\{ T_2^2 R_\lambda(a)^{m-k+1} T_1 R_\lambda(a)^k R_\lambda(z) \} = \mathrm{Tr}\{ R_\lambda(a)^{\beta_k} T_2^2 R_\lambda(a)^{m-k+1} T_1 R_\lambda(a)^{k - \beta_k} R_\lambda(z) \} ~\mbox{,}
\end{align}
for $0 \leq \beta_k \leq k$ to be determined. Optimizing the exponents so that $\beta_k = k - \beta_k$, yields $\beta_k = \frac{k}{2} \geq \frac{m}{4}$, which shows that the trace in (\ref{eq_finallipschitz_decompd2}) is well-defined with
\begin{align} \label{eq_finallipschitz_upperbound_meask}
\vert \mathrm{Tr}\{ R_\lambda(a)^{\beta_k} T_2^2  \cdot~R_\lambda(a)^{m-k+1}  & \cdot~T_1 R_\lambda(a)^{k - \beta_k} \cdot~R_\lambda(z) \} \vert \nonumber \\
& \leq \dfrac{1}{\vert \mathrm{Im}(z) \vert} \cdot \min\{\kappa_1^{(m)}(p) \Vert T_2^2 \Vert , \kappa_2^{(m)}(p) \Vert T_1 \Vert \}  ~\mbox{.}
\end{align}
We mention that the estimate in (\ref{eq_finallipschitz_upperbound_meask}) uses that for each operator $A \in \mathcal{S}_1$, one has $A^* \in \mathcal{S}_1$ with $\Vert A^* \Vert_{\mathcal{S}_1} = \Vert A \Vert_{\mathcal{S}_1}$. % as a consequence of the polar the polar decomposition.
For $1 \leq k \leq \frac{m}{2}$, we write
\begin{align} \label{eq_finallipschitz_decompd2_2}
\mathrm{Tr}\{ T_2^2 & R_\lambda(a)^{m-k+1} T_1 R_\lambda(a)^k R_\lambda(z) \}  \nonumber \\
 & = \mathrm{Tr}\{R_\lambda(z) ~\cdot T_2^2 R_\lambda(a)^{m-k - \beta_k +1/2} ~\cdot R_\lambda(a)^{\beta_k +1/2} T_1 ~\cdot R_\lambda(a)^{k} \} ~\mbox{.}
\end{align}
The exponents are optimized by letting $\beta_k = \frac{m-k}{2} \geq \frac{m}{4}$, in which case, arguments analogous to the case $k \geq \frac{m}{2}$ give rise to the same upper bound given in (\ref{eq_finallipschitz_upperbound_meask}).
Consequently, for all $1 \leq k \leq m$, the linear functional on $\mathcal{C}_c(\mathbb{R})$
\begin{equation}
f \mapsto \mathrm{Tr}\{ T_2^2 R_\lambda(a)^{m-k+1} T_1 R_\lambda(a)^k \cdot f(H_\lambda) \}
\end{equation}
is bounded and may thus be represented by a complex Baire measure $\mu_\lambda^{(k)}$ on $\mathbb{R}$ with total variation bounded by
\begin{equation}
\vert \mu_\lambda^{(k)} \vert \leq \min\{\kappa_1^{(m)}(p) \Vert T_2^2 \Vert , \kappa_2^{(m)}(p) \Vert T_1 \Vert \} ~\mbox{.}
\end{equation}
Thus, we may compute
\begin{align}
- \dfrac{1}{\pi} \int_{\mathbb{R}^2} \partial_{\bar z} & \widetilde{F}(x,y) \mathrm{Tr}\{ T_2^2 R_\lambda(a)^{m-k+1} T_1 R_\lambda(a)^k R_\lambda(z) \} ~\mathrm{d}x ~\mathrm{d}y \nonumber  \\
   & = \int_{\mathbb{R}} \left\{ \dfrac{-1}{\pi} \int_{\mathbb{R}^2} \dfrac{\partial_{\bar z} \widetilde{F}(x,y)}{(s-z)}  ~\mathrm{d} x ~\mathrm{d}y  \right\} ~\mathrm{d}\mu_\lambda^{(k)}(s)  = - \int_{\mathbb{R}} F(s) ~\mathrm{d}\mu_\lambda^{(k)}(s) ~\mbox{,}
\end{align}
which yields the bound
\begin{align}
\left\vert \dfrac{1}{\pi} \int_{\mathbb{R}^2} \partial_{\bar z} \right. & \left. \widetilde{F}(x,y) \mathrm{Tr}\{ T_2^2 R_\lambda(a)^{m-k+1} T_1 R_\lambda(a)^k R_\lambda(z) \} ~\mathrm{d}x\mathrm{d}y \right\vert \leq \Vert F \Vert_\infty ~\vert \mu_\lambda^{(k)} \vert \nonumber \\
& \leq \min\{\kappa_1^{(m)}(p) \Vert T_2^2 \Vert , \kappa_2^{(m)}(p) \Vert T_1 \Vert \} (r+ \vert a \vert)^m ~\Vert f \Vert_\infty ~\mbox{.} \label{eq_finallipschitz_finalbound2}
\end{align}

\noindent
4. In summary, combining the estimates in (\ref{eq_finallipschitz_finalbound1}) and (\ref{eq_finallipschitz_finalbound2}), we conclude
\begin{align}
\left \vert \dfrac{\mathrm{d}}{\mathrm{d} \lambda} \mathcal{F}_f(\lambda) \right \vert & \leq (1 + m) (r+ \vert a \vert)^m \min\{\kappa_1^{(m)}(p) \Vert T_2^2 \Vert , \kappa_2^{(m)}(p) \Vert T_1 \Vert \} ~\Vert f \Vert_{\mathrm{Lip}} ~\mbox{,}
\end{align}
which implies the claim.
\end{proof}

To see how Proposition \ref{prop_lipschitz_traceclass_modif3} relates to the Lipschitz property for the continuum random Schr\"odinger operators in (\ref{eq_contschrodop}), first observe that fixing a site $j \in \mathbb{Z}^d$, the analogue of the maps in (\ref{eq_setup_Lipschitz}) - (\ref{eq_setup_Lipschitz_1}) is given by
\begin{equation} \label{eq_setup_Lipschitz_conti_3}
\omega_j \mapsto \mathrm{Tr}\left(  \chi_0 f(H_{j^\perp} + \omega_j \phi(. - j))   \chi_0 \right) ~\mbox{,}
\end{equation}
where
\begin{equation} \label{eq_setup_Lipschitz_1_conti_3}
H_{j^\perp} = - \Delta + \sum_{n \in \mathbb{Z}^d, n \neq j} \omega_n \phi(. - n) ~\mbox{.}
\end{equation}
Here, for $j \in \mathbb{Z}^d$, we write $\omega = (\omega_j , \omega_j^\perp)$, where $\omega_j^\perp$ is the sequence obtained from $\omega$ by deleting the entry $\omega_j$.

For given dimension $d \in \mathbb{N}$, compactly supported values of the random potential $\omega \in [-C,C]^{\mathbb{Z}^d}$, and a site $j \in \mathbb{Z}^d$, taking
\begin{equation}
T_1 = \phi(. - j) ~\mbox{, } T_2 = \chi_0 ~\mbox{, } H_0 = H_{j^\perp} ~\mbox{, } \lambda = \omega_j \in [-C,C] ~\mbox{,}
\end{equation}
Lemma \ref{lemma:rel-resolv-bd1} shows that for every $m > 2 d$ (i.e. $\frac{m}{4} > \frac{d}{2}$, see Theorem \ref{thm_traceclassop}) and $\phi \in \mathcal{C}_c^{2(\frac{m}{4}-1)}(\mathbb{R}^d)$, all hypotheses of Proposition \ref{prop_lipschitz_traceclass_modif3} are met (letting, e.g., $a = -C \Vert \phi \Vert_\infty - 1$). In particular, we conclude that:
\begin{corollary}[Lipschitz property for continuum random Schr\"odinger operators on $\R^d$ with optimized regularity] \label{coro_lipschitzcont1_optreg}
Consider the continuum Schr\"odinger operator {\bf[{Cont}]} with $\phi \in \mathcal{C}_c^{k_v}(\mathbb{R}^d)$ such that $k_v > \max\{d - 2; 0\}$ and $\omega \in [-C,C]^{\mathbb{Z}^d}$, for $0 < C < +\infty$. Then, for each fixed site $j \in \mathbb{Z}^d$ and $f \in Lip_c(\mathbb{R})$ with $\mathrm{supp}(f) \subseteq [-r,r]$ and $r \geq 1$, the map in (\ref{eq_setup_Lipschitz_conti_3}) is Lipschitz with Lipschitz constant bounded above by
\begin{equation}
C_7(k_v, \Vert \phi \Vert_{\mathcal{C}^{k_v}}, C) r^{2(k_v+1)} \cdot \|f \|_{\rm Lip} ~\mbox{.}
\end{equation}
\end{corollary}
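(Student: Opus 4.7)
The plan is to specialize Proposition \ref{prop_lipschitz_traceclass_modif3} to the one-parameter family $H_\lambda := H_{j^\perp} + \lambda \phi(\cdot - j)$ on $L^2(\R^d)$, with $T_1 = \phi(\cdot - j)$, $T_2 = \chi_0$, $H_0 = H_{j^\perp}$, and $\lambda = \omega_j$ varying over the compact interval $[-C,C]$. Because $|\omega_n| \le C$ and $0 \le \phi \le 1$, the random potential $V_\omega$ is uniformly bounded by $C$, so $H_\lambda \ge -C$ independently of $\lambda$; choosing $a \le -C - 1$ then yields $\operatorname{dist}(a, \sigma(H_\lambda)) \ge 1$ uniformly for $\lambda \in [-C,C]$, which is exactly hypothesis \eqref{eq_prop_lipschitz_traceclass_modif3_cond_spectrum}.

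The main task is to verify the trace-class conditions \eqref{eq:kappas1} for an appropriate positive integer $m$. I would choose $m$ so that both $m/4 > d/2$ (so that Theorem \ref{thm_traceclassop} applies) and $2(m/4 - 1) \le k_v$ (so that Lemma \ref{lemma:rel-resolv-bd1} applies with exponent $m/4$); the hypothesis $k_v > \max\{d-2, 0\}$ is tuned precisely to permit such a choice, and leads to an $m$ of order $2(k_v+1)$. With $m$ fixed, I factor
\[
R_\lambda(a)^{m/4} T_1 = \bigl[R_\lambda(a)^{m/4} R_0(a)^{-m/4}\bigr] \cdot \bigl[R_0(a)^{m/4} \phi(\cdot - j)\bigr],
\]
and analogously $T_2^2 R_\lambda(a)^{m/4} = \bigl[\chi_0 R_0(a)^{m/4}\bigr] \cdot \bigl[R_0(a)^{-m/4} R_\lambda(a)^{m/4}\bigr]$. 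The first bracket on each line is bounded in operator norm, uniformly in $\lambda$, by Lemma \ref{lemma:rel-resolv-bd1} applied to $V_{\omega; j^\perp} := \sum_{n \ne j} \omega_n \phi_n$, whose $\mathcal{C}^{2(m/4-1)}$-norm is controlled by $C$ and $\|\phi\|_{\mathcal{C}^{k_v}}$ since the translates $\phi_n$ have pairwise disjoint supports. The second bracket is trace class by Theorem \ref{thm_traceclassop}, because $\phi(\cdot - j)$ and $\chi_0$ have compact support (and hence lie in every $L^2_\delta$), while the Fourier multiplier $p \mapsto (|p|^2 - a)^{-m/4}$ lies in $L^2_\delta$ for some $\delta > d/2$ by virtue of $m/4 > d/2$.

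Once $\kappa_1^{(m)}(C)$ and $\kappa_2^{(m)}(C)$ have been bounded in terms of $d$, $C$, and $\|\phi\|_{\mathcal{C}^{k_v}}$, Proposition \ref{prop_lipschitz_traceclass_modif3} directly yields Lipschitz continuity of the map $\omega_j \mapsto \mathrm{Tr}(\chi_0 f(H_\omega) \chi_0)$ with constant bounded by $(1+m)(r+|a|)^m \min\{\kappa_1^{(m)}(C), \kappa_2^{(m)}(C)\} \|f\|_{\mathrm{Lip}}$. Absorbing the $m$- and $a$-dependent factors into a single constant $C_7(k_v, \|\phi\|_{\mathcal{C}^{k_v}}, C)$, and using $r \ge 1$ to estimate $(r+|a|)^m \lesssim r^m \lesssim r^{2(k_v+1)}$ with $m$ of order $2(k_v+1)$, recovers the stated bound. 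The only real subtlety is the bookkeeping around the exponent $m$: it must be large enough for Theorem \ref{thm_traceclassop} yet small enough for Lemma \ref{lemma:rel-resolv-bd1}, and the hypothesis on $k_v$ is designed exactly to make this two-way compatibility possible.
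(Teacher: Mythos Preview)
Your proposal is correct and follows essentially the same route as the paper: specialize Proposition \ref{prop_lipschitz_traceclass_modif3} with $T_1=\phi(\cdot-j)$, $T_2=\chi_0$, $H_0=H_{j^\perp}$, $\lambda=\omega_j\in[-C,C]$, choose $a$ below $-C\|\phi\|_\infty-1$, and verify the trace-class hypotheses via the factorization through the free resolvent using Lemma \ref{lemma:rel-resolv-bd1} and Theorem \ref{thm_traceclassop}. Your explicit remark that the disjoint supports of the $\phi_n$ keep the $\mathcal{C}^{2(m/4-1)}$-norm of the full (non-compactly supported) potential under control is a point the paper leaves implicit, and your bookkeeping on the admissible range $2d<m\le 2k_v+4$ matches the paper's requirement $m>2d$ with $\phi\in\mathcal{C}_c^{2(m/4-1)}$.
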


%%%%%%%%%%%%%%%%%%%%%%%%%%%%%%%%%%%%%%%%%%%%%%%%%%%%%%%%%%%%%%%%%%%%%%%%%%%%%%%%%%%%%%%%%%%%%%%%%%%%%%%%%%%%%%

\section{Proof of the main theorems}
\setcounter{equation}{0}

We present the outline of the proofs of Theorems \ref{thm:main-lattice1} and \ref{thm:main-cont1}. Since many of the details are the same as in the proofs in \cite{hislop_marx_1}, we refer the reader to that paper for some details.
%Furthermore, we unify notation to treat the lattice model with noncompactly-supported probability measures and the continuum model together.
We recall the two main components for the quantitative bound on the DOSm.

\begin{itemize}

\item {\bf{Finite range reduction:}} We recall that the operator $H_{\omega;L}^{(0)}$ depends on only finitely-many random variables.
For the lattice model with single-site probability measures with finite first moments, we proved in Theorem \ref{thm_finiterangereduction}:
\begin{equation}\label{eq:finite-range-lattice1}
n_\nu^{(\infty)}(f) = \frac{1}{N} \mathbb{E}_{\nu^{(L)}} [ \mathrm{Tr}(P_0 f(H_{\omega; L}^{(0)}) P_0)] + R_f[L; \nu] ~\mbox{,}
\end{equation}
where
\begin{equation}\label{eq_thm_finite3}
\vert R_f[L; \nu] \vert \leq c_1 \dfrac{\mu_1[\nu]}{NL}  \Vert f \Vert_{3+d} ~\mbox{.}
\end{equation}
for all $f \in \mathcal{C}_c^\beta(\mathbb{R})$ with $\beta \geq 3 + d$.
For the continuum model with compactly supported single site probability measures, we proved results analogous to \eqref{eq:finite-range-lattice1}--\eqref{eq_thm_finite3} in Theorem \ref{thm_finiterangereduction-cont1}:
\begin{equation}\label{eq:finite-range-cont2}
n_\nu^{(\infty)}(f) = \mathbb{E}_{\nu^{(L)}} [ \mathrm{Tr}(\chi_0 f(H_{\omega; L}^{(0)}) \chi_0)] + R_f[L; \nu] ~\mbox{,}
\end{equation}
where
\begin{equation}\label{eq_thm_finite_cont3}
\vert R_f[L; \nu] \vert \leq  \dfrac{c_4}{L}  \Vert f \Vert_{3+d;m} ~\mbox{,}
\end{equation}
where $\mu_{max} [\nu] := \max_{k=1, \ldots, m-1}\mu_k[\nu]$ and $m > \frac{d}{2}$.

\item {\bf{Lipschitz property:}} For the lattice model with single-site probability measures with finite first moments,
we concluded from %quantitative continuity estimates for lattice models in Proposition \ref{prop_lipschitz_traceclass} on the map
Proposition \ref{prop_lipschitz_traceclass} that, for every $f \in Lip_c(\R)$, the map
\beq
\omega_n \mapsto {\rm Tr} (P_0 f(H_{\omega_n \omega_n^\perp;L}^{(0)}) P_0), ~~ n \in \{ j \in \Z^d ~|~ \| j \|_\infty \leq K L \} ,
\eeq
is bounded Lipschitz with Lipschitz constant $\leq N L_f$ (see Corollary \ref{coro_lipschitzdiscrete}), where
%for a rank
$N$ is the rank of the projection $P_0$. For the continuum mode  with compactly supported single site probability measuresl, Corollary \ref{coro_lipschitzcont1_optreg} obtains an analogous statement for the map  %quantitative estimates on
\beq\label{eq:cont-lip-summ1}
\omega_n \mapsto {\rm Tr} (P_0 f(H_{\omega_n \omega_n^\perp;L}^{(0)}) P_0), ~~ n \in \{ j \in \Z^d ~|~ \| j \|_\infty \leq L \},
\eeq
for $P_0 = \chi_0$ and for every $f \in Lip_c(\R)$. In this case the Lipschitz constant is bounded above by $C_7 r^{2 (k_v +1)} \Vert f \Vert_{Lip}$, where $k_v > d+2$ and $r \geq 1$ is such that $\mathrm{supp}(f) \subseteq [ - r, r]$.
\end{itemize}

As can be seen, the structure of the finite range reduction and the Lipschitz property are the same in both the discrete and the continuum case. We will thus only present a proof of Theorem \ref{thm:main-lattice1}; Theorem \ref{thm:main-cont1} is proven in complete analogy.

%%%%%%%%%%%%%%%%%%%%%%%%%%%%%%%%%%%%%%%%%%%%%%%%%%%%%%%%%%%%%%%%%%%%%%%%%%%%%%%%%%%%%%%%%%%
\subsection{Proof of Theorem \ref{thm:main-lattice1} for lattice models}

We start by proving part (ii.) of Theorem \ref{thm:main-lattice1} about the modulus of continuity of the DOSm. The density of $\mathcal{C}_c^{M_1}(\mathbb{R})$ in $Lip_c(\mathbb{R})$ then implies the qualitative continuity statement for the DOSm in part (i.).

First, observe that, for $k,l \in \mathbb{N} \cup \{0 \}$, given $f \in \mathcal{C}_c^k(\mathbb{R})$ with $\mathrm{supp}(f) \subseteq [-r,r]$, the definition of the norm $\Vert f \Vert_{k;l}$ implies
\begin{equation} \label{eq_boundfklfinal}
\Vert f \Vert_{k;l} \leq r^{k+l} \Vert f \Vert_{\mathcal{C}^k} ~\mbox{,}
\end{equation}
with $\Vert f \Vert_{\mathcal{C}^k}$ as given in (\ref{eq_cbetanorm}).

Turning to the proof, given $C>0$, let $\nu_1 \neq \nu_2 \in \mathcal{P}_{1;C}(\mathbb{R})$ as defined in (\ref{eq_probspace_boundedfinitemoment}), i.e. $\mu_1[\nu_j] \leq C$, for $j=1,2$. Write $\epsilon:= d_w(\nu_1, \nu_2) > 0$.

We take $M_1 = 3+d$ in Theorem \ref{thm:main-lattice1}. For $f \in \mathcal{C}^{3 + d}_c (\R)$, ${\mbox{supp}} ~f \subseteq [- r, r]$, and $L \in \N$ arbitrary, the definition of the DOSm, the finite range reduction \eqref{eq:finite-range-lattice1}--\eqref{eq_thm_finite3}, and the bound on $\| f \|_{3+d}=\Vert f \Vert_{3+d;0}$ in (\ref{eq_boundfklfinal}) give
% \eqref{eq:beta-norm-bd1} gives
\bea\label{eq:dosm-est1}
  | n_{\nu_1}^{(\infty)} (f) - n_{\nu_2}^{(\infty)} (f) |  & \leq & \frac{1}{N} \left| \E_{\nu_1^{(L)}}  {\rm Tr} ~ \left( P_0 f(H_{\omega;L}^{(0)}) P_0 \right) -   \E_{\nu_2^{(L)}}  {\rm Tr} ~ \left( P_0 f(H_{\omega;L}^{(0)}) P_0 \right) \right| \nonumber \\
 & & + \frac{2 C c_1}{NL} \| f \|_{3+d}   \nonumber \\
 & \leq  & 2 (2L+1)^d N L_f \cdot \epsilon +  \frac{2 C c_1}{NL} r^{3+d} \| f \|_{\mathcal{C}^{3+d}} \label{eq_final_1}  \\
 & \leq & \frac{C_1}{6} r^{3 +d} \Vert f \Vert_{\mathcal{C}^{3+d}} \left(3 L^d \cdot \epsilon + \frac{1}{L} \right)  \label{eq_final_2} ~\mbox{.}
\eea
for $C_1>0$ depending on $C$, $d$, and $N$. We note that we used the Lipschitz property in the first term in (\ref{eq_final_1}).

We set
\begin{equation} \label{eq_defnrho1}
\rho_1:= \left(\frac{2}{3}\right)^{1+d} ~\mbox{,}
\end{equation}
and suppose that $\epsilon= d_w(\nu_1, \nu_2) < \rho_1$. Since $L \in \mathbb{N}$ was arbitrary until now, we can choose $L \in \mathbb{N}$ such that
\begin{equation} \label{eq_finalL}
\frac{1}{3} \epsilon^{-\xi} \leq L < \epsilon^{-\xi} ~\mbox{,}
\end{equation}
for $\frac{1}{1+d} \leq \xi$ to be determined later. Notice that since $\epsilon= d_w(\nu_1, \nu_2) < \rho_1$, the definition of $\rho_1$ and $\frac{1}{1+d} \leq \xi$ implies that $L \in \mathbb{N}$ as in (\ref{eq_finalL}) exists.

From (\ref{eq_final_2}), we thus conclude
\begin{equation}
 | n_{\nu_1}^{(\infty)} (f) - n_{\nu_2}^{(\infty)} (f) | \leq \frac{C_1}{2} r^{3 +d} \Vert f \Vert_{\mathcal{C}^{3+d}} ( \epsilon^{-\xi d + 1} + \epsilon^{\xi} ) ~\mbox{,}
\end{equation}
which, optimizing in $\xi$, yields $\xi=\frac{1}{1+d}$. In summary, we obtain the claimed modulus of continuity for the DOSm in Theorem \ref{thm:main-lattice1} (ii), i.e.
\begin{equation}
  | n_{\nu_1}^{(\infty)} (f) - n_{\nu_2}^{(\infty)} (f) | \leq C_1 r^{3 +d} \Vert f \Vert_{\mathcal{C}^{3+d}} ~d_w(\nu_1, \nu_2)^{\frac{1}{1 + d}} ~\mbox{.}
\end{equation}

\begin{remark}
In our previous work for the lattice model with compactly supported single-site probability measures, we obtained the smaller H\"older exponent $(1 + 2d)^{-1}$ but the method allowed us to treat treat less regular functions $f \in Lip_c(\R)$.
\end{remark}

The proof of Theorem \ref{thm:main-lattice1} (iii.) for the IDS is similar to the one presented for Theorem 3.2 in \cite{hislop_marx_1} requiring the approximation of the step function (see also remark \ref{rem_mainthm_disc} (ii.)) and optimization. As in \cite{hislop_marx_1}, this uses that the IDS is $\log$-H\"older continuous in $E$, see \cite{craig-simon1}, i.e. for constants $C>0$, $A \in \mathbb{R}$ and each fixed measure $\nu \in \mathcal{P}_{1;C}([A,+ \infty))$, one has that for all $E_0 \in \mathbb{R}$, there exist $K_{d;C;A; E_0}$ such that for all $E \leq E_0$ and all $0 < \epsilon < \frac{1}{2}$, one has
\begin{equation} \label{eq_IDSdisc}
\vert N_\nu(E) - N_\nu(E + \epsilon) \vert \leq \dfrac{K_{d;C; A; E_0} }{\log (\frac{1}{\epsilon})}~\mbox{.}
\end{equation}
We note that the $E_0$ dependence of the constant takes into account that the potential is unbounded since $\nu$ has unbounded support.

For the continuum model, the analogue of (\ref{eq_IDSdisc}), for dimensions $d=1,2,3$, was established by Bourgain and Klein \cite{bourgain-klein1}, in which case (\ref{eq_IDSdisc}) is replaced by
\begin{equation}\label{eq_IDScont}
\vert N_\nu(E) - N_\nu(E + \epsilon) \vert \leq \dfrac{K_{d;C;E_0} }{[ \log (\frac{1}{\epsilon}) ]^{\kappa_d}} ~\mbox{, where } \kappa_1=1 ~,~ \kappa_2 = \frac{1}{4} ~,~ \kappa_3 = \frac{1}{8} ~\mbox{.}
\end{equation}

\section{Appendix 1: Almost analytic extensions and the Helffer-Sj\"ostrand functional calculus}\label{sec:appendix-helffer-sj1}

In what follows, we will rely on some aspects of the theory of almost analytic extensions of functions on the real line and the Helffer-Sj\"ostrand functional calculus. For reference purposes, we briefly summarize some facts here which will be of use to us; for a more detailed and pedagogical account, we refer e.g. to \cite{davies1}.

Let $\tau \in \mathcal{C}^\infty$ be a fixed bump function on $\mathbb{R}$ with support in $[-2,2]$, satisfying $\tau \equiv 1$ on $[-1,1]$. Set
\begin{equation} \label{eq_rescaledBump}
\sigma(x,y) : = \tau\left( \dfrac{y}{\langle x \rangle}   \right) ~\mbox{.}
\end{equation}
%where, as common, $\langle x \rangle := \sqrt{1 + \vert x \vert^2}$.

Given a (complex-valued) function $f \in \mathcal{C}_c^\infty(\mathbb{R})$ and an integer $P \in \mathbb{N}$, one can define an {\em{almost analytic extension}} $\widetilde{f}$ of $f$ {\em{of degree}} $P$ by
\begin{equation} \label{eq_almostanalyticextension}
\widetilde{f}(x,y) = \left\{ \sum_{n=0}^P \frac{1}{n !} f^{(n)}(x) (i y)^n  \right\} \sigma(x,y) ~\mbox{.}
\end{equation}
Then, $\widetilde{f} \equiv f$ on $\mathbb{R}$, $\widetilde{f}$ is compactly supported on $\mathbb{C}$, and a straight-forward computation shows that
\begin{align} \label{eq_almostanalytic_1}
\partial_{\overline{z}} \widetilde{f}:= \frac{1}{2} ( \partial_x + i \partial_y ) \widetilde{f} = \frac{1}{2} \left\{ \sum_{n=0}^P \frac{1}{n !} f^{(n)}(x) (i y)^n  \right\} (\sigma_x + i \sigma_y) + \frac{1}{2} \frac{1}{P!} f^{(P+1)}(x) (iy)^P \sigma ~\mbox{.}
\end{align}
In particular, using the properties of $\sigma$, (\ref{eq_almostanalytic_1}) implies that $\widetilde{f}$ is almost analytic in a neighborhood of $\mathbb{R}$, in the sense that
\begin{equation} \label{eq_almostanalytic}
\vert \partial_{\overline{z}} \widetilde{f}(x,y) \vert \leq {\Vert f^{(P+1)} \Vert_\infty} ~{\vert y \vert^P} ~\mbox{, } ~ \vert y \vert \leq \langle x \rangle ~\mbox{.}
\end{equation}

One important application of almost analytic extensions of functions on $\mathbb{R}$ is an explicit representation of functions of operators via the {\em{Helffer-Sj\"ostrand functional calculus}}: For every self-adjoint operator $H$ and $f \in \mathcal{C}_c(\mathbb{R})$, one has the representation
\begin{equation} \label{eq_helffersjostrand}
f(H) = \dfrac{1}{\pi} \int\int_\mathbb{C} \partial_{\overline{z}} \widetilde{f}(x,y) \cdot (H -  z)^{-1} ~\mathrm{d} x \mathrm{d} y ~\mbox{.}
\end{equation}
It can be shown (see e.g. \cite{davies1}, Chapter 2.2) that this representation is well-defined in the sense that $f(H)$ is independent of the bump function $\tau$ and the degree $P \geq 1$ used to define the almost analytic extension $\widetilde{f}$ in (\ref{eq_almostanalyticextension}).

%%%%%%%%%%%%%%%%%%%%%%%%%%%%%%%%%%%%%%%%%%%%%%%%%%%%%%%%%%%%%%%%%%%%%%%%%%%%
%\section{Appendix 2: Combes-Thomas estimates}\label{sec:appendix2-combes-thomas1}
%
% The Combes-Thomas estimate in the trace class has been proven by several authors, for example, \cite{barbaroux1, bouclet-germinet-klein, saxton2014, shen}. In our case, there exist constants $c_1, c_2 > 0$, only depending on the dimension $d$, such that for all $z \in \mathbb{C} \setminus \mathbb{R}$, one has
%\beq\label{eq:combes-thomas-tr2}
%\vert \mathrm{Tr} (P_0 (H_\omega - z)^{-1} P_n )  \vert \leq \|P_0 (H_\omega -  z)^{-1} P_n\|_1 \leq    \dfrac{c_1}{\vert \mathrm{Im} z \vert} \mathrm{e}^{-c_2 \vert \mathrm{Im} z \vert \vert n \vert} ~\mbox{, }
%\eeq
%where the constant $c_2$ also depends on $N = \mathrm{rk} P_0$.
%We also have the bound in the operator norm,
%\beq\label{eq:combes-thomas-op2}
%\vert \mathrm{Tr} (P_0 (H_{\omega; L}^{(0)} - z)^{-1} P_n ) \vert \leq N \| P_0 (H_\omega - z)^{-1} P_n \| \leq \dfrac{c_1 N}{\vert \mathrm{Im} z \vert} \mathrm{e}^{-c_2 \vert \mathrm{Im} z \vert \vert n \vert}.
%\eeq
%
%For continuum operators XXX
%
%%%%%%%%%%%%%%%%%%%%%%%%%%%%%%%%%%%%%%%%%%%%%%%%%%%%%%%%%%%%%%%%%%%%%%%%%%%%%%%%%%%%%%%%%%%%%


\begin{thebibliography} {[10]}
\frenchspacing \baselineskip=12 pt plus 1pt minus 1pt

\bibitem{AizenmanWarzel1}  M.\ Aizenman, S.\ Warzel, \emph{Random operators: Disorder effects on quantum spectra and dynamics}, Graduate Studies in Mathematics, vol.\ {\bf 168}, Providence, RI: American Matheamtical Society, 2015.

\bibitem{barbaroux1} J.\ M.\ Barbaroux, J.\ M.\ Combes, P.\ D.\ Hislop, \emph{Localization near band edges for random Schr\"odinger operators}, Helv.\ Phys.\ Acta {\bf 70} (1997), 16-–43.

\bibitem{bouclet-germinet-klein} J.\ Bouclet, F.\ Germinet, A.\ Klein, \emph{Sub-exponential decay of operator kernels for functions of generalized Schr\"odinger operators}, Proc.\ Amer.\ Math.\ Soc. {\bf 132} (2004), 2703-–2712.


%\bibitem{Dudley_1966} R.\ M.\ Dudley, \textit{Convergence of Baire measures}, Studia Math. \textbf{27} (1966), 251 -- 268; see also R.\ M.\  %Dudley, \emph{Correction to: ``Convergence of Baire measures''} (Studia Math. \textbf{27} (1966), 251–268). Studia Math. \textbf{51} (1974), 275.




%\bibitem{hislop_marx_1} P.\ D.\ Hislop and C.\ A.\ Marx, {\em{Dependence of the density of states on the probability distribution for discrete random Schr\"odinger operators}}, to appear in IMRN (2018); arXiv:1804.02444.

%\bibitem{aizenmanWarzel_book} M.\ Aizenman, S.\ Warzel, {\it Random Operators: Disorder Effects on Quantum Spectra and Dynamics}, Graduate Studies in Mathematics \textbf{168}, American Mathematical Society, Providence, RI, 2015.
%
%\bibitem{Abou-ChacraThoulessAnderson_JPhysC_1973}   R. Abou-Chacra, D. J. Thouless, and P. W. Anderson. {\it A self-consistent theory of localization}, Journal of Physics C: Solid State Physics \textbf{6.10} (1973), pp. 1734 -- 1752.
%
%\bibitem{AcostaKlein} V.\ Acosta, A.\ Klein, \emph{Analyticity of the density of states in the Anderson model on the Bethe lattice}, J.\ Statist.\ Phys. {\bf 69} (1992), no.\ 1--2, 277 -- 305.
%
%\bibitem{AvilaEskinViana} A.\ Avila, A.\ Eskin, M.\ Viana, {\em{Continuity of Lyapunov exponents of random matrix products}}, in preparation.
%
%\bibitem{beals1} R.\ Beals, {\em Analysis. An introduction}, Cambridge University Press, Cambridge, 2004. x+261 pp. ISBN: 0-521-84072-4; 0-521-60047-2 26-01 (00-01).
%
%\bibitem{billingsley} P. Billingsley, {\em{ Convergence of Probability measures}}, 2nd edition, Wiley \& Sons, Inc., New York, 1999.
%
%\bibitem{BockerViana_ETDS_2017} C.\ Bocker, M.\ Viana, {\em{Continuity of Lyapunov exponents for random two-dimensional matrices}}, Ergodic Theory Dynam.\ Systems \textbf{37} (2017), no.\ 5, 1413 -- 1442.
%
%\bibitem{Bourgain_Jdanalyse_2012} J. Bourgain, {\em{On the Furstenberg measure and density of states for the Anderson-Bernoulli model at small disorder}}, J.\ d'Analyse Math. \textbf{117(1}), 273 -- 295 (2012).
%
\bibitem{bourgain-klein1} J.\ Bourgain, A.\ Klein, \emph{Bounds on the density of states for Schr\"odinger operators}, Invent.\ Math. {\bf 194}, no.\ 1, 41 -- 72 (2013).
%
%\bibitem{bourgain_schlag_CMP_2000} J. Bourgain and W. Schlag, {\em{Anderson localization for Schr\"odinger operators on $\mathbb{Z}$ with strongly mixing potentials}}, Comm. Math. Phys. 215 (2000), no. 1, 143 -- 175.
%
%\bibitem{bovier-campanino-klein-perez1} A.\ Bovier, M.\ Campanino, A.\ Klein, J.\ F.\ Perez, \emph{Smoothness of the density of states in the Anderson model at high disorder}, Comm.\ Math.\ Phys. {\bf 114} (1988), no.\ 3, 439 -- 461.
%
%\bibitem{BovierKlein88} A.\ Bovier, A.\ Klein, \emph{Weak disorder expansion of the invariant measure for the one-dimensional Anderson model}, J.\ Statist.\ Phys. {\bf 51} (1988), no.\ 3--4, 501 -- 517.
%
%\bibitem{Bucaj_KunzSoulliardprep2016} V.\ Bucaj, {\it On the Kunz-Souillard approach to localization for the discrete one dimensional generalized Anderson model}, (2016). Preprint available as arXiv:1608.01379.
%
%\bibitem{campanino-klein86} M.\ Campanino, A.\ Klein, \emph{A supersymmetric transfer matrix and differentiability of the density of states in the one-dimensional Anderson model}, Comm.\ Math.\ Phys. {\bf 104} (1986), no.\ 2, 227 -- 241.
%
%\bibitem{campanino-klein90} M.\ Campanino, A.\ Klein, \emph{Anomalies in the one-dimensional Anderson model at weak disorder}, Comm.\ Math.\ Phys. {\bf 130} (1990), no.\ 3, 441 -- 456.
%
%\bibitem{carmonalacroix} R.\ Carmona and J.\ Lacroix, {\it Spectral theory of random Schr�dinger operators}, Birkh\"auser Boston, Inc., Boston, MA, 1990.
%
%
\bibitem{craig-simon1} W.\ Craig, B.\ Simon, \emph{Log H\"older continuity of the integrated density of states for stochastic Jacobi matrices}, Comm.\ Math.\ Phys. {\bf 90} (1983), no.\ 2, 207 -- 218.
%
%\bibitem{chk1} J.-M.\ Combes, P.\ D.\ Hislop, F.\ Klopp, \emph{An optimal Wegner estimate and its application to the global continuity of the integrated density of states for random Schr\"dinger operators}, Duke Math.\ J. {\bf 140} (2007), no.\ 3, 469 -- 498.
\bibitem{CyconFroeseKirschSimon_book_1987} H.\ Cycon, R.\ Froese, W.\ Kirsch, and B.\ Simon, \textit{Schr\"odinger Operators with Application to Quantum Mechanics and Global Geometry}, Texts and Monographs in Physics, Springer-Verlag, Berlin, (1987).
%
%\bibitem{damanik1} D.\ Damanik, {\it Lyapunov exponents and spectral analysis of ergodic Schr\"odinger operators: a survey of Kotani theory and its applications}, Spectral theory and mathematical physics: a Festschrift in honor of Barry Simon's 60th birthday, 539 -- 563, Proc.\ Sympos.\ Pure Math., {\bf 76}, Part 2, Amer.\ Math.\ Soc., Providence, RI, 2007.
%
\bibitem{DamanikSimsStolz_JFunAnal_2004} D.\ Damanik, R.\ Sims, G.\ Stolz, {\it Localization for discrete one-dimensional random word models}, J.\ Fun.\ Anal. \textbf{208} (2004) 423 -- 445.
%
\bibitem{davies1} E.\ B.\ Davies, \emph{Spectral theory and differential operators}, Cambridge studies in advanced mathematics vol.\ {\bf 42}, Cambridge University Press, Cambridge, 1995.

%
\bibitem{deBievre_Germinet_2000} S. De Bi\`evre, F. Germinet, \emph{Dynamical localization for the random dimer Schr\"odinger operator}, J.\ Statist.\ Phys. \textbf{98} (2000), no.\ 5--6, 1135--1148.
%
\bibitem{DelyonSouillard_1984} F. Delyon, B. Souillard, \textit{Remark on the Continuity of the Density of States of Ergodic Finite Difference Operators}, Commun.\ Math.\ Phys \textbf{94}, 289 -- 291 (1984).
%
%\bibitem{DuarteKlein_monograph} P. Duarte and S. Klein, \textit{Lyapunov exponents of linear cocycles}, Atlantis Press (2016), Amsterdam.


%
\bibitem{Dudley_1966} R.\ M.\ Dudley, \textit{Convergence of Baire measures}, Studia Math. \textbf{27} (1966), 251 -- 268; see also R.\ M.\  Dudley, \emph{Correction to: ``Convergence of Baire measures''} (Studia Math. \textbf{27} (1966), 251–268). Studia Math. \textbf{51} (1974), 275.

\bibitem{Dudley_1976_book} R. M. Dudley, {\em{Probabilities and metrics}}, Lecture Notes Series, No. 45. Matematisk Institut, Aarhus Universitet, Aarhus, 1976. ii+126 pp.

%\bibitem{Furstenberg_Kiefer_IsrealJMath_1981}  H. Furstenberg and Yu. Kifer, \emph{Random matrix products and measures in projective spaces}, Israel J. Math 10, 12 -- 32 (1983).
%
%\bibitem{HartVirag_CMP_2017} E.\ Hart and B.\ Vir\'ag, \emph{H\"older continuity of the integrated density of states in the one-dimensional Anderson model}, Comm.\ Math.\ Phys. \textbf{355} (2017), no.\ 3, 839--863.
%
%\bibitem{hks1} P.\ D.\ Hislop, F.\ Klopp, J.\ Schenker, {\it Continuity with respect to disorder of the integrated density of states},  Illinois J.\ Math. {\bf 49}, no.\ 3, 893 -- 904 (2005).

\bibitem{GPS} F.\ Gesztesy, A.\ Pushnitski, B.\ Simon,\emph{ On the Koplienko spectral shift function. I. Basics}, Zh.\ Mat.\ Fiz. Anal.\ Geom. \textbf{4} (2008), Number 1, 63-–107.


\bibitem{gamelin} T.\ W.\ Gamelin, \emph{Complex analysis}, New York: Springer, 2001.

\bibitem{hislop_marx_1} P.\ D.\ Hislop and C.\ A.\ Marx, {\em{Dependence of the density of states on the probability distribution for discrete random Schr\"odinger operators}}, International Mathematics Research Notices, 63 pp (2018). Appeared as online first in July 2018 as rny156, https://doi.org/10.1093/imrn/rny156.

\bibitem{kachkovskiy1} I.\ Kachkovskiy, private communication, April 2019.

%
%\bibitem{hstoiciu} P.\ D.\ Hislop and M. Stoiciu, {\it{Eigenvalue statistics for the one-dimensional lattice Anderson model in the weak disorder limit}}, preprint (2017).
%
%\bibitem{KleinSpeis_JFunctAnal_1990} A.\ Klein and A.\ Speis, {\em{Regularity of the invariant measure and the density of states in the one-dimensional Anderson model}}, J.\ Funct.\ Anal. \textbf{88}(1990), 211--227.
%
%\bibitem{KirschMetzger_BarryFestschrift_2007} W.\ Kirsch, B.\ Metzger, {\em{The integrated density of states for random Schr\"odinger operators}}, Spectral theory and mathematical physics: a Festschrift in honor of Barry Simon's 60th birthday, 649--696, Proc.\ Sympos.\ Pure Math., \textbf{76}, Part 2, Amer.\ Math.\ Soc., Providence, RI, 2007.
%
%\bibitem{SchulzBaldes_Jitomirskaya_Stolz_CMP_2003} S.\ Jitomirskaya, H.\ Schulz-Baldes, G.\ Stolz, \emph{Delocalization in
%random polymer models}, Comm.\ Math.\ Phys. \textbf{233}(2003), no.\ 1, 27 -- 48.
%
%\bibitem{Speis_CMP_1992} A.\ Speis, {\em{Weak disorder expansions for the Anderson model on a one-dimensional strip at the center of the band}}, Comm.\ Math.\ Phys. \textbf{149} (1992), no.\ 3, 549 -- 571.
%
%\bibitem{Speis_JStatPhys_1991} A.\ Speis, {\em{Instability of the anomalies in the one-dimensional Anderson model at weak disorder}}, J. Statist. Phys. \textbf{63} (1991), no.\ 3 -- 4, 541 -- 565.
%
%\bibitem{KappusWegner_1981} M. Kappus, F. Wegner, \emph{Anomaly in the band centre of the onedimensional Anderson model}, Z. Phys. \textbf{B45} (1981), 15 -- 21.
%
%\bibitem{KleinSadel} A.\ Klein, C.\ Sadel, \emph{Absolutely continuous spectrum for random Schr\"odinger operators on the Bethe strip},  Math.\ Nachr.\ {\bf 285} (2012), no.\ 1, 5 -- 26.
%
%\bibitem{marx1} C.\ Marx, {\em Continuity of spectral averaging}, Proc.\ Amer.\ Math.\ Soc. {\bf 139}, no.\ 1, 283 -- 291(2011).
%
%\bibitem{pastur1} L.\ Pastur, \emph{Spectral properties of disordered systems in one-body approximation}, Comm.\ Math.\ Phys. {\bf 75} (1980), no.\ 2, 179 -- 196.
%
%\bibitem{PasturFigotin_book} L. Pastur and A. Figotin, \emph{Spectra of random and almost-periodic operators}, Grundlehren der Mathematischen Wissenschaften [Fundamental Principles of Mathematical Sciences], \textbf{297}. Springer-Verlag, Berlin, 1992. viii+587 pp. ISBN: 3-540-50622-5
%
%\bibitem{AleksandrovPeller_review_2016} A. B. Aleksandrov, V. V. Peller, \emph{Operator Lipschitz functions}. (Russian. Russian summary) Uspekhi Mat. Nauk \textbf{71} (2016), no. 4(430), 3--106; translation in Russian Math. Surveys 71 (2016), no. 4, 605--702.
%
%\bibitem{rojas-molina-veselic1} C.\ Rojas-Molina, I.\ Veseli\'c, \emph{Scale-free unique continuation estimates and applications to random Schr\"odinger operators}, Comm.\ Math.\ Phys. {\bf 320} (2013), no.\ 1, 245 -- 274.
%
%\bibitem{schenker1} J.\ Schenker,  {\it H\"older equicontinuity of the integrated density of states at weak disorder}, Lett.\ Math.\ Phys. {\bf 70}, no.\ 3, 195 -- 209 (2004).
%
%\bibitem{Schulz-Baldes_GAFA_2004} H. Schulz-Baldes, \emph{Perturbation theory for Lyapunov exponents of an Anderson model on a strip}, Geom. Funct. Anal. \textbf{14} (2004), no. 5, 1089 -- 1117.
%
%\bibitem{Schulz-Baldes_ OperThyAdvAppl_2007} H.\ Schulz-Baldes, \emph{Lyapunov exponents at anomalies of SL(2,$\mathbb{R}$)-actions}, Operator theory, analysis and mathematical physics, 159--172, Oper.\ Theory Adv.\ Appl., \textbf{174}, Birkh\"auser, Basel, 2007.
%
%\bibitem{sikkema} P.\ C.\ Sikkema, {\em Der Wert einiger Konstanten in der Theorie der Approximation mit Bernstein-Polynomen}, Numer.\ Math. {\bf 3}, 107 -- 116 (1961).
%
%\bibitem{simon1} B.\ Simon, {\em Spectral averaging and the Krein spectral shift function}, Proc.\ Amer.\ Math.\ Soc. {\bf 126}, no.\ 5, 1409--1413 (1998).


\bibitem{saxton2014} A.\ Saxton, \emph{Decay estimates on trace norms of localized functions of Schr\"odinger operators} PhD dissertation (2014), available online at $ https://uknowledge.uky.edu/math_ etds/19/.$

\bibitem{SchulzBaldes_Jitomirskaya_Stolz_CMP_2003} S.\ Jitomirskaya, H.\ Schulz-Baldes, G.\ Stolz, \emph{Delocalization in random polymer models}, Comm.\ Math.\ Phys. \textbf{233}(2003), no.\ 1, 27 -- 48.

\bibitem{shamis1} M.\ Shamis, {\em On the continuity of the integrated density of states in the disorder}, arXiv:1903.12222.


\bibitem{shen} Zh.\  Shen, \emph{An improved Combes-Thomas estimate of magnetic Schr\"odinger operators}, Ark.\ Mat. {\bf 52} (2014), no.\ 2, 383–-414.

\bibitem{simon_traceclassideals_book} B.\ Simon, {\em{Trace Ideals and their Applications}}, Second edition, Mathematical Surveys and Monographs {\bf 120}. American Mathematical Society, Providence, RI, 2005.

%
%\bibitem{simontaylor} B. Simon and M. Taylor, {\em{Harmonic Analysis on $SL(2, \mathbb{R})$ and smoothness of the density of states in the one-dimensional Anderson model}}, Commun.\ Math.\ Phys. {\bf{101}}, 1 -- 19 (1985).
%
%\bibitem{simon_kotani_CMP1988} S. Kotani, B. Simon, {\em{Stochastic Schr\"odinger operators and Jacobi matrices on the strip}}, Comm.\ Math.\ Phys. \textbf{119}(1988), no.\ 3, 403--429.
%
%\bibitem{Thouless_PRL_1977} D. J. Thouless, \emph{Maximum metallic resistance in thin wires}, Phys.\ Rev.\ Lett. \textbf{39} (1977), 1167 -- 1169.

\bibitem{Teschl_QMbook} G. Teschl, {\em{Mathematical methods in quantum mechanics. With applications to Schr\"odinger operators.}}, Second edition, Graduate Studies in Mathematics 157, American Mathematical Society, Providence, RI, 2014.

%
%\bibitem{Veselic_review_2004} I. Veseli\'c, {\em{Integrated density of states and Wegner estimates for random Schr\"odinger operators}}, Spectral theory of Schr\"odinger operators, 97 -- 183, Contemp. Math., \textbf{340}, Aportaciones Mat., Amer. Math. Soc., Providence, RI, 2004.
%
%\bibitem{warzel_ICMP_2013}  S. Warzel, {\it Surprises in the phase diagram of the Anderson model on the Bethe lattice}, XVIIth International Congress on Mathematical Physics (2013), pp. 239 -- 253.

\end{thebibliography}
\end{document}